\documentclass[a4paper,final,UKenglish, numberwithinsect, cleveref, autoref, thm-restate]{lipics-v2021}
\pdfoutput=1

\nolinenumbers

\EventEditors{Serge Haddad and Daniele Varacca}
\EventNoEds{2}
\EventLongTitle{32nd International Conference on Concurrency Theory (CONCUR 2021)}
\EventShortTitle{CONCUR 2021}
\EventAcronym{CONCUR}
\EventYear{2021}
\EventDate{August 23--27, 2021}
\EventLocation{Virtual Conference}
\EventLogo{}
\SeriesVolume{203}
\ArticleNo{32}

\usepackage[utf8]{inputenc}

\usepackage{microtype}
\usepackage{amsmath}
\usepackage{amssymb}
\usepackage{stmaryrd}
\usepackage{xspace}
\usepackage{mathtools}
\usepackage{booktabs}
\usepackage{pifont}
\usepackage{nicefrac}
\usepackage{array} %
\usepackage{tikz}
\usepackage[noadjust]{cite}

\usepackage[marginclue,footnote,final]{fixme}
\FXRegisterAuthor{tw}{atw}{TW}%
\FXRegisterAuthor{sm}{asm}{SM}%
\FXRegisterAuthor{ls}{als}{LS}%

\usetikzlibrary{calc,cd,fit,decorations,arrows,arrows.meta}

\usepackage{twshowkeys}

\newcommand{\resetCurThmBraces}{%
\gdef\curThmBraceOpen{(}%
\gdef\curThmBraceClose{)}}
\resetCurThmBraces
\newcommand{\removeThmBraces}{%
\gdef\curThmBraceOpen{}%
\gdef\curThmBraceClose{}}
\resetCurThmBraces

\newenvironment{notheorembrackets}{\removeThmBraces}{\resetCurThmBraces}

\usepackage{etoolbox}
\patchcmd{\thmhead}{(#3)}{\curThmBraceOpen #3\curThmBraceClose }{}{}

\usepackage{hyperref}
\hypersetup{hidelinks,final,bookmarks}
\setcounter{tocdepth}{3}

\addto\extrasUKenglish{%
}

\newcommand{\takeout}[1]{\empty}

\makeatletter
\@ifpackageloaded{stix}{%
}{%
  \DeclareFontEncoding{LS2}{}{\noaccents@}
  \DeclareFontSubstitution{LS2}{stix}{m}{n}
  \DeclareSymbolFont{stix@largesymbols}{LS2}{stixex}{m}{n}
  \SetSymbolFont{stix@largesymbols}{bold}{LS2}{stixex}{b}{n}
  \DeclareMathDelimiter{\lBrace}{\mathopen} {stix@largesymbols}{"E8}%
                                            {stix@largesymbols}{"0E}
  \DeclareMathDelimiter{\rBrace}{\mathclose}{stix@largesymbols}{"E9}%
                                            {stix@largesymbols}{"0F}
}
\makeatother
\newcommand{\bag}[2][]{\ensuremath{{#1\lBrace #2#1\rBrace}}}

\newcommand{\Set}{\ensuremath{\mathsf{Set}}\xspace}
\newcommand{\pred}{\ensuremath{\mathsf{pred}}\xspace}
\newcommand{\id}{\ensuremath{\mathsf{id}}\xspace}
\newcommand{\pr}{\ensuremath{\mathsf{pr}}}
\newcommand{\inl}{\ensuremath{\mathsf{in}_{1}}}
\newcommand{\inr}{\ensuremath{\mathsf{in}_{2}}}
\newcommand{\inj}{\ensuremath{\mathsf{in}}}
\newcommand{\fpair}[1]{\ensuremath{\langle #1 \rangle}}
\newcommand{\set}[2][]{\ensuremath{{#1\{#2#1\}}}}
\newcommand{\BoolComb}{\ensuremath{{\mathcal{B}\!\ell}}\xspace}
\newcommand{\BoolMonoid}{\ensuremath{{\mathbb{B}}}\xspace}

\newcommand{\grad}[1]{\monmod{#1}}
\newcommand{\gradI}{\monmod{\raisebox{-1pt}{\ensuremath{\scriptstyle I}}}}
\newcommand{\monmod}[1]{\ensuremath{{\langle{\raisebox{1pt}{\ensuremath{\scriptstyle\mathord{=}#1}}}\rangle}}}

\newcommand{\hearts}{\heartsuit}

\newcommand{\Pow}{\ensuremath{\mathcal{P}}}
\newcommand{\CO}{\ensuremath{\mathcal{O}}}
\newcommand{\Powf}{\ensuremath{{\mathcal{P}_\textsf{f}}}}
\newcommand{\Dist}{\ensuremath{{\mathcal{D}}}}
\newcommand{\Bag}{\ensuremath{\mathcal{B}}}
\newcommand{\unzip}{\ensuremath{\mathsf{unzip}}}

\newcommand{\textqt}[1]{\text{`#1'}}
\newcommand{\N}{\ensuremath{\mathbb{N}}}
\newcommand{\R}{\ensuremath{\mathbb{R}}}
\newcommand{\Z}{\ensuremath{\mathbb{Z}}}
\newcommand{\monoto}{\rightarrowtail}
\newcommand{\monodown}{%
\begin{tikzpicture}[baseline=(B)]
  \node[rotate=-90,inner xsep=0pt,inner ysep=2pt]
     (M) {\ensuremath{\rightarrowtail}};
  \coordinate (B) at ([yshift=3pt]M.east);
\end{tikzpicture}}

\newcommand{\Formulae}{\ensuremath{\mathcal{M}}}

\newcommand{\etal}[1][.]{\text{et~al.}}
\newcommand{\arity}[3][]{\ensuremath{\mathord{\raisebox{1pt}{\ensuremath{#1#2}}\mkern-1.5mu/\mkern-1.5mu{\raisebox{-1pt}{\ensuremath{#1#3}}}}}}
\newcommand{\semantics}[2][]{\ensuremath{#1\llbracket #2 #1\rrbracket}}

\newcommand{\fmod}[1]{\ensuremath{{\ulcorner #1\urcorner}}}
\newcommand{\itemref}[2]{\autoref{#1}.{\ref{#2}}}
\newcommand{\notni}{\ensuremath{\not\mkern1mu\ni}}

\newcommand{\refAllEdgeCases}{%
  \caseref{edgeS}--\caseref{edgeBS}\xspace%
}

\newcommand{\scissors}{\ding{34}} %

\tikzset{
  coalgebra/.style={
    block line/.style={
      draw=black!50,
      line width=1.2pt,
    },
    block/.style={
      block line,
      rounded corners=3pt,
      inner sep=1pt,
      minimum height=6mm,
      minimum width=6mm,
    },
    scissors line/.style={
      draw=black!50,
      text=black!50,
      font=\footnotesize,
      line width=0.8pt,
      shorten <= -4pt,
      shorten >= -4pt,
      dotted,
    },
    description/.style={
      fill=white,
      inner sep=1pt,
    },
    state/.style={
      text depth=0pt,
      outer sep=0pt,
      inner sep=4pt,
    },
    transition/.style={
      -{latex},
      line width=0.8pt,
      draw=black,
      preaction = {draw,-,draw=white,line width=4.6pt,line cap=round},
    },
    path with edges/.style={
      every edge/.append style={transition}
    },
  },
}

\newcommand{\descto}[3][]{\arrow[phantom]{#2}[#1]{\text{\footnotesize{}\begin{tabular}{c}#3\end{tabular}}}}

\usetikzlibrary{decorations.markings}
\usetikzlibrary{decorations.pathreplacing}

\newcolumntype{L}{>{$}l<{$}} %

\theoremstyle{plain}
\newtheorem{oprob}[theorem]{Open Problem}
\theoremstyle{definition}
\newtheorem{notation}[theorem]{Notation}
\newtheorem{defn}[theorem]{Definition}
\newtheorem{expl}[theorem]{Example}
\newtheorem{rem}[theorem]{Remark}
\newtheorem*{rem*}{Remark}
\newtheorem{algorithm}[theorem]{Algorithm}

\newcommand{\appendixsect}[2]{%
  \section*{Details for \autoref{#1} (#2)}
  \addcontentsline{toc}{section}{Details for \autoref{#1} (#2)}%
}
\newenvironment{proofappendix}[2][Proof of]{%
\subsubsection*{#1~\autoref{#2}.}%
\addcontentsline{toc}{subsection}{#1~\autoref{#2}}%
}{}

\hyphenation{be-ha-viour-ally mon-oid-val-ued}

\newcommand{\xra}[1]{\xrightarrow{~#1~}}

\title{Explaining Behavioural Inequivalence Generically \texorpdfstring{\\}{}%
  in Quasilinear Time}
\titlerunning{Explaining Behavioural Inequivalence Generically in Quasilinear Time}
\author{Thorsten Wißmann}%
{Radboud University, Nijmegen, The Netherlands}
{uni@thorsten-wissmann.de}%
{https://orcid.org/0000-0001-8993-6486}{%
  Work forms part of the NWO TOP project 612.001.852 and the DFG-funded project COAX (MI 717/5-2)}

\author{Stefan Milius}{Friedrich-Alexander-Universit\"{a}t
  Erlangen-N\"{u}rnberg,
  Germany}{stefan.milius@fau.de}{https://orcid.org/0000-0002-2021-1644}
{Work forms part of the DFG-funded project CoMoC (MI~717/7-1)}

\author{Lutz Schröder}{Friedrich-Alexander-Universit\"{a}t
  Erlangen-N\"{u}rnberg, Germany}{lutz.schroeder@fau.de}{http://orcid.org/0000-0002-3146-5906}{Work forms part of the DFG-funded project CoMoC (SCHR~1118/15-1)}
\authorrunning{T.~Wißmann, S.~Milius, and L.~Schröder}
\Copyright{T.~Wißmann, S.~Milius, and L.~Schröder}
\relatedversion{}
\relatedversiondetails{Full Version with Appendix}{https://arxiv.org/abs/2105.00669}

\keywords{bisimulation, partition refinement, modal logic, distinguishing formulae, coalgebra}

\begin{CCSXML}
<ccs2012>
<concept>
<concept_id>10003752.10003790</concept_id>
<concept_desc>Theory of computation~Logic</concept_desc>
<concept_significance>500</concept_significance>
</concept>
</ccs2012>
\end{CCSXML}

\ccsdesc[500]{Theory of computation~Logic}

\begin{CCSXML}
<ccs2012>
<concept>
<concept_id>10003752.10010124.10010138.10010143</concept_id>
<concept_desc>Theory of computation~Program analysis</concept_desc>
<concept_significance>500</concept_significance>
</concept>
</ccs2012>
\end{CCSXML}

\ccsdesc[500]{Theory of computation~Program analysis}

\begin{document}
\maketitle              %
\begin{abstract}
  We provide a generic algorithm for constructing formulae that
  distinguish behaviourally inequivalent states in systems of various
  transition types such as nondeterministic, probabilistic or
  weighted; genericity over the transition type is achieved by working
  with coalgebras for a set functor in the paradigm of universal
  coalgebra. For every behavioural equivalence class in a given
  system, we construct a formula which holds precisely at the states
  in that class. The algorithm instantiates to deterministic finite
  automata, transition systems, labelled Markov chains, and systems of
  many other types. The ambient logic is a modal logic featuring
  modalities that are generically extracted from the functor; these
  modalities can be systematically translated into custom sets of
  modalities in a postprocessing step. The new algorithm builds on an
  existing coalgebraic partition refinement algorithm. It runs in time
  $\mathcal{O}((m+n) \log n)$ on systems with~$n$ states and~$m$ transitions,
  and the same asymptotic bound applies to the dag size of the
  formulae it constructs.  This improves the bounds on run time and
  formula size compared to previous algorithms even for previously
  known specific instances, viz.~transition systems and Markov chains;
  in particular, the best previous bound for transition systems was
  $\mathcal{O}(m n)$.

\end{abstract}

\section{Introduction}
For finite transition systems, the Hennessy-Milner theorem guarantees
that two states are bisimilar if and only if they satisfy the same
modal formulae. This implies that whenever two states are not
bisimilar, then one can find a modal formula that holds at one of the
states but not at the other.  Such a formula explains the difference
of the two states' behaviour and is thus usually called a
\emph{distinguishing formula}~\cite{Cleaveland91}.  For example, in
the transition system in \autoref{fig:exampleTS}, the formula
$\Box\Diamond\top$ distinguishes the states $x$ and $y$ because $x$
satisfies $\Box\Diamond\top$ whereas~$y$ does not. Given two states in
a finite transition system with $n$ states and $m$ transitions, the
algorithm by Cleaveland~\cite{Cleaveland91} computes a distinguishing
formula in time $\CO(m n)$. The algorithm builds on the
Kanellakis-Smolka partition refinement
algorithm~\cite{KanellakisSmolka83,KanellakisS90}, which computes the
bisimilarity relation on a transition system within the same time
bound.

\begin{figure}[t]%
  \begin{minipage}{.45\textwidth}
    \hspace{1mm}
  \begin{tikzpicture}[coalgebra,x=1.5cm,baseline=(x.base)]
    \begin{scope}[every node/.append style={
        state,
        label distance=-.5mm,
        outer sep=3pt,
        inner sep=0pt,
        text depth=0pt,
        font=\normalsize,%
      }
      ]
      \node[label=left:$x$] (x)  at (0, 0) {$\bullet$};
      \node (x1)  at (1, 0) {$\bullet$};
      \node[label=right:$y$] (y)  at (3, 0) {$\bullet$};
      \node[] (z)  at (2, 0) {$\bullet$};
    \end{scope}
    \path[path with edges,overlay]
      (x) edge (x1)
      (x) edge[out=65,in=115,looseness=6.5,overlay] (x)
      (x1) edge[out=65,in=115,looseness=6.5,overlay] (x1)
      (x1) edge (z)
      (y) edge (z)
      (y) edge[out=65,in=115,looseness=6.5,overlay] (y)
      ;
  \end{tikzpicture}
    \caption{Example of a transition system}
    \label{fig:exampleTS}
  \end{minipage}%
  \hfill
  \begin{minipage}{.45\textwidth}
  \begin{tikzpicture}[coalgebra,x=1.5cm,baseline=(x.base)]
    \begin{scope}[every node/.append style={
        state,
        label distance=-.5mm,
        outer sep=3pt,
        inner sep=0pt,
        text depth=0pt,
        font=\normalsize,%
      }
      ]
      \node[label=right:$y$] (y)  at (1, 0) {$\bullet$};
      \node[label=left:$x$] (x)  at (-2, 0) {$\bullet$};
      \node[] (z1)  at (0, 0) {$\bullet$};
      \node[] (z2)  at (-1, 0) {$\bullet$};
    \end{scope}
    \path[path with edges,overlay,every node/.append style={description}]
      (y) edge node {1} (z1)
      (x) edge[bend right=10] node {0.5} (z2)
      (z2) edge[bend right=10] node {1} (z1)
      (x) edge[bend left=20] node {0.5} (z1)
      ;
  \end{tikzpicture}
    \caption{Example of a Markov chain}
    \label{fig:exampleMarkov}
  \end{minipage}%
\end{figure}

Similar logical characterizations of bisimulation exist for other
system types. For instance, Desharnais
\etal~\cite{DesharnaisEA98,desharnaisEA02} characterize probabilistic
bisimulation on (labelled) Markov chains, in the sense of Larsen and
Skou~\cite{LarsenS91} (for each label, every state has either no
successors or a probability distribution on successors).  In their
logic, a formula $\Diamond_{\ge p}\phi$ holds at states that have a
transition probability of at least $p$ to states satisfying
$\phi$. For example, the state $x$ %
in \autoref{fig:exampleMarkov}
satisfies~$\Diamond_{\ge 0.5}\Diamond_{\ge 1}\top$ but $y$ does
not. Desharnais \etal~provide an algorithm that computes
distinguishing formulae for labelled Markov chains in run time
(roughly) $\CO(n^4)$.

\takeout{} %

In the present work, we construct such counterexamples
generically for a variety of system types. We achieve genericity over
the system type by modelling state-based systems as coalgebras for a
set functor in the framework of universal
coalgebra~\cite{Rutten00}. Examples of coalgebras for a set functor include
transition systems, deterministic automata, or weighted systems
(e.g.~Markov chains). Universal coalgebra provides a generic notion of
behavioural equivalence that instantiates to standard notions for
concrete system types, e.g.~bisimilarity (transtion systems), language
equivalence (deterministic automata), or probabilistic bisimilarity (Markov
chains).  Moreover, coalgebras come equipped with a generic notion of
modal logic that is parametric in a choice of modalities whose
semantics is constructed so as to guarantee invariance
w.r.t.~behavioural equivalence; under easily checked conditions, such
a \emph{coalgebraic modal logic} in fact characterizes behavioural
equivalence in the same sense as Hennessy-Milner logic characterizes
bisimilarity~\cite{Pattinson04,Schroder08}. Hence, as soon as suitable
modal operators are found, coalgebraic modal formulae serve as
distinguishing formulae.

In a nutshell, the contribution of the present paper is an algorithm
that computes distinguishing formulae for behaviourally inequivalent
states in \emph{quasilinear time}, and in fact \emph{certificates} that uniquely describe
behavioural equivalence classes in a system, in coalgebraic
generality. We build on an existing efficient coalgebraic partition
refinement algorithm~\cite{concurSpecialIssue}, thus achieving run
time $\CO(m\log n)$ on coalgebras with~$n$ states and~$m$ transitions
(in a suitable encoding). The dag size of formulae is
also $\CO(m\log n)$ (for tree size, exponential lower bounds are
known~\cite{FigueiraG10}); even for labelled
transition systems, we thus improve the previous best bound
$\CO(m n)$~\cite{Cleaveland91} for both run time and formula
size.\twnote{} We systematically extract the requisite modalities from
the functor at hand, requiring binary and nullary modalities in the
general case, and then give a systematic method to translate these
generic modal operators into more customary ones (such as the standard
operators of Hennessy-Milner logic).

We subsequently identify a notion of \emph{cancellative} functor that
allows for additional optimization. E.g.~functors modelling weighted
systems are cancellative if and only if the weights come from a
cancellative monoid, such as $(\Z,+)$, or $(\R,+)$ as used in
probabilistic systems.  For cancellative functors, much simpler
distinguishing formulae can be constructed: the
binary modalities can be replaced by unary ones, and only conjunction
is needed in the propositional base. On labelled Markov chains, this
complements the result that a logic with only conjunction and
different unary modalities (mentioned above) suffices for the
construction of distinguishing formulae (but not
certificates)~\cite{desharnaisEA02} (see
also~\cite{Doberkat09}).\twnote{}

\takeout{}%

\subparagraph*{Related Work} Cleaveland's
algorithm~\cite{Cleaveland91} for labelled transition systems is is
based on Kanellakis and Smolka's partition refinement
algorithm~\cite{KanellakisS90}. The coalgebraic partition refinement
algorithm we employ~\cite{concurSpecialIssue} is instead related to the more
efficient Paige-Tarjan algorithm~\cite{PaigeTarjan87}. %
König et al.~\cite{KoenigEA20} extract formulae from winning
strategies in a bisimulation game in coalgebraic generality; their
algorithm runs in $\CO(n^4)$ and does not support negative transition
weights.  Characteristic formulae for behavioural equivalence classes
taken across \emph{all} models require the use of fixpoint
logics~\cite{DorschEA18}.  The mentioned algorithm by Desharnais~et
al.~for distinguishing formulae on labelled Markov
processes~\cite[Fig.~4]{desharnaisEA02} %
is based on Cleaveland's. No complexity analysis is made but the
algorithm has four nested loops, so its run time is roughly
$\CO(n^4)$. Bernardo and Miculan~\cite{BernardoMiculan19} provide a
similar algorithm for a logic with only disjunction. There are further
generalizations along other axes, e.g.~to behavioural
preorders~\cite{CelikkanCleaveland95}. The TwoTowers tool set for the
analysis of stochastic process algebras~\cite{BernardoEA98,Bernardo04}
computes distinguishing formulae for inequivalent processes, using
variants of Cleaveland's algorithm. Some approaches construct
alternative forms of certificates for inequivalence, such as Cranen et
al.'s notion of evidence~\cite{cranen_et_al:LIPIcs:2015:5408} or
methods employed on business process models, based on model
differences and event
structures~\cite{Dijkman08,ArmasCervantesEA13,ArmasCervantesEA14}.

\section{Preliminaries}
\label{preliminaries}
We first recall some basic notation. We denote by $0=\emptyset$,
$1=\{0\}$, $2 = \{0,1\}$ and $3 = \{0,1,2\}$ the sets representing the
natural numbers $0$, $1$, $2$ and $3$. For every set~$X$, there is a
unique map $!\colon X\to 1$.  We write~$Y^X$ for the set of functions
$X\to Y$, so e.g.~$X^2\cong X\times X$.\twnote{} In particular,~$2^X$ is the set of $2$-valued
\emph{predicates} on~$X$, which is in bijection with the
\emph{powerset}~$\Pow X$ of~$X$, i.e.~the set of all subsets of~$X$;
in this bijection, a subset $A\in\Pow X$ corresponds to its
\emph{characteristic function} $\chi_A\in 2^X$, given by $\chi_A(x)=1$
if $x\in A$, and $\chi(x)=0$ otherwise.  We generally indicate
injective maps by~$\monoto$. Given maps $f\colon Z\to X$,
$g\colon Z\to Y$, we write~$\fpair{f,g}$ for the map $Z\to X\times Y$
given by $\fpair{f,g}(z) = (f(z), g(z))$.  We denote the disjoint
union of sets~$X$,~$Y$ by $X+Y$, with canonical inclusion maps
$\inl\colon X\monoto X+Y$ and $\inr\colon Y\monoto X+Y$. More
generally, we write $\coprod_{i\in I} X_i$ for the disjoint union of
an $I$-indexed family of sets $(X_i)_{i\in I}$, and
$\inj_i\colon X_i\monoto \coprod_{i\in I} X_i$ for the $i$-th
inclusion map. For a map $f\colon X\to Y$ (not necessarily
surjective), we denote by $\ker(f) \subseteq X\times X$ the
\emph{kernel} of $f$, i.e.~the equivalence relation
\begin{equation}
  \ker(f) := \{ (x,x') \in X\times X \mid f(x) = f(x')\}.
  \label{eqKer}  
\end{equation}

\begin{notation}[Partitions] 
  \label{eqEqClass}
  Given an equivalence relation~$R$ on~$X$, we write $[x]_R$ for the
  equivalence class $\{ x'\in X\mid (x,x')\in R\}$ of $x\in X$. If~$R$
  is the kernel of a map $f$, we simply write $[x]_{f}$ in lieu of
  $[x]_{\ker(f)}$. The intersection $R\cap S$ of equivalence relations
  is again an equivalence relation. The partition corresponding to~$R$
  is denoted by $X/R = \{[x]_R\mid x\in X\}$. Note that
  $[-]_R\colon X\to X/R$ is a surjective map and that
  $R = \ker([-]_R)$.
\end{notation}
\noindent A \emph{signature} is a set $\Sigma$, whose elements are
called \emph{operation symbols}, equipped with a function
$a\colon \Sigma\to \N$ assigning to each operation symbol its
\emph{arity}. We write $\arity{\sigma}{n}\in \Sigma$ for
$\sigma\in \Sigma$ with $a(\sigma) = n$. We will apply the same
terminology and notation to collections of modal operators.

\subsection{Coalgebra} \emph{Universal coalgebra}~\cite{Rutten00}
provides a generic framework for the modelling and analysis of
state-based systems. Its key abstraction is to parametrize notions and
results over the transition type of systems, encapsulated as an
endofunctor on a given base category. %
Instances cover, for example, deterministic automata, labelled (weighted) transition
systems, and Markov chains.
\begin{defn}
  A \emph{set functor} $F\colon \Set\to \Set$ assigns to every set $X$
  a set $FX$ and to every map $f\colon X\to Y$ a map
  $Ff\colon FX\to FY$ such that identity maps and composition are
  preserved: $F\id_X = \id_{FX}$ and $F(g\cdot f) = Fg\cdot Ff$.  An
  \emph{$F$-coalgebra} is a pair $(C,c)$ consisting of a set $C$ (the
  \emph{carrier}) and a map $c\colon C\to FC$ (the
  \emph{structure}). When $F$ is clear from the context, we simply speak of a
  \emph{coalgebra}.
\end{defn}
\noindent In a coalgebra $c\colon C\to FC$, we understand the carrier
set~$C$ as consisting of \emph{states}, and the structure~$c$ as
assigning to each state $x\in C$ a structured collection of successor
states, with the structure of collections determined by~$F$. In this
way, the notion of coalgebra subsumes numerous types of state-based
systems, as illustrated next.

\begin{expl}\label{ex:powerset}
  \begin{enumerate}
  \item The \emph{powerset functor} $\Pow$ sends a set~$X$ to its
    powerset $\Pow X$ and a map $f\colon X\to Y$ to the map
    $\Pow f = f[-]\colon \Pow X\to \Pow Y$ taking direct images. A
    $\Pow$-coalgebra $c\colon C\to \Pow C$ is precisely a transition
    system: It assigns to every state $x\in C$ a set $c(x) \in \Pow C$
    of \emph{successor} states, inducing a transition relation~$\to$
    given by $x\to y$ iff $y\in c(x)$.
    Similarly, coalgebras for the finite powerset functor
    $\Powf$ (with~$\Powf X$ being the set of finite subsets of~$X$)
    are finitely branching transition systems.

  \item Coalgebras for the functor $FX=2\times X^A$, where $A$ is a
    fixed input alphabet, are deterministic automata (without an
    explicit initial state). Indeed, a coalgebra structure
    $c = \langle f, t\rangle\colon C\to 2\times C^A$ consists of a
    finality predicate $f\colon C \to 2$ and a transition map $C
    \times A \to C$ in curried form $t\colon C\to C^A$.

  \item\label{exSignature} Every signature $\Sigma$ defines a
    \emph{signature functor} that maps a set~$X$ to the set
    \[\textstyle
      T_\Sigma X = \coprod_{\arity[\scriptstyle]{\sigma}{n} \in \Sigma} X^n,
    \]
    whose elements we may understand as flat $\Sigma$-terms
    $\sigma(x_1,\ldots,x_n)$ with variables from~$X$. The action of
    $T_\Sigma$ on maps $f\colon X\to Y$ is then given by
    $(T_\Sigma f) (\sigma(x_1,\ldots,x_n)) = \sigma(f(x_1), \ldots,$
    $f(x_n))$. For simplicity, we write $\sigma$ (instead of
    $\inj_\sigma$) for the coproduct injections, and~$\Sigma$ in lieu
    of $T_\Sigma$ for the signature functor.
    States in $\Sigma$-coalgebras describe possibly infinite
    $\Sigma$-trees.

  \item For a commutative monoid $(M,+,0)$, the \emph{monoid-valued functor}
  $M^{(-)}$~\cite{GummS01} is given by
  \begin{equation}
    M^{(X)} := \{ \mu \colon X\to M\mid \text{$\mu(x) = 0$ for all but finitely many
      $x\in X$}\}
    \label{eqFinSupport}
  \end{equation}
  on sets $X$; for a map $f\colon X\to Y$, the map
  \(M^{(f)}\colon M^{(X)}\to M^{(Y)} \) is defined by
  \begin{equation*}
    (M^{(f)})(\mu)(y) = \textstyle\sum_{x\in X, f(x) = y} \mu(x).
  \end{equation*}
  A coalgebra $c\colon C\to M^{(C)}$ is a finitely branching weighted
  transition system, where $c(x)(x')\in M$ is the transition weight
  from $x$ to~$x'$. For the Boolean monoid
  $\BoolMonoid = (2,\vee, 0)$, we recover $\Powf =
  \BoolMonoid^{(-)}$. Coalgebras for $\R^{(-)}$, with~$\R$ understood
  as the additive monoid of the reals, are $\R$-weighted transition
  systems. The functor
  \begin{equation*}\textstyle
    \Dist X = \{\mu\in \R_{\ge 0}^{(X)}\mid \sum_{x\in X}\mu(x) = 1\},
  \end{equation*}
  which assigns to a set $X$ the set of all finite probability
  distributions on $X$ (represented as finitely supported probability
  mass functions), is a subfunctor of~$\R^{(-)}$.

\item Functors can be composed; for instance, given a set~$A$ of
  labels, the composite of $\Pow$ and the functor $A\times (-)$ (whose
  action on sets maps a set~$X$ to the set $A\times X$) is the functor
  $FX=\Pow(A\times X)$, whose coalgebras are $A$-labelled transition
  systems. Coalgebras for $(\Dist(-)+1)^A$ have been termed
  \emph{probabilistic transition systems}~\cite{LarsenS91} or
  \emph{labelled Markov chains}~\cite{desharnaisEA02}, and coalgebras
  for $(\Dist((-)+1)+1)^A$ are \emph{partial labelled Markov
    chains}~\cite{desharnaisEA02}.  Coalgebras for
  $SX = \Powf(A\times \Dist X)$ are variously known as \emph{simple
    Segala systems} or \emph{Markov decision processes}.
  \end{enumerate}
\end{expl}

\takeout{}%

\noindent We have a canonical notion of \emph{behaviour} on $F$-coalgebras:

\smallskip
\noindent
\begin{minipage}[T]{.82\textwidth}
\begin{defn}
  An $F$-coalgebra \emph{morphism} $h\colon (C,c)\to (D,d)$ is a map
  $h\colon C\to D$ such that $d\cdot h = Fh\cdot c$. States $x,y$ in
  an $F$-coalgebra $(C,c)$ are \emph{behaviourally equivalent}
  ($x\sim y$) if there exists a coalgebra morphism $h$ such that
  $h(x) = h(y)$.
\end{defn}
\end{minipage}%
  \hfill
  \begin{minipage}[T]{.18\textwidth}
    \vspace{-5mm}
    \hfill
    \begin{tikzcd}[sep=5mm,baseline=(C.base),outer sep=0pt]
      |[alias=C]|
      C
      \arrow{r}{c}
      \arrow[swap]{d}{h}
      & FC
      \arrow{d}{Fh}
      \\
      D
      \arrow{r}{d}
      & FD
    \end{tikzcd}
  \end{minipage}\medskip

  \noindent Thus, we effectively define the behaviour of a state as
  those of its properties that are preserved by coalgebra
  morphisms. The notion of behavioural equivalence subsumes standard
  branching-time equivalences:
\begin{expl} \label{exCoalg}
  \begin{enumerate}
  \item \label{coalgts} For $F\in\{\Pow,\Powf\}$, behavioural
    equivalence on $F$-coalgebras, i.e.~on transition systems, is
    \emph{bisimilarity} in the usual sense.
    
  \item For deterministic automata as coalgebras for $FX = 2 \times
    X^A$, two states are behaviourally equivalent iff they accept the same
    formal language.

  \item\label{coalgSignature} For a signature functor $\Sigma$, two
    states of a $\Sigma$-coalgebra are behaviourally equivalent iff
    they describe the same $\Sigma$-tree.

  \item For labelled transition systems as coalgebras for
    $FX = \Pow(A \times X)$, coalgebraic behavioural equivalence
    precisely captures Milner's strong
    bisimilarity~\cite{AczelMendler89}.\twnote{}

  \item For weighted and probabilistic systems, coalgebraic
    behavioural equivalence instantiates to weighted and probabilistic
    bisimilarity, respectively~\cite[Cor.~4.7]{RuttenDV99}, \cite[Thm.~4.2]{BartelsEA04}.

  \end{enumerate}
\end{expl}

\begin{rem}
  \label{behEqDifferntCoalg}\label{trnkovahull}
  \begin{enumerate}
  \item The notion of behavioural equivalence extends
    straightforwardly to states in different coalgebras, as one can
    canonically define the disjoint union of coalgebras.
  \item We may assume without loss of generality that a set
    functor~$F$ preserves injective maps~\cite{trnkova71} (see
    also~\cite[8.1.12--17]{AdamekEA21}), that is,~$Ff$ is injective
    whenever~$f$ is.  \smnote{}
  \end{enumerate}
\end{rem}

\subsection{Coalgebraic Logics}
\label{sec:coalgLogic}

\takeout{}  We briefly review basic
concepts of coalgebraic modal
logic~\cite{Pattinson03,Schroder08}. Coalgebraic modal logics are
parametric in a functor~$F$ determining the type of systems underlying
the semantics, and additionally in a choice of modalities interpreted
in terms of \emph{predicate liftings}. For now, we use $F = \Pow$ as a
basic example, deferring further examples to \autoref{domainSpecific}.

\subparagraph*{Syntax} The syntax of coalgebraic modal logic is
parametrized over the choice of signature $\Lambda$ of \emph{modal
  operators} (with assigned arities). Then, \emph{formulae} $\phi$ are
generated by the grammar\\[2mm]
\(
{\hspace{5mm}}
    \phi_{1},\ldots,\phi_{n} ::= \top
    ~\vert~ \neg \phi_1
    ~\vert~ \phi_1 \wedge \phi_2
    ~\vert~ \hearts(\phi_1,\ldots,\phi_n)
    \qquad
    (\arity{\hearts}{n} \in \Lambda).
\)
\begin{expl} \label{exPowModalities} For $F = \Pow$, one often takes
  $\Lambda=\{\arity{\Diamond}{1}\}$; the induced syntax is that of
  (single-action) Hennessy-Milner logic. As usual, we write $\Box \phi :\equiv
  \neg \Diamond \neg \phi$.
\end{expl}

\subparagraph*{Semantics} We interpret formulae as sets of states in
$F$-coalgebras. This interpretation arises by assigning
to each modal operator $\arity{\hearts}{n}\in \Lambda$ an $n$-ary
\emph{predicate
  lifting}~$\semantics{\hearts}$~\cite{Pattinson03,Schroder08}, i.e.~a
family of maps $\semantics{\hearts}_X\colon (2^{X})^n \to 2^{FX}$, one
for every set~$X$, such that the \emph{naturality} condition
\begin{equation}\label{eq:naturality}
  Ff^{-1}\big[\semantics{\hearts}_Y(P_1,\ldots,P_n)\big]
  = \semantics{\hearts}_X(f^{-1}[P_1],\ldots,f^{-1}[P_n])
\end{equation}
for all $f\colon X\to Y$ and all $P_1,\ldots,P_n\in 2^X$ (for
categorically-minded readers,~$\semantics{\hearts}$ is a natural
transformation $(2^{(-)})^n\to 2^{F^{\mathsf{op}}}$); the idea being
to lift given predicates on states to predicates on structured
collections of states. Given these data, the \emph{extension} of a
formula $\phi$ in an~$F$-coalgebra $(C,c)$ is a predicate
$\semantics{\phi}_{(C,c)}$, or just~$\semantics{\phi}$, on~$C$,
recursively defined by
  \begin{align*}
    & \semantics{\top}_{(C,c)} = C
      \qquad
    \semantics{\phi\wedge\psi}_{(C,c)} = \semantics{\phi}_{(C,c)} \cap \semantics{\psi}_{(C,c)} 
      \qquad
    \semantics{\neg\phi}_{(C,c)} = C\setminus\semantics{\phi}_{(C,c)}
    \\
    &\semantics{\hearts(\phi_1,\ldots,\phi_n)}_{(C,c)}
    = c^{-1}\big[\semantics{\hearts}_C\big(\semantics{\phi_1}_{(C,c)}, \ldots, \semantics{\phi_n}_{(C,c)}\big)\big]
    \qquad\text{($\arity{\hearts}{n}\in \Lambda$)}
  \end{align*}
  (where we apply set operations to predicates with the evident
  meaning). We say that a state~$x\in C$ \emph{satisfies}~$\phi$ if
  $\semantics{\phi}(x)=1$. Notice how the clause for modalities says
  that $x$ satisfies $\hearts(\phi_1,\ldots,\phi_n)$ iff $c(x)$
  satisfies the predicate obtained by lifting the predicates
  $\semantics{\phi_1}, \ldots, \semantics{\phi_n}$ on~$C$ to a
  predicate on $FC$ according to~$\semantics{\hearts}$.

  \begin{expl} \label{exPowPredLift} Over $F=\Pow$, we
    interpret~$\Diamond$ by the predicate lifting
    \[
    \semantics{\Diamond}_X\colon 2^X \to 2^{\Pow X},\quad
    P\mapsto~ \{ K \subseteq X \mid \exists x \in K\colon x\in P\}
                = \{K\subseteq X \mid K\cap P\neq \emptyset\},
                \label{eqDiamondLifting}
    \]
  The arising notion of satisfaction over $\Pow$-coalgebras $(C,c)$ is
  precisely the standard one:
    $x\in \semantics{\Diamond \phi}_{(C,c)}$
    iff $y\in \semantics{\phi}_{(C,c)}$ for some transition $
      x\to y$.
\end{expl}
The naturality condition~\eqref{eq:naturality} of predicate liftings
guarantees invariance of the logic under coalgebra morphisms, and
hence under behavioural equivalence:
\begin{proposition}[Adequacy~\cite{Pattinson03,Schroder08}]%
  \label{bisimInvariant}
  Behaviourally equivalent states satisfy the same formulae: $x\sim y$
  implies that for all formulae~$\phi$, we have $x\in \semantics{\phi}$
  iff $y\in \semantics{\phi}$.  \smnote{}
\end{proposition}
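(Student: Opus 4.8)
The plan is to prove the statement by structural induction on the formula~$\phi$, establishing the stronger claim that every coalgebra morphism $h\colon (C,c)\to (D,d)$ satisfies $\semantics{\phi}_{(C,c)} = h^{-1}\big[\semantics{\phi}_{(D,d)}\big]$ for all formulae~$\phi$. This invariance under morphisms immediately yields the proposition: if $x\sim y$, there is a morphism $h$ with $h(x)=h(y)$, and then $x\in\semantics{\phi}_{(C,c)}$ iff $h(x)\in\semantics{\phi}_{(D,d)}$ iff $h(y)\in\semantics{\phi}_{(D,d)}$ iff $y\in\semantics{\phi}_{(C,c)}$, using the stronger claim twice.

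First I would handle the base case $\phi=\top$, where both sides equal~$C$ trivially since $h^{-1}[D]=C$. For the propositional cases, I would use that preimage $h^{-1}[-]$ commutes with the Boolean operations: $h^{-1}[D\setminus S] = C\setminus h^{-1}[S]$ gives the negation case, and $h^{-1}[S\cap T]=h^{-1}[S]\cap h^{-1}[T]$ gives the conjunction case, each combined directly with the induction hypothesis for the immediate subformulae. These steps are routine set-theoretic manipulations.

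The interesting case is the modal one, $\phi = \hearts(\phi_1,\ldots,\phi_n)$ for $\arity{\hearts}{n}\in\Lambda$. Here I would compute, abbreviating $P_i := \semantics{\phi_i}_{(D,d)}$ and using the induction hypothesis $\semantics{\phi_i}_{(C,c)} = h^{-1}[P_i]$,
\begin{align*}
  \semantics{\hearts(\phi_1,\ldots,\phi_n)}_{(C,c)}
  &= c^{-1}\big[\semantics{\hearts}_C(h^{-1}[P_1],\ldots,h^{-1}[P_n])\big]
  \\
  &= c^{-1}\big[Fh^{-1}\big[\semantics{\hearts}_D(P_1,\ldots,P_n)\big]\big]
  \\
  &= (Fh\cdot c)^{-1}\big[\semantics{\hearts}_D(P_1,\ldots,P_n)\big]
  \\
  &= (d\cdot h)^{-1}\big[\semantics{\hearts}_D(P_1,\ldots,P_n)\big]
  = h^{-1}\big[\semantics{\hearts(\phi_1,\ldots,\phi_n)}_{(D,d)}\big].
\end{align*}
The crucial second equality is exactly the naturality condition~\eqref{eq:naturality} applied to the morphism~$h\colon C\to D$ with the predicates $P_1,\ldots,P_n$; the fourth equality is the coalgebra morphism law $Fh\cdot c = d\cdot h$, and the remaining steps are functoriality of preimage, $(\beta\cdot\alpha)^{-1} = \alpha^{-1}\cdot\beta^{-1}$.

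The main obstacle, such as it is, lies entirely in the modal step, and specifically in correctly orienting the naturality equation~\eqref{eq:naturality}: one must read it as rewriting $\semantics{\hearts}_C(h^{-1}[P_1],\ldots,h^{-1}[P_n])$ into $Fh^{-1}\big[\semantics{\hearts}_D(P_1,\ldots,P_n)\big]$, i.e.~pulling the predicate lifting out through the functor's action on~$h$. Once this single rewrite is aligned with the morphism law, the computation closes formally. There are no genuine difficulties in the other cases, so the proof is essentially a bookkeeping exercise whose only content is the interplay between naturality of predicate liftings and the defining square of a coalgebra morphism.
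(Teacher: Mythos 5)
Your proof is correct and is precisely the standard adequacy argument from the cited literature (Pattinson, Schröder): the paper itself does not reprove this proposition but imports it, and the intended proof is exactly your structural induction establishing $\semantics{\phi}_{(C,c)} = h^{-1}[\semantics{\phi}_{(D,d)}]$ for coalgebra morphisms~$h$, with naturality~\eqref{eq:naturality} and the square $Fh\cdot c = d\cdot h$ doing the work in the modal case. Nothing is missing.
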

\noindent In our running example $F=\Pow$, this instantiates to the
well-known fact that modal formulae are bisimulation-invariant, that
is, bisimilar states in transition systems satisfy the same formulae
of Hennessy-Milner logic.

\section{Constructing Distinguishing Formulae}
\label{sec:main}

A proof method certifying behavioural equivalence of states $x,y$ in a
coalgebra is immediate by definition: One simply needs to exhibit a
coalgebra morphism $h$ such that $h(x) = h(y)$. In fact, for many
system types, it suffices to relate~$x$ and~$y$ by a coalgebraic
\emph{bisimulation} in a suitable sense
(e.g.~\cite{AczelMendler89,Rutten00,GorinSchroeder13,MartiVenema15}), generalizing
the Park-Milner bisimulation principle~\cite{Milner89,Park81}. It is
less obvious how to certify behavioural \emph{inequivalence}
$x\not\sim y$, showing that such a morphism~$h$ does \emph{not}
exist. By \autoref{bisimInvariant}, one option is to exhibit a
(coalgebraic) modal formula~$\phi$ that is satisfied by~$x$ but not
by~$y$. In the case of (image-finite) transition systems, such a
formula is guaranteed to exist by the Hennessy-Milner theorem, which
moreover is known to generalize to
coalgebras~\cite{Pattinson04,Schroder08}. More generally, we consider
separation of \emph{sets} of states by formulae, following
Cleaveland~\cite[Def.~2.4]{Cleaveland91}:
\begin{defn}\label{defDistinguish}
  Let $(C,c)$ be an $F$-coalgebra. A formula $\phi$
  \emph{distinguishes} a set $X\subseteq C$ from a set $Y\subseteq C$
  if $X \subseteq \semantics{\phi}$ and
  $Y \cap \semantics{\phi} = \emptyset$. In case $X = \{x\}$ and
  $Y = \{y\}$, we just say that \emph{$\phi$ distinguishes $x$ from
    $y$}. We say that $\phi$ is a \emph{certificate} of~$X$ if~$\phi$
  distinguishes~$X$ from $C\setminus X$, that is if
  $\semantics{\phi}=X$.
\end{defn}
Note that~$\phi$ distinguishes~$X$ from~$Y$ iff $\neg\phi$
distinguishes~$Y$ from~$X$. Certificates have also been referred to as
\emph{descriptions}~\cite{FigueiraG10}.  If $\phi$ is a certificate of
a behavioural equivalence class $[x]_\sim$, then by definition~$\phi$
distinguishes~$x$ from $y$ whenever $x \not\sim y$. To obtain
distinguishing formulae for behaviourally inequivalent states in a
coalgebra, it thus suffices to construct certificates for all
behavioural equivalence classes, and our algorithm does just that. Of
course, every certificate must be at least as large as a smallest
distinguishing formula. However, already
on transition systems, distinguishing formulae and certificates
have the same asymptotic worst-case size (cf.~\autoref{worstcase}).

A natural approach to computing certificates for behavioural
equivalence classes is to extend algorithms that compute these
equivalence classes. In particular, \emph{partition refinement}
algorithms compute a sequence $C/R_0, C/R_1,\ldots$ of consecutively
finer partitions (i.e.~$R_{i+1}\subseteq R_i$) on the state space,
where every \emph{block} $B\in C/R_i$ is a union of behavioural
equivalence classes, and the final partition is precisely
$C/\mathord{\sim}$. Indeed, Cleaveland's algorithm for computing
certificates on labelled transition systems~\cite{Cleaveland91}
correspondingly extends Kanellakis and Smolka's partition refinement
algorithm~\cite{KanellakisSmolka83,KanellakisS90}, which runs in
$\CO(m n)$ on systems with $n=|C|$ states and $m$ transitions. Our
generic algorithm will be based on a more efficient partition
refinement algorithm.

\subsection{Paige-Tarjan with Certificates}
\label{paigeTarjan}
Before we turn to constructing certificates in coalgebraic generality,
we informally recall and extend the Paige-Tarjan
algorithm~\cite{PaigeTarjan87}, which computes the partition modulo
bisimilarity of a given transition system with $n$ states and $m$
transitions in time $\CO((m+n) \log n)$. We fix a given finite
transition system, viewed as a $\Pow$-coalgebra $c\colon C\to \Pow C$.

The algorithm computes two sequences $(C/P_i)_{i\in \N}$ and
$(C/Q_i)_{i\in \N}$ of partitions of~$C$ (with~$Q_i,P_i$ equivalence
relations), where only the most recent partition is held in memory
and~$i$ indexes the iterations of the main loop.  Throughout the
execution, $C/P_i$ is finer than $C/Q_i$ (that is, $P_i\subseteq Q_i$
for all~$i$), and the algorithm terminates when $P_i= Q_i$.
Intuitively,~$P_i$ is `one transition ahead' of $Q_i$: if $Q_i$
distinguishes states~$x$ and~$y$, then~$P_i$ is based on
distinguishing transitions to~$x$ from transitions to~$y$.

Initially, $C/Q_0 := \{C\}$ consists of only one block and $C/P_0$ of
two blocks: the live states and the deadlocks (i.e.~states with no
outgoing transitions). If $P_i\subsetneqq Q_i$, then there is a block
$B\in C/Q_i$ that is the union of at least two blocks in~$C/P_i$.
In such a situation, the algorithm chooses $S \subseteq B$ in $C/P_i$
to have at most half the size of $B$ and then splits the block $B$
into $S$ and $B\setminus S$ in the partition $C/Q_i$:
\[
  C/Q_{i+1} = (C/Q_i \setminus \{B\}) ~\cup~ \{S, B\setminus S\}.
\]
This is correct because every state in $S$ is already known to be
behaviourally inequivalent to every state in $B\setminus S$. By the
definition of bisimilarity, this implies that every block $T\in C/P_i$
with some transition to~$B$ may contain behaviourally inequivalent
states as illustrated in \autoref{fig:parttree}; that is,~$T$ may need
to be split into smaller blocks, as follows:

\newcommand{\caseref}[1]{\ref{#1}}
\begin{enumerate}[({C}1)]
\makeatletter
\renewcommand{\labelenumi}{\theenumi}
\renewcommand{\theenumi}{\text{\sffamily\color{lipicsGray}(C\arabic{enumi})}}
\renewcommand{\p@enumi}{}
\makeatother
\item\label{edgeS} states in~$T$ with successors in $S$ but not in $B\setminus
  S$ (e.g.~$x_1$ in \autoref{fig:parttree}),
\item\label{edgeBoth} states in~$T$ with successors in $S$ and $B\setminus S$ (e.g.~$x_2$), and
\item\label{edgeBS} states in~$T$ with successors $B\setminus S$ but not in $S$ (e.g.~$x_3$).
\end{enumerate}

\begin{figure}[t] %
  \centering
  \begin{tikzpicture}[
    partitionBlock/.style={
            shape=rectangle,
            rounded corners=2.5mm,
            minimum height=5mm,
            minimum width=5mm,
            draw=black!10,
            line width=1pt,
            inner sep = 2mm,
    }
]
  \begin{scope}[
    every node/.append style={
      inner sep=0pt,
      outer sep=2pt,
      minimum width=1pt,
      minimum height=4pt,
      anchor=center,
    }
    ]
    \begin{scope}[yshift=1.5cm]
      \node (x1) at (-1.2,0) {$x_1$};
      \node (x2) at (0,0) {$x_2$};
      \node (x3) at (1.2,0) {$x_3$};
    \end{scope}
    \node (y1) at (-1.6,0) {$y_1$};
    \node (y2) at (-0.5,0) {$y_2$};
    \node (y3) at (0.5,0) {$y_3$};
    \node (y4) at (1.2,0) {$y_4$};
  \end{scope}
  \node[anchor=west,xshift=2mm] (y5) at (y4.east) {$\ldots$};
  \draw (y1) edge[draw=none] node {$\ldots$} (y2);
  \begin{scope}[
    every node/.append style={
      partitionBlock,
      inner sep=2pt,
    },
    every label/.append style={
      font=\footnotesize,
      anchor=south,
      outer sep=1pt,
      inner sep=1pt,
      minimum height=2mm,
      shape=rectangle,
    },
    ]
    \node[fit=(x1) (x2) (x3)] (x123){};
    \node[fit=(y1) (y5)] (C) {};
    \node[every label,overlay] at ([xshift=-1mm]x123.north east) {\normalsize $T$}; 
    \node[every label] at ([xshift=-1mm]C.north east) {\normalsize $B$};
    \begin{scope}[every node/.append style={minimum width=10mm}, 
      ]
      \newcommand{\nextblockdistance}{8mm} 
      \foreach \name/\nodesource/\anchor/\direction in
      {Peast/x123/east/,
        Qeast/C/east/,
        Pwest/x123/west/-,
        Qwest/C/west/-} {
        \begin{scope}
          \clip ([yshift=1mm]\nodesource.north \anchor)
          rectangle ([yshift=-1mm,xshift=\direction\nextblockdistance]
                     \nodesource.south \anchor);
          \node at ([xshift=\direction\nextblockdistance]\nodesource.\anchor) (\name) {};
        \end{scope}
      }
    \end{scope}
  \end{scope}
  \begin{scope}
  \draw[thick, decoration={brace,mirror},decorate]
  ([yshift=-2mm]y1.south west) -- node[anchor=north,yshift=-2pt]{\footnotesize $S$} ([yshift=-2mm]y2.south east) ;
  \draw[thick, decoration={brace,mirror},decorate]
  ([yshift=-2mm]y3.south west) -- node[anchor=north,yshift=-2pt]{\footnotesize $B\setminus S$} ([yshift=-2mm]y3.south west -| y5.south east) ;
  \foreach \name in {Peast,Pwest,Qeast,Qwest} {
    \node[draw=none,minimum width=1pt,minimum height=1pt,text height=1pt,inner xsep=1pt]
       at (\name.center) {$\ldots$};
  }
  \end{scope}
  \begin{scope}[text depth=2pt]
  \node[outer sep=5mm,anchor=east] at (Pwest.center -| Qwest.center) (P) {$C/P:$};
  \node[outer sep=5mm,anchor=east] at (Qwest.center) (Q) {$C/Q:$};
  \coordinate (mapXCoordinate) at ([xshift=10mm]Qeast.center);
  \node[outer sep=1pt,anchor=center] at (Peast.center -| mapXCoordinate) (X) {$C$};
  \node[outer sep=1pt,anchor=center] at (Qeast.center -| mapXCoordinate) (PY) {$\Pow C$};
  \end{scope}
  \draw[commutative diagrams/.cd, every arrow, every label] (X) edge node {$c$} (PY);
  \begin{scope}[
    bend angle=10,
    space/.style={
      draw=white,
      line width=4pt,
    },
    edge/.style={
      draw=black,
      -{>[length=2mm,width=2mm]},
      preaction={draw,-,line width=1mm,white},
      every node/.append style={
        fill=white,
        shape=rectangle,
        inner sep=1pt,
        anchor=base,
        pos=0.55,
      },
    },
    ]
    \draw[edge,bend right] (x1) to (y1);
    \draw[edge,bend left] (x1) to (y2);
    \draw[edge,bend right] (x2) to (y2);
    \draw[edge,bend left] (x2) to (y3);
    \draw[edge,bend left] (x2) to (y4);
    \draw[edge,bend right] (x3) to (y3);
    \draw[edge] (x3) to (y4);
    \draw[edge,bend left] (x3) to (y5);
  \end{scope}
  \begin{scope}[
    linestyle/.style={
      dashed,
      draw=black!50,
    },
    scissors/.style={
      text=black!50,
      font=\tiny,
      inner sep=0pt,
      overlay,
    },
    gapstyle/.style={
      draw=white,
      line width=1mm,
    }
    ]
    \foreach \leftnode/\rightnode/\northlen/\southlen in
    {x1/x2/4mm/4mm,x2/x3/4mm/4mm,y2/y3/4mm/4mm} {
      \coordinate (northend) at ($ (\leftnode) !.5! (\rightnode) + (0,\northlen)$);
      \coordinate (southend) at ($ (\leftnode) !.5! (\rightnode) - (0,\southlen)$);
      \node[anchor=east,rotate=-90,scissors] at (northend) {\scissors};
      \draw[gapstyle] (northend) -- (southend);
      \draw[linestyle] (northend) -- (southend);
    }
  \end{scope}
\end{tikzpicture}
  \caption{The refinement step as illustrated in \cite[Figure 6]{concurSpecialIssue}.}
  \label{fig:parttree}
\end{figure}

\noindent The partition $C/P_{i+1}$ arises from $C/P_i$ by splitting
all such predecessor blocks~$T$ of $B$ accordingly. If no such~$T$ is
properly split, then $P_{i+1}=Q_{i+1}$, and the algorithm terminates.
\twnote{}
It is straightforward to construct certificates for the blocks arising
during the execution:
\begin{itemize}
\item
  The certificate for the only block $C \in C/Q_0$ is $\top$, and the
  blocks for live states and deadlocks in $C/P_0$ have certificates
  $\Diamond\top$ and~$\neg\Diamond\top$, respectively.
  
\item %
  In the refinement step, suppose that $\delta, \beta$ are
  certificates of $S\in C/P_i$ and $B\in C/Q_i$, respectively, where
  $S\subsetneqq B$. For every predecessor block $T$ of $B$, the three
  blocks obtained by splitting $T$ are distinguished
  (see~\autoref{defDistinguish}) as follows:
  \begin{equation}
    \text{
    \caseref{edgeS}\quad $\neg\Diamond(\beta\wedge\neg\delta)$,
    \qquad\caseref{edgeBoth}\quad $\Diamond(\delta) \wedge \Diamond(\beta\wedge\neg \delta)$,
    \qquad\caseref{edgeBS}\quad $\neg\Diamond \delta$.
    }
    \label{edgeCasesFormula}
  \end{equation}
  Of course these formulae only distinguish the states in $T$ from each other
  (e.g.~there may be states in other blocks with transitions to both
  $S$ and~$B$). Hence, given a certificate~$\phi$ of $T$, one obtains
  certificates of the three resulting blocks in $C/P_{i+1}$ via conjunction: $\phi\wedge
  \neg\Diamond(\beta\wedge \neg \delta)$, etc.
\end{itemize}
Upon termination, every bisimilarity class $[x]_\sim$ in the
transition system is annotated with a certificate. A key step in the
generic development will be to come up with a coalgebraic
generalization of the formulae for \refAllEdgeCases.

\subsection{Generic Partition Refinement}
\label{sec:genPartRef}

The Paige-Tarjan algorithm has been adapted to other system types,
e.g.~weighted systems~\cite{ValmariF10}, and it has recently been
generalized to co\-al\-gebras~\cite{concurSpecialIssue, DorschEA17}. A
crucial step in this generalization is to rephrase the case
distinction \refAllEdgeCases in terms of the
functor $\Pow$: Given a predecessor block $T$ in $C/P_i$ for
$S\subsetneqq B\in C/Q_i$, the three cases distinguish between the
equivalence classes \([x]_{\Pow \chi_S^B\cdot c}\) for $x\in T$, where
the map $\chi_S^B\colon C \to 3$ in the composite
$\Pow \chi_S^B \cdot c\colon C\to \Pow 3$ is defined by
\begin{equation}
  \chi_S^B\colon C\to 3
  \qquad
\chi_S^B(x) = 
  \begin{cases}
    2 & \text{if $x \in S$}, \\
    1 & \text{if $x \in B \setminus S$}, \\
    0 & \text{if $x \in C \setminus B$},
  \end{cases}
  \qquad\text{for sets $S\subseteq B\subseteq C$}.
  \label{eqChi3}
\end{equation}
Every case is a possible value of $t:=\Pow\chi_S^B(c(x)) \in \Pow 3$:
\caseref{edgeS}~$2\in t \notni 1$,
\caseref{edgeBoth}~$2\in t \ni 1$, and
\caseref{edgeBS}~$2\notin t \ni 1$. Since $T$ is a predecessor block of
$B$, the `fourth case' $2\not\in t \not\mkern1mu\ni 1$ is not
possible. There is a transition from $x$ to some state outside of $B$
iff $0\in t$. However, because of the previous refinement steps
performed by the algorithm, either all or no states states of $T$ have
an edge to $C\setminus B$ (a property called
\emph{stability}~\cite{PaigeTarjan87}), hence no distinction on
$0\in t$ is necessary.

It is now easy to generalize from transition systems to coalgebras by
simply replacing the functor $\Pow$ with $F$ in the refinement step. We recall the algorithm:

\begin{notheorembrackets}
\begin{algorithm}[{\cite[Alg.~4.9, (5.1)]{concurSpecialIssue}}] \label{coalgPT}
  Given a coalgebra $c\colon C\to FC$, put
  \[
  C/Q_0 := \{C\} \qquad\text{and}\qquad
  P_0 := \ker(C\xrightarrow{c}{FC}\xrightarrow{F!} F1).
  \]
  Starting at iteration $i=0$, repeat the following while $P_i\neq Q_i$:
  \begin{enumerate}[({A}1)]
\makeatletter
\renewcommand{\labelenumi}{\theenumi}
\renewcommand{\theenumi}{\text{\sffamily\color{lipicsGray}(A\arabic{enumi})}}
\renewcommand{\p@enumi}{}
\makeatother
    \item\label{step1} Pick $S\in C/P_i$ and $B\in C/Q_i$ such that $S\subsetneqq B$ and $2\cdot
      |S| \le |B|$
    \item\label{defQi1} $C/Q_{i+1} := (C/Q_i \setminus \{B\}) \cup \{S, B\setminus S\}$
    \item\label{defPi1} $P_{i+1} := P_i \cap \ker (
      \begin{tikzcd}[every label/.append style={inner sep=1pt}]
        C
        \arrow{r}[description]{c}
        &[2mm] FC
        \arrow{r}[description]{F\chi_S^B}
        &[5mm] F3
      \end{tikzcd}
      )$
  \end{enumerate}
\end{algorithm}
\end{notheorembrackets}

This algorithm formalizes the intuitive steps from
\autoref{paigeTarjan}. Again, two sequences of partitions~$P_1$,~$Q_i$ are
constructed, and $P_i = Q_i$
upon termination. Initially, $Q_0$ identifies all states and $P_0$
distinguishes states by only their output behaviour; e.g.~for $F=\Pow$
and~$x\in C$, the value $\Pow!(c(x)) \in \Pow 1$ is~$\emptyset$ if~$x$
is a deadlock, and~$\{1\}$ if $x$ is a live state, and for
$FX=2\times X^A$, the value $F1(c(x))\in F1= 2\times 1^A \cong 2$
indicates whether $x$ is a final or non-final state.

In the main loop, blocks $S\in C/P_i$ and $B\in C/Q_i$ witnessing
$P_i\subsetneqq Q_i$ are picked, and $B$ is split into $S$ and
$B\setminus S$, like in the Paige-Tarjan algorithm. Note that step \ref{defQi1}
is equivalent to directly defining the equivalence relation $Q_{i+1}$ as
\[
  Q_{i+1} := Q_i \cap \ker{\chi_S^B}.
\]
A similar intersection of equivalence relations is performed in step
\ref{defPi1}. The intersection splits every block $T\in
C/P_i$ into smaller blocks such that $x,x'\in T$ end up in the same block iff
$F\chi_S^B(c(x)) = F\chi_S^B(c(x'))$, i.e.~$T$ is replaced by
$\{[x]_{F\chi_S^B(c(x))}\mid x\in T\}$.
Again, this corresponds to the distinction of the three cases~\refAllEdgeCases.
For example, for
$FX=2\times X^A$, there are $|F3| = 2\cdot 3^{|A|}$ cases to be distinguished,
and so every $T\in C/P_i$ is split into at most
that many blocks. %

The following property of $F$ is needed for correctness~\cite[Ex.~5.11]{concurSpecialIssue}.
\begin{notheorembrackets}
\begin{defn}[{\cite{concurSpecialIssue}}]
  A functor $F$ is \emph{zippable} if map
  \begin{equation*}
    \fpair{F(A+!), F(!+B)}\colon~ F(A+B) \longrightarrow F(A+1) \times F(1+B)
  \end{equation*}
  is injective for all sets~$A, B$.
\end{defn}
\end{notheorembrackets}
Intuitively, $t\in F(A+B)$ is a term in variables from $A$ and $B$. If
$F$ is zippable, then~$t$ is uniquely determined by the two elements
in $F(A+1)$ and $F(1+B)$ obtained by identifying all $B$- and all
$A$-variables with $0\in 1$, respectively. E.g.~$FX=X^2$ is zippable:
$t=(\inl(a),\inr(b)) \in (A+B)^2$ is uniquely determined by
$(\inl(a),\inr(0)) \in (A+1)^2$ and $(\inl(0),\inr(b)) \in (1+B)^2$,
and similarly for the three other cases of~$t$. In fact, all signature
functors as well as $\Pow$ and all monoid-valued functors are
zippable. Moreover, the class of zippable functors is closed under
products, coproducts, and subfunctors but not under composition,
e.g.~$\Pow\Pow$ is not zippable~\cite{concurSpecialIssue}.

\begin{rem} \label{multisort} To apply the algorithm to coalgebras for
  composites $FG$ of zippable functors, e.g.~$\Pow(A\times (-))$,
  there is a reduction~\cite[Section~8]{concurSpecialIssue} that
  embeds every $FG$-coalgebra into a coalgebra for the zippable
  functor $(F+G)(X) := FX + GX$. This reduction preserves and reflects
  behavioural equivalence, but introduces an intermediate state for
  every transition.
\end{rem}

\begin{notheorembrackets}
\begin{theorem}[{\cite[Thm 4.20, 5.20]{concurSpecialIssue}}]
  On a finite coalgebra $(C,c)$ for a zippable functor,
  \autoref{coalgPT} terminates after $i\le |C|$ loop iterations, and
  the resulting partition identifies precisely the behaviourally
  equivalent states ($P_i = \mathord{\sim}$).
\end{theorem}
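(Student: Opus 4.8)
The plan is to prove the iteration bound and correctness $P_i = \mathord{\sim}$ by maintaining a short stack of loop invariants. First I would record that $P_i\subseteq Q_i$ holds throughout, and observe the key structural fact that, since $S\in C/P_i$ is a single $P_i$-block and $B\in C/Q_i$ is a union of $P_i$-blocks (as $P_i\subseteq Q_i$), the map $\chi_S^B$ of~\eqref{eqChi3} is constant on $P_i$-blocks, i.e.\ $P_i\subseteq\ker\chi_S^B$. Hence step~\ref{defQi1} splits exactly $B$ into $S$ and $B\setminus S$ and leaves every other block untouched, so $|C/Q_{i+1}| = |C/Q_i|+1$. Since $|C/Q_0| = 1$ and a partition of $C$ has at most $|C|$ blocks, this gives $|C/Q_i| = i+1\le |C|$, and hence $i\le|C|-1$; conversely, whenever $P_i\ne Q_i$ some $Q_i$-block is a union of at least two $P_i$-blocks, so one may take $S$ to be a smallest of these sub-blocks (ensuring $2\cdot|S|\le|B|$) and make progress. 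Thus the loop runs until exactly $P_i=Q_i$, within the claimed number of iterations.

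The heart of the correctness proof is the invariant
\[
  P_i = \ker\big(C\xrightarrow{c} FC\xrightarrow{F[-]_{Q_i}} F(C/Q_i)\big),
\]
which I would prove by induction on $i$; writing $\kappa_i := [-]_{Q_i}$, the base case is immediate because $C/Q_0\cong 1$ and $\kappa_0 = {!}$, so the right-hand side is $\ker(F!\cdot c) = P_0$. In the inductive step the hypothesis turns step~\ref{defPi1} into $P_{i+1} = \ker(F\kappa_i\cdot c)\cap\ker(F\chi_S^B\cdot c)$, and I must identify this with $\ker(F\kappa_{i+1}\cdot c)$. One inclusion is routine, since both $\kappa_i$ and $\chi_S^B$ factor through the finer $\kappa_{i+1}$. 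The converse is the main obstacle: from the pair $\big(F\kappa_i(c(x)),\, F\chi_S^B(c(x))\big)$ one must recover $F\kappa_{i+1}(c(x))$. This is exactly where zippability is used. Writing $C/Q_{i+1}\cong 2+N$, with $N = C/Q_i\setminus\{B\}$ the untouched blocks and $2 = \{S, B\setminus S\}$, the factorizations $\kappa_i = ({!}_2+\id_N)\cdot\kappa_{i+1}$ and $\chi_S^B = (\id_2+{!}_N)\cdot\kappa_{i+1}$ reduce the required implication to injectivity of $\fpair{F({!}_2+\id_N),\, F(\id_2+{!}_N)}$ on $F(2+N)$, which, up to swapping its two components, is precisely the zippability map of $F$ instantiated at the sets $2$ and $N$.

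It remains to combine this with both inclusions between $P_i$ and $\mathord{\sim}$. For soundness ($\mathord{\sim}\subseteq P_i$) I would show, by a parallel induction, that $\ker h\subseteq P_i$ and $\ker h\subseteq Q_i$ for \emph{every} coalgebra morphism $h\colon (C,c)\to(D,d)$: the base case uses $F!\cdot c = F!_D\cdot d\cdot h$, and in the step $\ker h\subseteq P_i\subseteq\ker\chi_S^B$ lets $\chi_S^B$ factor through $h$, so that $F\chi_S^B\cdot c = F\bar\chi\cdot d\cdot h$ likewise factors through $h$. Since $\mathord{\sim} = \bigcup_h\ker h$, this yields $\mathord{\sim}\subseteq P_i$. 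For the reverse inclusion at termination, put $R := P_i = Q_i$; with $[-]_R = \kappa_i$ the invariant above then reads $R = \ker\kappa_i = \ker(F\kappa_i\cdot c)$, so $F\kappa_i\cdot c$ is constant on $R$-blocks and there is a unique $\bar c\colon C/R\to F(C/R)$ with $\bar c\cdot[-]_R = F[-]_R\cdot c$. This equation says precisely that $[-]_R$ is a coalgebra morphism, whence $R = \ker[-]_R\subseteq\mathord{\sim}$. Together the two inclusions give $P_i = \mathord{\sim}$, as claimed.
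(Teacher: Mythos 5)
The paper does not actually prove this theorem: it is imported verbatim from the cited reference (Thms.~4.20 and 5.20 of \cite{concurSpecialIssue}), so there is no in-paper proof to compare against. Your reconstruction is correct and follows the same route as that reference: the loop invariant $P_i=\ker(F[-]_{Q_i}\cdot c)$ is precisely the ``one transition ahead'' property (the existence of $c_i$ with $c_i\cdot[-]_{P_i}=F[-]_{Q_i}\cdot c$ invoked in the appendix proof of \autoref{stableP}), the inductive step is exactly zippability instantiated at $A=2$ and $B=C/Q_i\setminus\{B\}$ under the identification $C/Q_{i+1}\cong 2+N$, and the final factorization of $F[-]_R\cdot c$ through the surjection $[-]_R$ yields the quotient coalgebra morphism giving $R\subseteq\mathord{\sim}$. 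The only points worth making fully explicit are the maintained inclusions $P_i\subseteq\ker\chi_S^B$ and $Q_{i+1}\subseteq\ker\chi_S^B$ (so that both $\kappa_i$ and $\chi_S^B$ really factor through $\kappa_{i+1}$), which you do state and which follow from $S$ being a $P_i$-block and $B$ a union of $P_i$-blocks.
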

\end{notheorembrackets}

\subsection{Generic Modal Operators}\label{genericModalOp} The extended
Paige-Tarjan algorithm (\autoref{paigeTarjan}) constructs a
distinguishing formula according to the three cases
\refAllEdgeCases. In the coalgebraic \autoref{coalgPT}, these cases
correspond to elements of~$F3$, which determine in which block an
element of a predecessor block~$T$ ends up. Indeed, the elements
of~$F3$ will also serve as generic modalities in characteristic
formulae for blocks of states, essentially by the known equivalence
between~\mbox{$n$-ary} predicate liftings and (in this case,
singleton) subsets of $F(2^n)$~\cite{Schroder08} (termed \emph{tests}
by Klin~\cite{Klin05}):
\begin{defn} \label{defF3Mod}
  The signature of \emph{$F3$-modalities} for a functor $F$ is
  \[
    \Lambda = \{ \arity{\fmod{t}}{2} \mid t\in F3 \};
  \]
  that is, we write $\fmod{t}$ for the syntactic representation of a binary
  modality for every $t\in F3$. The interpretation of $\fmod{t}$ for $F3$ is
  given by the predicate lifting
  \[
    \semantics{\fmod{t}}\colon
    (2^X)^2 \to 2^{FX},
    \qquad
    \semantics{\fmod{t}}(S,B) = \{t'\in FX\mid
    F\chi_{S\cap B}^B(t') = t
    \}.
  \]
\end{defn}
The intended use of $\fmod{t}$ is as follows: Suppose a block~$B$ is
split into subblocks $S\subseteq B$ and $B\setminus S$ with
certificates $\delta$ and $\beta$, respectively:
$\semantics{\delta} = S$ and $\semantics{\beta}= B$. As in
\autoref{fig:parttree}, we then split every predecessor block $T$ of
$B$ into smaller parts, each of which is uniquely characterized by the
formula $\fmod{t}(\delta,\beta)$ for some $t\in F3$. \smnote{}

\begin{example} \label{examplePowF3Mod}
  For $F=\Pow$, $\fmod{\set{0,2}}(\delta,\beta)$ is equivalent to
  $\smash{\overbrace{\Diamond \neg \beta}^{\text{\textqt{0}}}}
  \wedge
  \neg \smash{\overbrace{\Diamond (\beta\wedge\neg\delta)}^{\text{\textqt{1}}}}
  \wedge \smash{\overbrace{\Diamond(\delta\wedge\beta)}^{\text{\textqt{2}}}}$.
\end{example}

\begin{lemma} \label{lemF3CoalgSem} Given an $F$-coalgebra $(C,c)$,
  $x \in C$, and formulae $\delta$ and $\beta$ such that
  $\semantics{\delta}\subseteq \semantics{\beta}\subseteq C$,  we
  have
  $x \in \semantics{\fmod{t}(\delta,\beta)}$ if and only if
  $F\chi_{\semantics{\delta}}^{\semantics{\beta}}(c(x)) = t$.
\end{lemma}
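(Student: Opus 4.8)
The plan is a direct definition chase: unfold both sides of the claimed equivalence and observe that the sole substantive use of the hypothesis $\semantics{\delta}\subseteq\semantics{\beta}$ is to simplify the cut-off map appearing in the predicate lifting. Throughout I abbreviate $S := \semantics{\delta}$ and $B := \semantics{\beta}$.

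First I would expand the left-hand side via the semantic clause for modalities from the Semantics paragraph. Since $\fmod{t}$ is a binary modality, this clause gives
\[
  \semantics{\fmod{t}(\delta,\beta)}_{(C,c)}
  = c^{-1}\big[\semantics{\fmod{t}}_C(S, B)\big],
\]
so that $x\in\semantics{\fmod{t}(\delta,\beta)}$ holds if and only if $c(x)\in\semantics{\fmod{t}}_C(S,B)$.

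Next I would substitute the predicate lifting defining $\fmod{t}$ from \autoref{defF3Mod}, namely $\semantics{\fmod{t}}_C(S,B) = \{t'\in FC\mid F\chi_{S\cap B}^B(t') = t\}$. This is the step where the hypothesis enters: because $S = \semantics{\delta}\subseteq\semantics{\beta} = B$, we have $S\cap B = S$, and hence $\chi_{S\cap B}^B = \chi_S^B = \chi_{\semantics{\delta}}^{\semantics{\beta}}$. Combining the two observations, $x\in\semantics{\fmod{t}(\delta,\beta)}$ holds iff $c(x)$ lies in $\{t'\in FC\mid F\chi_{\semantics{\delta}}^{\semantics{\beta}}(t') = t\}$, i.e.\ iff $F\chi_{\semantics{\delta}}^{\semantics{\beta}}(c(x)) = t$, which is exactly the assertion.

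There is no real obstacle here, since the argument is pure unfolding; the only point meriting attention is the equality $S\cap B = S$, which is precisely what the containment hypothesis $\semantics{\delta}\subseteq\semantics{\beta}$ buys. Without it one would be left with $\chi_{S\cap B}^B$ in place of the intended $\chi_S^B$, and the clean characterization in terms of $\chi_{\semantics{\delta}}^{\semantics{\beta}}$ would break down.
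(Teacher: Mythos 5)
Your proof is correct and matches the paper's own argument essentially verbatim: both unfold the semantic clause for modalities, substitute the predicate lifting from \autoref{defF3Mod}, and use the containment hypothesis exactly once to reduce $\chi_{S\cap B}^{B}$ to $\chi_{S}^{B}$. Nothing is missing.
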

In the initial partition $C/P_0$ on a transition system $(C,c)$, we
used the formulae $\Diamond\top$ and~$\neg\Diamond\top$ to distinguish
live states and deadlocks. In general, we can similarly describe the
initial partition using modalities induced by elements of~$F1$:

\begin{notation} \label{notationF1Mod}
  Define the injective map $j_1\colon 1\monoto 3$ by $j_1(0) = 2$. Then
  the injection $Fj_1\colon F1\monoto F3$ provides a way to interpret elements $t\in F1$ as 
  nullary modalities $\fmod{t}$:
  \[
    \fmod{t} := \fmod{Fj_1(t)}(\top,\top)
    \qquad\text{for $t\in F1$.}
  \]
  (Alternatively, we could introduce $\fmod{t}$ directly as a nullary
  modality.)
\end{notation}
\begin{lemma} \label{lemF1CoalgSem}
  For $x\in C$, $c\colon C\to FC$, and $t\in F1$, we have
  $x\in \semantics{\fmod{t}}$ if and only if $F!(c(x)) = t$.
\end{lemma}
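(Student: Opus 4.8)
The plan is to prove Lemma~\ref{lemF1CoalgSem} by unfolding the definition of $\fmod{t}$ for $t \in F1$ given in Notation~\ref{notationF1Mod} and reducing it to the already-established Lemma~\ref{lemF3CoalgSem}. By definition we have $\fmod{t} = \fmod{Fj_1(t)}(\top,\top)$, so $x \in \semantics{\fmod{t}}$ iff $x \in \semantics{\fmod{Fj_1(t)}(\top,\top)}$. Since $\semantics{\top} = C$, the extensions $\semantics{\top} \subseteq \semantics{\top} \subseteq C$ trivially satisfy the hypothesis of Lemma~\ref{lemF3CoalgSem} (with $\delta = \beta = \top$), so that lemma applies and yields $x \in \semantics{\fmod{Fj_1(t)}(\top,\top)}$ iff $F\chi_{\semantics{\top}}^{\semantics{\top}}(c(x)) = Fj_1(t)$, i.e.\ iff $F\chi_C^C(c(x)) = Fj_1(t)$.

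Next I would compute the map $\chi_C^C \colon C \to 3$. Plugging $S = B = C$ into the definition~\eqref{eqChi3}, the first case $x \in S$ applies to every $x \in C$, so $\chi_C^C$ is the constant map with value $2$. Meanwhile $j_1 \colon 1 \monoto 3$ is the constant map with value $2$ precomposed with $! \colon C \to 1$. Hence $\chi_C^C = j_1 \cdot {!}$, where $! \colon C \to 1$ is the unique map. Applying the functor and using functoriality, $F\chi_C^C = F(j_1 \cdot {!}) = Fj_1 \cdot F!$.

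Finally I would combine the two observations. We have $x \in \semantics{\fmod{t}}$ iff $F\chi_C^C(c(x)) = Fj_1(t)$, and substituting $F\chi_C^C = Fj_1 \cdot F!$ this becomes $Fj_1(F!(c(x))) = Fj_1(t)$. Because $j_1 \colon 1 \monoto 3$ is injective, $Fj_1$ is injective as well—here I invoke the standing assumption from Remark~\ref{trnkovahull} that $F$ preserves injective maps—so we may cancel $Fj_1$ to obtain the equivalent condition $F!(c(x)) = t$, as claimed.

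The only genuinely nontrivial step is the cancellation of $Fj_1$, which relies on $F$ preserving injectives; everything else is a routine unwinding of definitions and a functoriality calculation $F(j_1 \cdot {!}) = Fj_1 \cdot F!$. I therefore expect the preservation-of-injectives assumption to be the load-bearing ingredient, and the rest to follow immediately.
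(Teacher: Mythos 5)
Your proof is correct and follows exactly the same route as the paper's: unfold $\fmod{t}=\fmod{Fj_1(t)}(\top,\top)$, apply \autoref{lemF3CoalgSem} with $\delta=\beta=\top$, observe $\chi_C^C = j_1\cdot{!}$, and cancel $Fj_1$ using that $F$ preserves injections (\autoref{trnkovahull}). Nothing to add.
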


\subsection{Algorithmic Construction of Certificates}
\label{certConstruct}
The $F3$-modalities introduced above (\autoref{defF3Mod}) induce an
instance of coalgebraic modal logic (\autoref{sec:coalgLogic}). We
refer to coalgebraic modal formulae employing the $F3$-modalities as
\emph{$F3$-modal formulae}, and write $\Formulae$ for the set of
$F3$-modal formulae. As in the extended Paige-Tarjan algorithm
(\autoref{paigeTarjan}), we annotate every block arising during the
execution of \autoref{coalgPT} with a certificate in the shape of an
$F3$-modal formula.  Annotating blocks with formulae means that we
construct maps
\[
  \beta_i\colon C/Q_i \to \Formulae
  \qquad\text{and}\qquad
  \delta_i\colon C/P_i \to \Formulae
  \qquad
  \text{for $i \in \N$}.
\]
As in \autoref{coalgPT}, $i$ indexes the loop iterations. For
blocks $B, S$ in the respective par\-ti\-tion,~$\beta_i(B)$,
$\delta_i(S)$ denote corresponding certificates: we will have
\twnote{}%
\takeout{}
\begin{equation}
  \forall B\in X/Q_i\colon
  \semantics{\beta_i(B)} = B
  \qquad\text{and}\qquad
  \forall S\in X/P_i\colon
  \semantics{\delta_i(S)}
  = S.
  \label{eqCertCorrect}
\end{equation}
We construct $\beta_i(B)$ and $\delta_i(S)$ iteratively, using certificates for the blocks
$S\subsetneqq B$ at every iteration:

\begin{algorithm} \label{algoCerts}
  We extend \autoref{coalgPT} by the following. Initially, put
  \[
    \beta_0(\{C\}) := \top
    \qquad\text{and}\qquad
    \delta_0([x]_{P_0}) := \fmod{F!(c(x))}\quad\text{for every $x \in C$.}
  \]
  In the $i$-th iteration, extend steps \ref{defQi1} and  \ref{defPi1}
  by the following assignments:\medskip
  \begin{enumerate}[({A$\!'$}1)]
\makeatletter
\renewcommand{\labelenumi}{\theenumi}
\renewcommand{\theenumi}{\text{\sffamily\color{lipicsGray}(A$\!$'\arabic{enumi})}}
\renewcommand{\p@enumi}{}
\makeatother
  \refstepcounter{enumi}
  \item\label{defBetai1} $\mathrlap{\beta_{i+1}(D)}\phantom{\delta_{i+1}([x]_{P_{i+1}})} = \begin{cases}
      \delta_{i}(S) & \text{if }D = S \\
      \beta_{i}(B)\wedge \neg \delta_{i}(S) & \text{if }D = B\setminus S \\
      \beta_{i}(D) & \text{if }D \in C/Q_i \\
      \end{cases}$\medskip
    \item\label{defDeltai1} $\delta_{i+1}([x]_{P_{i+1}}) = \begin{cases}
        \delta_{i}([x]_{P_{i}}) &\text{if }[x]_{P_{i+1}} = [x]_{P_i} \\
        \delta_{i}([x]_{P_{i}}) \wedge \fmod{F\chi_S^B(c(x))}(\delta_i(S),\beta_i(B))
        &\text{otherwise.}\\
        \end{cases}$\medskip
  \end{enumerate}
  Upon termination, return $\delta_i$.
\end{algorithm}
Like in \autoref{paigeTarjan}, the only block of $C/Q_0$ has
$\beta_0(\{C\}) = \top$ as a certificate. Since the partition $C/P_0$
distinguishes by the `output' (e.g.~final vs. non-final states), the
certificate of~$[x]_{P_0}$ specifies what $F!(c(x))\in F1$ is
(\autoref{lemF1CoalgSem}).

In the $i$-th iteration of the main loop, we have certificates
$\delta_{i}(S)$ and $\beta_i(B)$ for $S\subsetneqq B$ in
step~\ref{step1} satisfying \eqref{eqCertCorrect} available from the
previous iterations.  In~\ref{defBetai1}, the Boolean connectives
describe how $B$ is split into $S$ and $B\setminus S$. In
\ref{defDeltai1}, new certificates are constructed for every
predecessor block $T\in C/P_i$ that is refined. If $T$ does not
change, then neither does its certificate. Otherwise, the block
$T = [x]_{P_i}$ is split into the blocks $[x]_{F\chi_S^B(c(x))}$ for
$x \in T$ in step \ref{defPi1}, which is reflected by the $F3$ modality
$\fmod{F\chi_S^B(c(x))}$ as per \autoref{lemF3CoalgSem}.

\begin{rem} \label{cancelConjunct} In step~\ref{defBetai1},
  $\beta_{i+1}(D)$ can be simplified to be no larger than
  $\delta_i(S)$. To see this, note that for $S\subseteq B\subseteq C$,
  $S\in X/P_i$, and $B\in X/Q_i$, every conjunct of $\beta_i(B)$ is
  also a conjunct of $\delta_i(S)$.  In
  $\beta_i(B)\wedge \neg \delta_i(S)$, one can hence remove all
  conjuncts of $\beta_i(B)$ from $\delta_i(S)$, obtaining a formula
  $\delta'$, and then equivalently use $\beta_i(B)\wedge \neg \delta'$
  in the definition of $\beta_{i+1}(D)$.
\end{rem}

\begin{theorem} \label{algoCertsCorrect} For zippable~$F$,
  \autoref{algoCerts} is correct, i.e.~\eqref{eqCertCorrect} holds for
  all~$i$. Thus, upon termination $\delta_i$ assigns certificates to
  each block of $C/\mathord{\sim} = C/P_i$.
\end{theorem}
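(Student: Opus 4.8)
The plan is to prove the invariant \eqref{eqCertCorrect} by induction on the iteration index~$i$, evaluating the constructed formulae via the two semantic characterizations \autoref{lemF3CoalgSem} and \autoref{lemF1CoalgSem}. I would first observe that zippability plays no role in the invariant itself; it enters only in the final sentence, where the cited correctness theorem for \autoref{coalgPT} is needed to identify the terminal partition $C/P_i$ with $C/\mathord\sim$. The invariant is a statement purely about the bookkeeping of the refinement loop and holds at every stage irrespective of where the loop terminates.

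For the base case $i=0$, I would note $\semantics{\beta_0(\{C\})} = \semantics{\top} = C$ immediately, and then use \autoref{lemF1CoalgSem} to compute $\semantics{\fmod{F!(c(x))}} = \{y \mid F!(c(y)) = F!(c(x))\} = [x]_{P_0}$, which matches $\delta_0$ since $P_0 = \ker(F!\cdot c)$.

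For the inductive step I would assume \eqref{eqCertCorrect} at stage $i$ and take $S \subsetneqq B$ as picked in step~\ref{step1}. The $Q$-part follows by reading off the three cases of step~\ref{defBetai1} against $C/Q_{i+1} = (C/Q_i \setminus \{B\}) \cup \{S, B \setminus S\}$: block $S$ inherits $\delta_i(S)$ (extension $S$ by hypothesis), block $B \setminus S$ gets $\beta_i(B) \wedge \neg\delta_i(S)$ (extension $B \setminus S$ by the Boolean clauses of the semantics), and all other blocks keep their certificates unchanged. For the $P$-part I would use $P_{i+1} = P_i \cap \ker(F\chi_S^B \cdot c)$ from step~\ref{defPi1}, so that each refined block satisfies $[x]_{P_{i+1}} = [x]_{P_i} \cap \{y \mid F\chi_S^B(c(y)) = F\chi_S^B(c(x))\}$. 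Unchanged blocks are covered by the hypothesis; for a split block $T = [x]_{P_i}$, step~\ref{defDeltai1} assigns $\delta_i([x]_{P_i}) \wedge \fmod{F\chi_S^B(c(x))}(\delta_i(S),\beta_i(B))$, and I would apply \autoref{lemF3CoalgSem} — whose hypothesis $\semantics{\delta_i(S)} \subseteq \semantics{\beta_i(B)}$ is exactly $S \subseteq B$ from the induction hypothesis — to get $\semantics{\fmod{F\chi_S^B(c(x))}(\delta_i(S),\beta_i(B))} = \{y \mid F\chi_S^B(c(y)) = F\chi_S^B(c(x))\}$. Intersecting with $\semantics{\delta_i([x]_{P_i})} = T$ recovers $[x]_{P_{i+1}}$ precisely. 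I would also verify that this assignment is well defined on blocks, since any two representatives of one $P_{i+1}$-block agree both on their $P_i$-block and on their $F\chi_S^B\cdot c$-value, so the attached formula does not depend on the chosen representative~$x$.

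The one genuinely load-bearing step is this split case, and its only real prerequisite is the applicability of \autoref{lemF3CoalgSem}, guaranteed by carrying the inclusion $\semantics{\delta_i(S)} \subseteq \semantics{\beta_i(B)}$ through the induction; everything else is a routine unwinding of the Boolean semantics and the partition-update equations of \autoref{coalgPT}. To finish, I would invoke termination ($P_i = Q_i$) together with the cited correctness theorem for zippable~$F$ to conclude $C/P_i = C/\mathord\sim$, so that the returned map $\delta_i$ certifies every behavioural equivalence class.
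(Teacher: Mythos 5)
Your proposal is correct and follows essentially the same route as the paper's own proof: induction on~$i$, with the base case discharged by \autoref{lemF1CoalgSem} and the split case discharged by \autoref{lemF3CoalgSem} (the paper isolates the identity $\semantics{\fmod{F\chi_{S}^B(c(x))}(\delta_i(S),\beta_i(B))} = [x]_{F\chi_S^B(c(x))}$ as a preliminary observation before the induction, but the content is identical), followed by intersecting with $[x]_{P_i}$ via the definition of $P_{i+1}$. Your added remarks on representative-independence and on zippability entering only through the cited termination theorem for \autoref{coalgPT} are accurate refinements that the paper leaves implicit.
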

\begin{corollary}[Hennessy-Milner]
  \label{hennessyMilner}
  For zippable $F$, states $x,y$ in a finite $F$-coalgebra are
  behaviourally equivalent iff they agree on all $F3$-modal formulae.
\end{corollary}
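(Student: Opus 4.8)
The plan is to prove the two implications separately, leaning on the two results already in place: adequacy (\autoref{bisimInvariant}) for the direction from behavioural equivalence to agreement, and correctness of the certificate construction (\autoref{algoCertsCorrect}) for the converse.

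For the direction ``$x\sim y$ implies agreement on all $F3$-modal formulae'', I would simply apply \autoref{bisimInvariant}. The only prerequisite is that this logic genuinely is an instance of coalgebraic modal logic, i.e.~that each predicate lifting $\semantics{\fmod{t}}$ from \autoref{defF3Mod} satisfies the naturality condition~\eqref{eq:naturality}. This reduces to the pointwise identity $\chi^{P_2}_{P_1\cap P_2}\cdot f = \chi^{f^{-1}[P_2]}_{f^{-1}[P_1]\cap f^{-1}[P_2]}$ between maps $X\to 3$, for any $f\colon X\to Y$ and $P_1,P_2\subseteq Y$ (a routine case check on the three values $0,1,2$); applying $F$ and using functoriality, together with the fact that taking preimages under $Ff$ commutes with the defining condition of $\semantics{\fmod{t}}$, then yields~\eqref{eq:naturality}. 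With naturality in hand, \autoref{bisimInvariant} gives the claim directly.

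For the converse, ``agreement implies $x\sim y$'', I would exploit the existence of certificates. As $F$ is zippable and $(C,c)$ is finite, \autoref{coalgPT} terminates with $P_i = \mathord{\sim}$, and \autoref{algoCertsCorrect} ensures that the returned map $\delta_i$ assigns to the block $[x]_\sim\in C/\mathord{\sim} = C/P_i$ a certificate $\phi := \delta_i([x]_\sim)$ with $\semantics{\phi} = [x]_\sim$. Then $x\in\semantics{\phi}$, and if $x$ and $y$ agree on all $F3$-modal formulae, they in particular agree on $\phi$, whence $y\in\semantics{\phi} = [x]_\sim$, i.e.~$y\sim x$.

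I do not expect a genuine obstacle here: the substantial content — and the only place where finiteness and zippability are actually used — is packed into \autoref{algoCertsCorrect}, which I am entitled to assume. The converse direction is thus essentially a one-line consequence of the existence of certificates, and the only residual care needed is to confirm naturality of the $F3$-predicate liftings so that adequacy applies, and to observe that finiteness is exactly what guarantees termination and hence the availability of a certificate for $[x]_\sim$.
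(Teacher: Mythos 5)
Your proof is correct and matches the paper's intended argument: the forward direction is adequacy (\autoref{bisimInvariant}), which applies because the $F3$-predicate liftings are verified to be natural (exactly the identity $\chi_{S\cap B}^B\cdot f = \chi_{f^{-1}[S]\cap f^{-1}[B]}^{f^{-1}[B]}$ you describe, checked in the paper's appendix), and the converse follows by applying the certificate $\delta_i([x]_\sim)$ of the class $[x]_\sim$ provided by \autoref{algoCertsCorrect} together with termination of \autoref{coalgPT} on finite coalgebras for zippable functors. The paper states the corollary without a separate written proof precisely because it is this immediate consequence, so there is nothing to add.
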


\begin{rem} \label{extractDistinguish} A smaller formula
  distinguishing a state $x$ from a state $y$ can be extracted from
  the certificates in time $\CO(|C|)$. It is the leftmost conjunct
  that is different in the respective certificates of $x$ and
  $y$. This is the subformula starting at the modal operator
  introduced in $\delta_i$ for the least $i$ with $(x,y)\notin P_i$;
  hence, $x$ satisfies $\fmod{t}(\delta,\beta)$ but $y$ satisfies
  $\fmod{t'}(\delta,\beta)$ for some $t'\neq t$ in $F3$.
\end{rem}

\subsection{Complexity Analysis}
\label{complexityAnalysis}
The operations introduced by \autoref{algoCerts} can be implemented
with only constant run time overhead. To this end, one implements
$\beta$ and $\delta$ as arrays of formulae of length $|C|$ (note that
at any point, there are at most $|C|$-many blocks). In the
refinable-partition data structure~\cite{ValmariLehtinen08}, every
block has an index (a natural number) and there is an array of
length~$|C|$ mapping every state $x\in C$ to the block it is contained
in. Hence, for both partitions $C/P$ and $C/Q$, one can look up a
state's block and a block's certificate in constant time.

It is very likely that the certificates contain a particular
subformula multiple times and that certificates of different blocks
share common subformulae. For example, every certificate of a block
refined in the $i$-th iteration using $S\subsetneqq B$ contains the
subformulas $\delta_i(S)$ and~$\beta_i(B)$.  Therefore, it is
advantageous to represent all certificates constructed as one directed
acyclic graph (dag) with nodes labelled by modal operators and
conjunction and having precisely two outgoing edges. Moreover, edges
have a binary flag indicating whether they represent negation $\neg$.
Initially, there is only one node representing~$\top$, and the
operations of \autoref{algoCerts} allocate new nodes and update the
arrays for~$\beta$ and~$\delta$ to point to the right nodes.  For
example, if the predecessor block $T\in C/P_i$ is refined in step
\ref{defDeltai1}, yielding a new block $[x]_{P_{i+1}}$, then a new
node labelled $\wedge$ is allocated with edges to the nodes
$\delta_i(T)$ and to another new node labelled
$F\chi_S^B(c(x))$ with edges to the nodes $\delta_i(S)$ and
$\delta_i(B)$.

For purposes of estimating the size of formulae generated by the
algorithm, we use a notion of \emph{transition} in coalgebras,
inspired by the notion of canonical graph~\cite{Gumm2005}.

\begin{definition}\label{coalgebraEdge}
  For states $x,y$ in an $F$-coalgebra $(C,c)$, we say that there is a
  \emph{transition $x\to y$} if $c(x)\in FC$ is not in the image
  $Fi[F(C\setminus \{y\})]~(\subseteq FC)$, where
  $i\colon C\setminus \{y\}\monoto C$ is the inclusion map.
\end{definition}
\begin{theorem}\label{certSize} For a coalgebra with~$n$ states
  and~$m$ transitions, the formula dag constructed by
  \autoref{algoCerts} has size $\CO(m\cdot \log n + n)$ and height at
  most~${n+1}$.
\end{theorem}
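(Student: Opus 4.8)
The plan is to bound the dag size by classifying the node allocations performed by \autoref{algoCerts}---those from the initialisation, from the $\beta$-update~\ref{defBetai1}, and from the $\delta$-update~\ref{defDeltai1}---and to bound the height by tracking how deeply modalities become nested. The key structural observation is that \autoref{algoCerts} extends each step of \autoref{coalgPT} by only a constant number of fresh nodes: \ref{defBetai1} allocates one $\wedge$-node (for the block $B\setminus S$), and \ref{defDeltai1} allocates one $\wedge$-node together with one $\fmod{t}$-node for each block that is genuinely refined. It therefore suffices to count these events, weighting every $\fmod{t}$-node by the size of its label $t\in F3$.

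The contributions of the initialisation and of~\ref{defBetai1} are readily seen to be $\CO(n)$. The node for $\top$ and the nullary-modality nodes $\fmod{F!(c(x))}$ attached to the blocks of $C/P_0$ number at most $1+|C/P_0|\le n+1$, as any partition of $C$ has at most $n=|C|$ blocks; the sizes of the initial labels $F!(c(x))\in F1$ sum to $\CO(m)$. The update~\ref{defBetai1} allocates exactly one node per iteration of the main loop, and since each iteration strictly increases $|C/Q_i|$ (from $1$ towards the final value $|C/{\sim}|\le n$) there are at most $n$ iterations, contributing $\CO(n)$. All of this is absorbed into the additive $\CO(n)$ and $\CO(m)\subseteq\CO(m\log n)$ terms.

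The crux is~\ref{defDeltai1}. The number of new nodes it creates is already $\CO(n)$: the blocks of the partitions $C/P_i$ form a refinement forest with roots $C/P_0$ and leaves $C/{\sim}$, and since every properly refined block splits into at least two children this forest has at most $2\,|C/{\sim}|\le 2n$ nodes, each triggering $\CO(1)$ allocations. The $\CO(m\log n)$ term instead comes from the \emph{sizes} of the modal labels $F\chi_S^B(c(x))\in F3$, which I would bound by the refinement work of \autoref{coalgPT}. That work proceeds by the ``process the smaller half'' strategy: when the smaller half $S$ is selected in~\ref{step1}, only the transitions \emph{into} $S$ (the coordinates on which $\chi_S^B$ equals $2$) need be inspected to compute how each predecessor block splits, and it is precisely this inspected data that is recorded in the new labels. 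The decisive input is the constraint $2\cdot|S|\le|B|$: whenever a state $y$ lies in a selected smaller half $S$, the $Q$-block containing $y$ is at least halved, so $y$ lies in such an $S$ at most $\log_2 n$ times over the whole run. Hence every transition is inspected, and charged, $\CO(\log n)$ times, so the labels contribute $\CO(m\log n)$ and the total size is $\CO(m\log n+n)$. I expect this accounting to be the main obstacle: one must confirm that~\ref{defDeltai1} records no more than the predecessors-of-$S$ data (so the factor is $\log n$ rather than $n$), and handle the residual ``$B\setminus S$-only'' child of each refined block---whose new label is fixed by the at most $n$ refinement events rather than by any transition into $S$.

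For the height, I would track $D_i$, the maximal nesting depth of modalities occurring among the certificates $\beta_i(\cdot)$ and $\delta_i(\cdot)$ after $i$ iterations. Initially $D_0=1$, since $\beta_0(\{C\})=\top$ has depth $0$ while each $\delta_0([x]_{P_0})=\fmod{F!(c(x))}$ has depth $1$. Both updates form new certificates by putting a single fresh modal layer---namely $\fmod{t}(\delta_i(S),\beta_i(B))$ in~\ref{defDeltai1}, respectively a negated copy of $\delta_i(S)$ in~\ref{defBetai1}---on top of previously constructed certificates and conjoining, so that $D_{i+1}\le D_i+1$. As \autoref{coalgPT} terminates after at most $|C|=n$ iterations, this gives $D_i\le n+1$, the claimed bound on the height.
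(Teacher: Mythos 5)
Your height argument and your treatment of the initialisation and of the $\beta$-updates coincide with the paper's; the substantive difference lies in how the modality nodes created in step \ref{defDeltai1} are counted. The paper bounds their number by the total number of distinct blocks ever occurring in some $C/P_i$, and bounds \emph{that} by $2m\log_2 n+2m+n$: it first shows (using stability of $P_i$ w.r.t.\ $Q_i$ and the assumption that $F$ preserves finite intersections) that a block $T$ is refined only if it has a transition into $S$ and then splits into at most $|T\cap\pred(S)|+1$ pieces, so each splitter contributes at most $2\,|\pred(S)|$ new blocks; it then sums over iterations using the same ``every state lies in a chosen $S$ at most $\log_2 n+1$ times'' observation you invoke. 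Your refinement-forest count replaces all of this: since every properly split block has at least two children and the forest has at most $n$ leaves, there are at most $2n$ blocks overall, hence $\CO(n)$ modality nodes. This is correct, needs none of the paper's transition-counting lemmas, and in fact yields the stronger conclusion that the dag has only $\CO(n)$ nodes and edges (modality labels being atomic, which is how the paper measures dag size), from which the stated $\CO(m\log n+n)$ follows a fortiori. Your route is shorter and tighter for the size bound; the paper's predecessor-charging machinery is not thereby wasted, since it is what drives the run-time analysis (\autoref{runTimePreserved}), where transitions really must be touched only $\CO(\log n)$ times each.

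One caveat: your claim that the $m\log n$ term ``comes from the sizes of the modal labels'', justified by saying that a new label records only the inspected transitions into $S$, is not right as stated. The label $F\chi_S^B(c(x))$ encodes all of $c(x)$ pushed through $\chi_S^B$ --- for a signature functor its size is the full out-degree of $x$, independent of how many successors lie in $S$ --- so it is not paid for by the smaller-half charging of edges into $S$. This does not endanger the theorem, because under the paper's accounting the labels are node labels of the dag and your $\CO(n)$ node count already closes the proof; but if you did want to bound summed label sizes you would need a different argument (for instance a harmonic-sum bound over the strictly shrinking chain of blocks containing each state).
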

\begin{theorem} \label{runTimePreserved} 
  \autoref{algoCerts} adds only constant run time overhead, thus it has the same
  run time as \autoref{coalgPT} (regardless of the optimization from
\autoref{cancelConjunct}).
\end{theorem}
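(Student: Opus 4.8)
The plan is to charge every additional instruction of \autoref{algoCerts} to a distinct instruction already performed by \autoref{coalgPT}, at only $\CO(1)$ extra cost each; the overhead is then bounded by the run time of \autoref{coalgPT}. I would first fix the data structures as in the complexity analysis above: $\beta$ and $\delta$ as arrays of length $|C|$ indexed by block identifiers, the refinable-partition structure~\cite{ValmariLehtinen08} providing $\CO(1)$ access from a state to its block and from a block to its certificate, and a single shared dag holding all certificates, whose primitive operations are allocating a node labelled $\wedge$ or $\fmod{t}$ with two given child pointers and following or creating an edge carrying a negation flag. The key observation enabling the bound is that in this representation the formula constructors cost $\CO(1)$: building $\phi\wedge\psi$ allocates one node pointing at the already-existing nodes for $\phi$ and $\psi$, and forming a negation merely sets the negation flag on an edge, both independently of the sizes of $\phi$ and $\psi$.

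With these primitives I would inspect the added steps one by one. The initialisation stores the single node $\top$ and one node $\fmod{F!(c(x))}$ per block of $C/P_0$; since \autoref{coalgPT} already evaluates $F!(c(x))$ for every $x$ to form $P_0$, these allocations are subsumed by the cost it spends there. In step \ref{defBetai1} only the two blocks $S$ and $B\setminus S$ are touched: $\beta_{i+1}(S)$ copies the pointer $\delta_i(S)$, while $\beta_{i+1}(B\setminus S)$ allocates one fresh $\wedge$-node with edges to $\beta_i(B)$ and, with the negation flag set, to $\delta_i(S)$; all remaining entries retain their current pointers, so unchanged blocks incur no work. In step \ref{defDeltai1}, each predecessor block $T$ that is actually refined gives rise to sub-blocks $[x]_{F\chi_S^B(c(x))}$, and for every such new block we allocate one $\wedge$-node and one $\fmod{F\chi_S^B(c(x))}$-node pointing at $\delta_i(S)$ and $\beta_i(B)$; unrefined blocks again keep their pointers. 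These are exactly the blocks that \autoref{coalgPT} creates in step \ref{defPi1}, so the certificate work is $\CO(1)$ per block that \autoref{coalgPT} creates anyway.

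Summing over an iteration, the number of dag nodes allocated and array cells written by \autoref{algoCerts} is proportional to the number of blocks created or split by \autoref{coalgPT} in that iteration, each at $\CO(1)$ cost; hence the added work is $\CO(1)$ per operation of \autoref{coalgPT} and the asymptotic run time is preserved. It remains to check that the optimisation of \autoref{cancelConjunct} does not spoil this: there, $\neg\delta_i(S)$ is replaced by $\neg\delta'$, where $\delta'$ omits the conjuncts shared with $\beta_i(B)$. Implemented naively this would traverse the certificate $\delta_i(S)$, whose height can be up to $n+1$ by \autoref{certSize}, and would not be $\CO(1)$. I expect this to be the main obstacle, and I would handle it by recording, whenever the certificate of a block of $C/P_i$ is extended beyond the certificate of the enclosing block of $C/Q_i$, an auxiliary pointer to the corresponding sub-dag $\delta'$; the optimised step then only redirects the negated edge of the single new $\wedge$-node to this already-materialised node, again at $\CO(1)$ cost, so that the run time coincides with that of \autoref{coalgPT} with or without the optimisation.
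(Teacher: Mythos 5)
Your overall strategy coincides with the paper's: fix the array/refinable-partition/dag representation, observe that each formula constructor is an $\CO(1)$ pointer operation (with negation as an edge flag), and charge each allocation to a block that \autoref{coalgPT} creates anyway. The initialization and step \ref{defBetai1} are treated exactly as in the paper's proof.

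There is, however, one concrete gap in your treatment of step \ref{defDeltai1}. When a predecessor block $T\in C/P_i$ is split, exactly one of the resulting sub-blocks may consist of the states of $T$ with \emph{no} transition into $S$; in the efficient implementation of step \ref{defPi1} the underlying algorithm iterates only over $\pred(S)$, so it never visits the states of this residual sub-block, which silently inherits the index of $T$ in the refinable-partition structure. Your prescription ``for every new block allocate a node labelled $\fmod{F\chi_S^B(c(x))}$'' cannot be executed for that block as stated: you neither enumerate it nor have access to $F\chi_S^B(c(x))$ for any of its states $x$ without touching them, which would break the time bound. The paper closes this hole by showing that for such $x$ one has $F\chi_S^B(c(x))=F\chi_\emptyset^B(c(x))=F\chi_\emptyset^B(c(y))$ for an \emph{arbitrary} $y\in T$ (\autoref{noEdgeNoChiS} and \autoref{stableP}), and by overwriting the certificate stored at the inherited index of $T$ before the explicitly created sub-blocks receive their own entries --- one $\CO(1)$ update per refined predecessor block. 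Without this (or an equivalent) device your charging argument does not cover that sub-block. A second, smaller issue concerns \autoref{cancelConjunct}: your claim that the reduced formula $\delta'$ is ``already materialised'' is not accurate, since the dag only contains the left-nested chain $(\dots(\beta\wedge c_1)\wedge\dots)\wedge c_k$ and not the conjunction $c_1\wedge\dots\wedge c_k$ of the new conjuncts alone. Your auxiliary pointer can be made to work by maintaining that conjunction incrementally (one extra node per added conjunct, worst-case $\CO(1)$); the paper instead marks conjuncts with a Boolean flag and obtains amortised $\CO(1)$ because each conjunct is negated-and-marked at most once. Either repair is acceptable, but as written that step is not justified.
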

\noindent For a tighter run time analysis of the underlying partition refinement
algorithm, one additionally requires that~$F$ is equipped with a \emph{refinement
  interface}~\cite[Def.~6.4]{concurSpecialIssue}, which is based on a
given encoding of $F$-coalgebras in terms of \emph{edges} between
states (encodings serve only as data structures and have no direct
semantic meaning, in particular do not entail a semantic reduction to
relational structures). This notion of edge yields the same numbers (in
$\CO$-notation) as \autoref{coalgebraEdge} for
all functors considered.
All zippable functors we consider here have
refinement interfaces~\cite{concurSpecialIssue,coparFM19}. In presence
of a refinement interface, step \ref{defPi1} can be implemented
efficiently, with resulting overall run time
$\CO((m+n) \cdot \log n\cdot p(c))$ where $n=|C|$, $m$ is the number
of edges in the encoding of the input coalgebra $(C,c)$, and the
\emph{run-time factor} $p(c)$ is associated with the refinement interface. In
most instances, e.g.~for~$\Pow$, $\R^{(-)}$, one has $p(c) = 1$; in particular,
the generic algorithm recovers the run time of the Paige-Tarjan
algorithm.

\begin{rem}
  The claimed run time relies on close attention to a number of
  implementation details. This includes use of an efficient data
  structure for the
  partition~$C/P_i$~\cite{Knuutila2001,ValmariLehtinen08}; the other
  partition $C/Q_i$ is only represented implicitly in terms of a queue
  of blocks $S\subsetneqq B$ witnessing $P_i\subsetneqq Q_i$,
  requiring additional care when splitting blocks in the queue
  \cite[Fig.~3]{ValmariF10}. Moreover, grouping the elements of a
  block by $F3$ involves the consideration of a \emph{possible
    majority candidate}~\cite{ValmariF10}.
\end{rem}

\begin{theorem} \label{thmLogRunTime} On a coalgebra with $n$ states
  and $m$ transitions for a zippable set functor with a refinement
  interface with factor~$p(c)$, \autoref{algoCerts} runs in time
  $\CO((m+n)\cdot \log n\cdot p(c))$.
\end{theorem}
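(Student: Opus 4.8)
The plan is to derive the bound by combining the known run time of the underlying partition refinement procedure with the constant-overhead guarantee already established for the certificate-constructing extension.

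First I would invoke the complexity of \autoref{coalgPT} as proved in~\cite{concurSpecialIssue}: whenever $F$ is equipped with a refinement interface with run-time factor~$p(c)$, the partition refinement algorithm computes the final partition $C/\mathord{\sim} = C/P_i$ in time $\CO((m+n)\cdot \log n\cdot p(c))$, where $n=|C|$ and $m$ is the number of edges in the given encoding of the coalgebra. This is exactly the role of the refinement interface, which makes the expensive step~\ref{defPi1} (the intersection with $\ker(F\chi_S^B\cdot c)$) implementable within the stated bound; zippability of~$F$ is what guarantees that the partition so computed is indeed $C/\mathord{\sim}$.

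Second I would appeal to \autoref{runTimePreserved}, by which the extra assignments~\ref{defBetai1} and~\ref{defDeltai1} of \autoref{algoCerts} incur only constant run time overhead per corresponding operation of \autoref{coalgPT}. Concretely, with the data structures of \autoref{complexityAnalysis} --- arrays of length~$|C|$ storing the maps~$\beta$ and~$\delta$, the refinable-partition structure~\cite{ValmariLehtinen08} giving constant-time lookup of a state's block and of a block's certificate, and the shared formula dag in which each new node (a conjunction or an $F3$-modality) is allocated in constant time --- every formula-bookkeeping step attaches to an existing step of the base algorithm at $\CO(1)$ cost. Hence the two algorithms have the same asymptotic run time, which yields the claimed $\CO((m+n)\cdot \log n\cdot p(c))$ bound.

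The hard part is not in this composition but in the two ingredients it rests on, and the one point that genuinely needs checking is that the constant-overhead accounting of \autoref{runTimePreserved} remains valid in the presence of the tight, refinement-interface-based implementation of step~\ref{defPi1}: one must confirm that maintaining and updating the certificate dag never forces more than a constant amount of additional work at each point where the optimized partition refinement touches a block or an edge, so that the logarithmic factor coming from the Paige--Tarjan ``process the smaller half'' strategy is preserved.
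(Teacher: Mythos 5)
Your proposal matches the paper's own argument: the proof is exactly the composition of the $\CO((m+n)\cdot\log n\cdot p(c))$ bound for \autoref{coalgPT} from~\cite{concurSpecialIssue} with the constant-overhead guarantee of \autoref{runTimePreserved}. The additional data-structure details you mention are precisely those the paper delegates to the proof of \autoref{runTimePreserved}, so nothing is missing.
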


\section{Cancellative Functors}
\label{sec:cancellative}
Our use of binary modalities relates to the fact that, as observed
already by Paige and Tarjan, when splitting a block according to an
existing partition of a block~$B$ into $S\subseteq B$ and~$B\setminus S$,
it is not in general sufficient to look only at the successors
in~$S$. However, this does suffice for some transition types;
e.g.~Hopcroft's algorithm for deterministic automata~\cite{Hopcroft71}
and Valmari and Franceschinis' algorithm for weighted systems
(e.g.~Markov chains)~\cite{ValmariF10} both split only with respect
to~$S$. In the following, we exhibit a criterion on the level of
functors that captures that splitting w.r.t.~only $S$ is sufficient:
\begin{samepage}
\begin{defn} \label{funCancellative}
  A functor $F$ is \emph{cancellative} if the map
  \[
    \fpair{F\chi_{\set{1,2}},F\chi_{\set{2}}}\colon F3\to F2\times F2
  \]
  is injective.
\end{defn}
\end{samepage}
\noindent To understand the role of the above map, recall the function
$\chi_S^B\colon C\to 3$ from~\eqref{eqChi3} and note that
$\chi_{\set{1,2}}\cdot\chi_S^B=\chi_B$ and
$\chi_{\set{2}}\cdot\chi_S^B=\chi_S$, so the composite
$\fpair{F\chi_{\set{1,2}},F\chi_{\set{2}}}\cdot F\chi_S^B$ yields
information about the accumulated transition weights into~$B$ and~$S$
but not about the one into $B\setminus S$; the injectivity condition
means that for cancellative functors, this information suffices in the
splitting step for $S\subseteq B\subseteq C$.  The term
\emph{cancellative} stems from the respective property on monoids;
recall that a monoid~$M$ is \emph{cancellative} if $s + b_1 = s+b_2$
implies $b_1 = b_2$ for all $s,b_1,b_2\in M$.
\begin{proposition}
  \label{monoidValCancellative}
  The monoid-valued functor $M^{(-)}$ for a commutative monoid $M$ is
  cancellative if and only if~$M$ is a cancellative monoid.
\end{proposition}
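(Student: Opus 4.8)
The plan is to make the defining map of cancellativity for $M^{(-)}$ completely explicit and then read off both implications directly from the monoid cancellation law. Since $3 = \set{0,1,2}$ is finite, the finite-support requirement in~\eqref{eqFinSupport} is vacuous, so an element of $M^{(3)}$ is just a triple $(a,b,c) \in M^3$ recording the weights $\mu(0) = a$, $\mu(1) = b$, $\mu(2) = c$, and likewise $M^{(2)} \cong M^2$. First I would compute the two components of $\fpair{M^{(\chi_{\set{1,2}})}, M^{(\chi_{\set{2}})}}$ using the fibre-sum definition of $M^{(f)}$. The map $\chi_{\set{1,2}}\colon 3\to 2$ sends $0\mapsto 0$ and $1,2\mapsto 1$, so summing over fibres gives $M^{(\chi_{\set{1,2}})}(a,b,c) = (a,\,b+c)$; similarly $\chi_{\set{2}}$ sends $0,1\mapsto 0$ and $2\mapsto 1$, whence $M^{(\chi_{\set{2}})}(a,b,c) = (a+b,\,c)$. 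Thus the map whose injectivity defines cancellativity of $M^{(-)}$ (\autoref{funCancellative}) takes the explicit form
\[
  \fpair{M^{(\chi_{\set{1,2}})}, M^{(\chi_{\set{2}})}}\colon M^3 \to M^2\times M^2,
  \qquad
  (a,b,c) \longmapsto \big((a,\,b+c),\,(a+b,\,c)\big).
\]

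For the \emph{if} direction, I would assume $M$ is cancellative and take two triples $(a,b,c)$ and $(a',b',c')$ with the same image. Comparing the four coordinates yields $a = a'$, $c = c'$, $a+b = a'+b'$ and $b+c = b'+c'$. Substituting $a = a'$ into the third equation gives $a+b = a+b'$, and cancellativity of $M$ then forces $b = b'$; hence the two triples coincide and the map is injective, i.e.\ $M^{(-)}$ is cancellative. For the \emph{only if} direction, I would argue by contraposition: if $M$ is not cancellative, choose $s,b_1,b_2\in M$ with $s+b_1 = s+b_2$ but $b_1\neq b_2$. Then the two distinct triples $(s,b_1,s)$ and $(s,b_2,s)$ both map to $\big((s,\,s+b_1),\,(s+b_1,\,s)\big)$, using commutativity to rewrite $b_i+s = s+b_i$ and the hypothesis $s+b_1 = s+b_2$ to equate the second coordinates. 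Hence the map is not injective and $M^{(-)}$ is not cancellative.

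The only step requiring genuine care is the explicit fibre-sum computation, i.e.\ correctly identifying which elements of $3$ are merged by each characteristic map before summing their weights; everything after that is immediate. In particular, no approximation is involved: the failure of injectivity of $\fpair{M^{(\chi_{\set{1,2}})}, M^{(\chi_{\set{2}})}}$ corresponds \emph{exactly} to a witness of the monoidal cancellation law failing, which is what makes the equivalence an iff rather than a one-sided implication.
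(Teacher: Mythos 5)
Your proposal is correct and follows essentially the same route as the paper's proof: both make the map $\fpair{M^{(\chi_{\set{1,2}})},M^{(\chi_{\set{2}})}}$ explicit as $(a,b,c)\mapsto((a,b+c),(a+b,c))$ and read off both directions, using exactly the same witnesses $(s,b_1,s)$ and $(s,b_2,s)$ for the converse (the paper phrases that direction directly rather than by contraposition, and cancels in the coordinate $b+c$ rather than $a+b$, but these are immaterial differences).
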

\noindent
Hence, $\R^{(-)}$ is cancellative, but $\Powf$ is not.
Moreover, all signature functors are cancellative:
\begin{proposition} \label{cancellativeClosure}
  The class of cancellative functors
  contains the all constant functors as well as the identity functor,
  and it is
  closed under subfunctors, products, and coproducts. 
  \twnote{}
\end{proposition}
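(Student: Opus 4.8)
The plan is to prove each of the four closure properties in turn, in each case reducing injectivity of the defining map $\fpair{F\chi_{\set{1,2}},F\chi_{\set{2}}}\colon F3\to F2\times F2$ to either a trivial observation or the corresponding injectivity for the constituent functors. I will freely use the standing assumption from \autoref{trnkovahull} that set functors preserve injections, since several reductions below rely on it. Throughout, write $u := \fpair{F\chi_{\set{1,2}},F\chi_{\set{2}}}$ for the map whose injectivity is at stake, and recall $\chi_{\set{1,2}},\chi_{\set{2}}\colon 3\to 2$.

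First the base cases. For a constant functor $F X = A$, the map $u\colon A\to A\times A$ is the diagonal, which is injective; for the identity functor $F = \id$, the map $u\colon 3\to 2\times 2$ sends $0\mapsto(0,0)$, $1\mapsto(1,0)$, $2\mapsto(1,1)$ (reading off $\chi_{\set{1,2}}$ and $\chi_{\set{2}}$), and these three pairs are distinct, so $u$ is injective. For \emph{subfunctors} $F'\hookrightarrow F$ with $F$ cancellative, the map $u$ for $F'$ is the restriction of the map $u$ for $F$ along the componentwise injection $F'3\monoto F3$ and $F'2\times F'2\monoto F2\times F2$; a restriction of an injective map to a subset is injective, so $F'$ is cancellative.

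The two remaining cases, products and coproducts, are the place where the real (but still routine) work sits. For a product $F = F_1\times F_2$ we have $F3 = F_13\times F_23$ and the map $u$ for $F$ decomposes, up to reindexing of the product factors, as $u_1\times u_2$ where $u_k$ is the corresponding map for $F_k$; since $Ff = F_1 f\times F_2 f$ on each component, a product of injective maps is injective, giving cancellativity of $F$. For a coproduct $F = F_1 + F_2$, note that $F3 = F_13 + F_23$ and that $F\chi_{\set{1,2}}$ and $F\chi_{\set{2}}$ act componentwise, so $u$ restricts to $u_k$ on each summand $F_k3$. The one point needing care is that elements coming from different summands are not identified: since $\chi_{\set{1,2}},\chi_{\set{2}}$ are maps into $2$, both $F_1\chi$ and $F_2\chi$ land in the respective summands $F_12$ and $F_22$ of $F2 = F_12 + F_22$, so an element of $F_13$ and an element of $F_23$ always produce outputs lying in distinct summands of $F2\times F2$ and can never collide. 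Combining this separation-of-summands observation with cancellativity of each $u_k$ yields injectivity of $u$.

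I expect the coproduct case to be the main (minor) obstacle, precisely because injectivity there is not merely injectivity on each summand but also requires the cross-summand non-collision argument; this in turn rests on the fact that the defining maps $\chi_{\set{1,2}}$ and $\chi_{\set{2}}$ are maps into $2$ and hence preserve the coproduct structure of $F$ at the target. Everything else is a direct unwinding of definitions together with the elementary facts that products and restrictions of injective maps are injective.
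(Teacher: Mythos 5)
Your proposal is correct and follows essentially the same route as the paper's proof: diagonal/explicit computation for the base cases, restriction along the (componentwise injective) subfunctor inclusion, componentwise injectivity for products, and the same summand-separation observation for coproducts (which the paper also relies on, if more tersely). The only difference is that you treat binary products and coproducts while the paper handles arbitrary set-indexed families (needed, e.g., for signature functors), but your arguments generalize verbatim.
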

\noindent
For example, $\Dist$ is cancellative, but $\Pow$ is not because of its subfunctor
$\Powf$.
\begin{rem} \label{cancellativeVsZippable}
  Cancellative functors are neither closed under quotients nor under composition.
  Zippability and cancellativity are independent properties.
  Zippability in conjunction with cancellativity implies $m$-zippability for all
  $m\in \N$, the
  $m$-ary variant~\cite{KoenigEA20} of zippability.
\end{rem}

\begin{theorem} \label{patchyTheorem}
  If $F$ is a cancellative functor, 
  $\fmod{F\chi_S^B(c(x))}(\delta_i(S),\beta_i(B))$
  in \autoref{algoCerts}
  can be replaced with
  \(
    \fmod{F\chi_S^C(c(x))}(\delta_i(S),\top)
  \). Then, the algorithm still correctly computes certificates in the given
  $F$-coalgebra $(C,c)$.
\end{theorem}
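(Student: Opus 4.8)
The plan is to re-establish the certificate invariant~\eqref{eqCertCorrect} for the modified algorithm. Since the replacement touches only the certificate constructed in step~\ref{defDeltai1} of \autoref{algoCerts} and leaves the partition refinement (step~\ref{defPi1}) untouched, the blocks produced are exactly as before, and I can run the same induction on~$i$ as in the proof of \autoref{algoCertsCorrect}. The only case that differs is the certificate assigned to a freshly split block $[x]_{P_{i+1}}\subseteq T$, where $T=[x]_{P_i}\in C/P_i$ and $S\subsetneqq B\in C/Q_i$. By the induction hypothesis $\semantics{\delta_i(S)}=S$, so \autoref{lemF3CoalgSem} applies to the pair $(\delta_i(S),\top)$ (as $\semantics{\delta_i(S)}=S\subseteq C=\semantics{\top}$) and yields $\semantics{\fmod{F\chi_S^C(c(x))}(\delta_i(S),\top)}=\{x'\in C\mid F\chi_S^C(c(x'))=F\chi_S^C(c(x))\}$. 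On the other hand, step~\ref{defPi1} carves $[x]_{P_{i+1}}$ out of $T$ as $\{x'\in T\mid F\chi_S^B(c(x'))=F\chi_S^B(c(x))\}$. Thus everything reduces to one claim: \emph{on each block $T\in C/P_i$, the maps $F\chi_S^B\cdot c$ and $F\chi_S^C\cdot c$ have the same kernel.}

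I would derive this claim from two invariants of \autoref{coalgPT}. Since $P_i\subseteq P_0=\ker(F!\cdot c)$, the map $F!\cdot c$ is constant on each block of $C/P_i$; writing $j\colon 1\to 2$ with $j(0)=1$, so that $\chi_C=j\cdot{!}$, the map $F\chi_C\cdot c=Fj\cdot F!\cdot c$ is then constant on blocks as well. The second, and crucial, invariant is \emph{stability}: $P_i\subseteq\ker(F\chi_B\cdot c)$ for every $B\in C/Q_i$. I would prove this by induction on~$i$. It holds trivially for $Q_0=\set{C}$, where $\chi_C$ is constant. For the step, the unchanged blocks of $C/Q_{i+1}$ inherit stability via $P_{i+1}\subseteq P_i$, while for the two new blocks one uses $\chi_S=\chi_{\set{2}}\cdot\chi_S^B$ and $\chi_{B\setminus S}=\chi_{\set{1}}\cdot\chi_S^B$ (where $\chi_{\set{1}}\colon 3\to 2$ detects the value~$1$), which give $\ker(F\chi_S^B\cdot c)\subseteq\ker(F\chi_S\cdot c)\cap\ker(F\chi_{B\setminus S}\cdot c)$ and hence stability of $P_{i+1}=P_i\cap\ker(F\chi_S^B\cdot c)$ with respect to $S$ and $B\setminus S$. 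I expect this stability invariant to be the main obstacle, as it is the coalgebraic counterpart of the Paige--Tarjan stability property that renders the information about the surrounding block~$B$ redundant once the predecessor block~$T$ is fixed.

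Finally, I would close the claim using cancellativity (\autoref{funCancellative}). From the identities $\chi_{\set{1,2}}\cdot\chi_S^B=\chi_B$, $\chi_{\set{2}}\cdot\chi_S^B=\chi_S$ and $\chi_{\set{1,2}}\cdot\chi_S^C=\chi_C$, $\chi_{\set{2}}\cdot\chi_S^C=\chi_S$, functoriality gives $\fpair{F\chi_{\set{1,2}},F\chi_{\set{2}}}\cdot F\chi_S^B\cdot c=\fpair{F\chi_B,F\chi_S}\cdot c$ and $\fpair{F\chi_{\set{1,2}},F\chi_{\set{2}}}\cdot F\chi_S^C\cdot c=\fpair{F\chi_C,F\chi_S}\cdot c$. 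As $\fpair{F\chi_{\set{1,2}},F\chi_{\set{2}}}$ is injective, the kernel of $F\chi_S^B\cdot c$ coincides with that of $\fpair{F\chi_B,F\chi_S}\cdot c$, and the kernel of $F\chi_S^C\cdot c$ with that of $\fpair{F\chi_C,F\chi_S}\cdot c$. Restricting to a block $T\in C/P_i$, both $F\chi_B\cdot c$ and $F\chi_C\cdot c$ are constant by the two invariants, so on $T$ each of these kernels collapses to that of $F\chi_S\cdot c$; in particular they agree, which is the claim. Consequently $\semantics{\delta_i(T)\wedge\fmod{F\chi_S^C(c(x))}(\delta_i(S),\top)}=[x]_{P_{i+1}}$, the induction for~\eqref{eqCertCorrect} goes through, and $\delta_i$ again assigns correct certificates. (The same computation shows that replacing $\chi_S^B$ by $\chi_S^C$ even in step~\ref{defPi1} would not change the partition, matching the intuition that for cancellative functors splitting with respect to~$S$ alone suffices.)
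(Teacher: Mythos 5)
Your proposal is correct and follows essentially the same route as the paper: the paper's \autoref{optimizedPcancel} establishes exactly your key claim (via the identity $\fpair{F\chi_B,F\chi_S}=\fpair{F\chi_{\set{1,2}},F\chi_{\set{2}}}\cdot F\chi_S^B$, injectivity from cancellativity, and the stability invariant $P_i\subseteq\ker(F\chi_B\cdot c)$), and \autoref{algoPatchCorrect} then runs the same induction on~\eqref{eqCertCorrect}. The only cosmetic differences are that the paper proves stability by citing an invariant of the underlying partition refinement algorithm where you give a direct induction, and that it packages $\fmod{F\chi_S^C(c(x))}(\delta_i(S),\top)$ as a unary $F2$-modality via $\chi_S^C=j_2\cdot\chi_S$ rather than routing the comparison of $\ker(F\chi_S^C\cdot c)$ with $\ker(F\chi_S\cdot c)$ through cancellativity a second time.
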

Note that in this optimized algorithm, the computation of $\beta$ can
be omitted because it is not used anymore. Hence, the resulting
formulae only involve $\wedge$, $\top$, and modalities from the
set~$F3$ (with the second parameter fixed to $\top$). These modalities
are equivalently unary modalities induced by elements of $F2$, which
we term \emph{$F2$-modalities}; hence, the corresponding
Hennessy-Milner Theorem (\autoref{hennessyMilner}) adapts to $F2$ for
cancellative functors, as follows:
\begin{corollary}
  For zippable and cancellative $F$, states in an $F$-coalgebra are
  behaviourally equivalent iff they agree on modal formulae built
  using $\top$, $\wedge$, and unary $F2$-modalities.
\end{corollary}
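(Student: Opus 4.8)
The plan is to obtain this as the cancellative analogue of the Hennessy–Milner result of \autoref{hennessyMilner}, substituting the optimized certificate construction of \autoref{patchyTheorem} for the general one. The \emph{only-if} direction (behaviourally equivalent states agree) is immediate and requires nothing new: the formulae in question are ordinary coalgebraic modal formulae, merely restricted to the signature of unary $F2$-modalities together with $\top$ and $\wedge$, so adequacy (\autoref{bisimInvariant}) already guarantees that behaviourally equivalent states satisfy exactly the same such formulae.

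For the \emph{if} direction I would argue contrapositively, showing that inequivalent states are separated by a formula in the fragment. Since $F$ is cancellative, \autoref{patchyTheorem} licenses running the optimized version of \autoref{algoCerts}; on the finite coalgebra its termination and correctness are those of \autoref{algoCertsCorrect}, using zippability. This produces, for the block $[x]_\sim$, a certificate $\delta$ with $\semantics{\delta} = [x]_\sim$. If $x \not\sim y$, then $y \notin [x]_\sim = \semantics{\delta}$, so $\delta$ distinguishes $x$ from $y$; equivalently, agreement on all such certificates forces $x \sim y$. As in \autoref{hennessyMilner}, this is understood for finite coalgebras, which is exactly the setting in which the construction applies.

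The one step carrying genuine content, and the part I expect to be the main obstacle, is verifying that $\delta$ really lies in the claimed fragment, i.e.~that each modality produced is a genuine unary $F2$-modality. By the remark following \autoref{patchyTheorem}, the optimized certificates use only $\top$, $\wedge$, and the $F3$-modalities $\fmod{t}(-,\top)$, so it remains to translate these. With the second argument $\top$ (the full predicate $C$) we have $\chi_{\semantics{\delta}\cap C}^{C} = \chi_{\semantics{\delta}}^{C}$, whose image avoids $0$ and hence factors through $j\colon \set{1,2}\monoto 3$. Composing with $\chi_{\set{2}}\colon 3\to 2$ yields exactly $\chi_{\semantics{\delta}}\colon C\to 2$ (using $\chi_{\set{2}}\cdot\chi_S^B = \chi_S$ as already noted), and $\chi_{\set{2}}$ restricts to a bijection $\set{1,2}\to 2$. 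Because $F$ preserves injective maps (\autoref{trnkovahull}), $F\chi_{\semantics{\delta}}^{C}$ and $F\chi_{\semantics{\delta}}$ carry the same information for elements in the image of $Fj$ — which is where the construction produces its modality indices — so $\fmod{t}(\delta,\top)$ is equivalent to the unary $F2$-modality $\fmod{s}(\delta)$ with $s = F\chi_{\set{2}}(t)$.

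I expect this $F3$-to-$F2$ matching to be the only real work. The initial output modalities $\fmod{F!(c(x))}$ are the degenerate case with both arguments fixed to $\top$ and likewise reduce to the fragment, and no negation ever enters the recursion defining $\delta$ (it only conjoins previous $\delta$-certificates with modalities applied to $\delta$'s and $\top$), so the certificates indeed use only $\top$, $\wedge$, and unary $F2$-modalities. Everything else is a direct instance of results already established in the excerpt.
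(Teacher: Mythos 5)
Your proposal is correct and follows essentially the same route as the paper: adequacy (\autoref{bisimInvariant}) for the forward direction, and for the converse the certificates produced by the optimized algorithm of \autoref{patchyTheorem}, together with the observation that every modality it emits has second argument $\top$ and index in the image of $Fj_2\colon F2\monoto F3$ (with $j_2(0)=1$, $j_2(1)=2$), hence is a unary $F2$-modality, and that no negation enters the recursion. Your $F3$-to-$F2$ matching via $s=F\chi_{\set{2}}(t)$ and injectivity-preservation is exactly the content of the paper's \autoref{notationF2Mod} and \autoref{lemF2CoalgSem}, read in the other direction.
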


\section{Domain-Specific Certificates}
\label{domainSpecific}
On a given specific system type, one is typically interested in
certificates and distinguishing formulae expressed via modalities
whose use is established in the respective domain, e.g.~$\Box$
and~$\Diamond$ for transition systems. We next describe how the
generic $F3$ modalities can be rewritten to domain-specific ones in a
postprocessing step. The domain-specific modalities will not in
general be equivalent to $F3$-modalities, but still yield
certificates.  \twnote{}

\begin{defn}\label{remBoolCombination}
  The \emph{Boolean closure} $\bar\Lambda$ of a modal signature
  $\Lambda$ has as $n$-ary modalities
  propositional combinations of atoms of the form
  $\hearts(i_1,\dots,i_k)$, for $\arity{\hearts}{k}\in\Lambda$, where
  $i_1,\dots,i_k$ are propositional combinations of elements of
  $\{1,\ldots,n\}$. Such a modality~$\arity{\lambda}{n}$ is
  interpreted by predicate liftings
  $\semantics{\lambda}_X\colon (2^X)^n \to FX$ defined inductively in
  the obvious way.
\end{defn}
For example, the boolean closure of $\Lambda = \{\arity{\Diamond}{1}\}$ contains
the unary modality $\Box = \neg\Diamond\neg$.

\begin{defn}\label{domainCert}
  Given a modal signature $\Lambda$ for a functor $F$, a
  \emph{domain-specific interpretation} consists of functions
  $\tau\colon F1 \to \bar\Lambda$ and
  $\lambda\colon F3 \to \bar\Lambda$ assigning to each $o \in F1$ a
  nullary modality~$\tau_o$ and to each $t\in F3$ a binary
  modality~$\lambda_t$ such that the predicate liftings
  $\semantics{\tau_o}_X\in 2^{FX}$ and
  $\semantics{\lambda_t}_X\colon (2^X)^2 \to 2^{FX}$
  satisfy\smnote{}
  \[
    \semantics{\tau_o}_1 = \{o\}
    \quad\text{(in $2^{F1}$)}
    \quad\text{ and }\quad
    [t]_{F\chi_{\{1,2\}}} \cap \semantics{\lambda_t}_3(\{2\},\{1\})
    =
    \{t\}
    \quad\text{(in $2^{F3}$)}.
  \]
  (Recall that $\chi_{\{1,2\}}\colon 3\to 2$ is the characteristic
  function of $\{1,2\}\subseteq 3$, and
  $[t]_{F\chi_{\{1,2\}}}\subseteq F3$ denotes the equivalence class of
  $t$ w.r.t.~$F\chi_{\{1,2\}}\colon F3 \to F2$.)
\end{defn}
\noindent Thus,~$\tau_o$ holds precisely at states with output
behaviour $o\in F1$.  Intuitively, $\lambda_t(\delta,\rho)$ describes
the refinement step of a predecessor block $T$ when splitting
$B:=\semantics{\delta}\cup\semantics{\rho}$ into
$S:=\semantics{\delta}$ and $B\setminus S:=\semantics{\rho}$
(\autoref{fig:parttree}), which translates into the arguments $\{2\}$
and $\{1\}$ of $\semantics{\lambda_t}_3$. In the refinement step, we
know from previous iterations that all elements have the same
behaviour w.r.t.~$B$. This is reflected in the intersection with
$[t]_{F\chi_{\set{1,2}}}$. The axiom guarantees that $\lambda_t$
characterizes $t\in F3$ uniquely, but only within the equivalence
class representing a predecessor block.  Thus, $\lambda_t$ can be much
smaller than equivalents of $\fmod{t}$
(cf.~\autoref{examplePowF3Mod}):

\begin{expl}\label{exDomainSpecInt}
  \begin{enumerate}
  \item\label{F3Pow} For $F=\Pow$, we have a domain-specific interpretation over
    the modal signature $\Lambda = \set{ \Diamond/1}$.
    For $\emptyset, \{0\}\in\Pow 1$, take
    $\tau_{\set 0} = \Diamond\top$ and
    $\tau_\emptyset = \neg\Diamond\top$. For $t \in \Pow 3$, we put
  \[
    \begin{array}{rl@{~~~}l@{\qquad}rl@{~~~}l}
    \lambda_t(\delta,\rho) &=
        \neg\Diamond \rho &\text{if }2\in t\notni 1
    &
    \lambda_t(\delta,\rho) &=
        \Diamond \delta \wedge \Diamond\rho &\text{if }2\in t \ni 1
                          \\
    \lambda_t(\delta,\rho) &=
        \neg\Diamond \delta &\text{if }2\notin t\ni 1
                          &
    \lambda_t(\delta,\rho) &=
        \top &\text{if }2\not\in t \notni 1.
    \end{array}
  \]
  The certificates obtained via this translation are precisely the
  ones generated in the example using the Paige-Tarjan
  algorithm, cf.~\eqref{edgeCasesFormula}, with $\rho$ in lieu of
  $\beta\wedge\neg\delta$. %
\item\label{dsiSignature} For a signature (functor) $\Sigma$, take
  $\Lambda=\{\arity{\sigma}{0}\mid \arity{\sigma}{n}\in \Sigma\}\cup
  \{\arity{\gradI}{1}\mid I\in\Powf (\N)\}$. We
  interpret~$\Lambda$ by the predicate liftings\smnote{}
    \begin{align*}
      \semantics{\sigma}_X & = \{\sigma(x_1, \ldots, x_n) \mid x_1,
      \ldots, x_n \in X\} \subseteq \Sigma X, \\
      \semantics{\gradI}(S) &= \{\sigma(x_1,\ldots,x_n) \in \Sigma X \mid
                              \forall i \in \N\colon i\in I \leftrightarrow (1\le i\le n~\wedge~ x_i \in S)  \}.
    \end{align*}
    Intuitively, $\gradI\,\phi$ states that the $i$th successor satisfies $\phi$
    iff $i\in I$.
    We then have a domain-specific interpretation
    $(\tau,\lambda)$ given by $\tau_o := \sigma$ for
    $o = \sigma(0,\ldots,0) \in \Sigma 1$ and
    $\lambda_t(\delta,\rho) := \grad{I}\delta$ for $t = \sigma(x_1,\ldots,x_n) \in \Sigma 3$ and $I =
    \{i\in \{1,\ldots,n\}\mid x_i = 2\}$.

  \item\label{dsiMonoid} For a monoid-valued functor~$M^{(-)}$, take
    $\Lambda = \{ \arity{\monmod{m}}{1} \mid m \in M \}$, interpreted
    by the predicate liftings
    $\semantics{{\monmod{m}}}_X\colon 2^X\to 2^{M^{(X)}}$ given by
    \[
      \semantics{{\monmod{m}}}_X(S)= \{ \mu \in M^{(X)} \mid
      m =
      \textstyle\sum_{x \in S}\mu(x)\}.
    \]
    A formula $\monmod{m}\,\delta$ thus states that the accumulated weight
    of the successors satisfying~$\delta$ is exactly~$m$. A
    domain-specific interpretation $(\tau,\lambda)$ is then given by
    $\tau_o = \monmod{o(0)}\,\top$ for $o \in M^{(1)}$ and
    \( \lambda_t(\delta,\rho) = \grad{t(2)}\,\delta \wedge
    \grad{t(1)}\,\rho \) for $t\in M^{(3)}$. In case~$M$ is
    cancellative, we can also simply put $\lambda_t(\delta,\rho) = \grad{t(2)}\,\delta$.

  \item\label{dsiMarkov} For labelled Markov chains,
    i.e.~$FX=(\Dist X +1)^A$, let
    $\Lambda = \{\arity{\fpair{a}_p}{1}\mid a\in A, p\in [0,1]\}$,
    where $\fpair{a}_p\phi$ denotes that on input $a$, the next state
    will satisfy~$\phi$ with probability at least~$p$, as in cited work by
    Desharnais \etal~\cite{desharnaisEA02}. This gives rise to the
    interpretation:
    \[
      \tau_o =
      \bigwedge_{\substack{a\in A\\ o(a) \in \Dist 1}} \fpair{a}_{1}\top
      \wedge
      \bigwedge_{\substack{a\in A\\ o(a) \in 1}} \neg \fpair{a}_{1}\top
      \qquad
      \lambda_t(\delta,\rho)
      = \bigwedge_{\substack{a\in A\\t(a)\in \Dist 3}}(\fpair{a}_{t(a)(2)}\,\delta\wedge \fpair{a}_{t(a)(1)}\,\rho)
    \]
  \end{enumerate}
\end{expl}

\noindent Given a domain-specific interpretation $(\tau,\lambda)$ for
a modal signature~$\Lambda$ for the functor~$F$, we can postprocess
certificates~$\phi$ produced by \autoref{algoCerts} by replacing the
modalities $\fmod t$ for $t \in F3$ according to the translation~$T$
recursively defined by the following clauses for modalities and by
commutation with propositional operators:
\twnote{}
\[
  T\big(\fmod{t}(\top,\top)\big) = \tau_{F!(t)}
  \qquad
  T\big(\fmod{t}(\delta,\beta))
  =
  \lambda_t\big(T(\delta), T(\beta) \wedge \neg T(\delta)\big).
\]
Note that one can replace $T(\beta)\wedge\neg T(\delta)$ with $T(\beta)\wedge
\neg T(\delta')$ for the optimized $\delta'$ from
\autoref{cancelConjunct}; the latter conjunction has essentially the same size as $T(\delta)$.

\begin{proposition}\label{domainCertMainThm}
  \smnote{}
  For every certificate $\phi$ of a behavioural equivalence class of
  a given co\-al\-ge\-bra produced by either \autoref{algoCerts}
  or its optimization (\autoref{patchyTheorem}),~$T(\phi)$ is also a certificate
  for that class.%
\end{proposition}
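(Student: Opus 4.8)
The plan is to prove the stronger statement that the translation~$T$ preserves the denotation of \emph{every} certificate arising during the run of \autoref{algoCerts}, namely that $\semantics{T(\delta_i(S))}=S$ for all $S\in C/P_i$ and $\semantics{T(\beta_i(B))}=B$ for all $B\in C/Q_i$, by induction on the iteration count~$i$. Since a returned certificate $\phi$ of a class $[x]_\sim$ is some $\delta_i([x]_\sim)$ at termination, and $\semantics{\phi}=[x]_\sim$ by \autoref{algoCertsCorrect}, this yields $\semantics{T(\phi)}=[x]_\sim=\semantics{\phi}$, i.e.~$T(\phi)$ is again a certificate. Note that such denotational preservation does \emph{not} hold for arbitrary formulae: it relies on the special shape of the certificates together with a stability property of the partitions, reflecting that the axiom on $\lambda_t$ (\autoref{domainCert}) pins down~$t$ only \emph{within} its $F\chi_{\{1,2\}}$-class.

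The central auxiliary fact is a \emph{stability invariant}: for every~$i$ and every $B\in C/Q_i$ one has $P_i\subseteq\ker(F\chi_B\cdot c)$, i.e.~$F\chi_B\cdot c$ is constant on each block of $C/P_i$. I would prove this by a separate induction on~$i$. For $i=0$ the only block of $C/Q_0$ is $C$, and $\chi_C$ is constant, so $F\chi_C\cdot c$ is constant. For the step, where $B$ is split into $S$ and $B\setminus S$ via $\chi_S^B$, I use the factorizations $\chi_S=\chi_{\{2\}}\cdot\chi_S^B$, $\chi_{B\setminus S}=\chi_{\{1\}}\cdot\chi_S^B$ and $\chi_B=\chi_{\{1,2\}}\cdot\chi_S^B$: since $P_{i+1}=P_i\cap\ker(F\chi_S^B\cdot c)\subseteq\ker(F\chi_S^B\cdot c)$, post-composing with $F\chi_{\{2\}}$ resp.~$F\chi_{\{1\}}$ gives $P_{i+1}\subseteq\ker(F\chi_S\cdot c)$ and $P_{i+1}\subseteq\ker(F\chi_{B\setminus S}\cdot c)$, while for the unchanged blocks $P_{i+1}\subseteq P_i$ and the inductive hypothesis suffice.

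For the main induction I would treat the base case and the translation clauses in turn. For $\delta_0([x]_{P_0})=\fmod{F!(c(x))}=\fmod{Fj_1(F!(c(x)))}(\top,\top)$, the nullary clause gives $T(\delta_0([x]_{P_0}))=\tau_{F!(Fj_1(F!(c(x))))}=\tau_{F!(c(x))}$ (using $!\cdot j_1=\id_1$); by $\semantics{\tau_o}_1=\{o\}$ and naturality of $\semantics{\tau_o}$ along $!\colon C\to 1$ this denotes $\{y\mid F!(c(y))=F!(c(x))\}=[x]_{P_0}$. The $\beta$-clauses and the unchanged $\delta$-clause are immediate from the inductive hypothesis, as $T$ commutes with the Boolean connectives. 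The essential clause is the refined one, $\delta_{i+1}([x]_{P_{i+1}})=\delta_i([x]_{P_i})\wedge\fmod{t}(\delta_i(S),\beta_i(B))$ with $t=F\chi_S^B(c(x))$, whose translation is $T(\delta_i([x]_{P_i}))\wedge\lambda_t\big(T(\delta_i(S)),\,T(\beta_i(B))\wedge\neg T(\delta_i(S))\big)$. By the inductive hypothesis the three denotations are $[x]_{P_i}$, $S$ and $B\setminus S$; naturality of $\semantics{\lambda_t}$ along $\chi_S^B\colon C\to 3$ (with $(\chi_S^B)^{-1}[\{2\}]=S$ and $(\chi_S^B)^{-1}[\{1\}]=B\setminus S$) rewrites the $\lambda_t$-conjunct as $\{y\mid F\chi_S^B(c(y))\in\semantics{\lambda_t}_3(\{2\},\{1\})\}$. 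For $y\in[x]_{P_i}$ the stability invariant gives $F\chi_B(c(y))=F\chi_B(c(x))$, i.e.~$F\chi_S^B(c(y))\in[t]_{F\chi_{\{1,2\}}}$, so the axiom $[t]_{F\chi_{\{1,2\}}}\cap\semantics{\lambda_t}_3(\{2\},\{1\})=\{t\}$ reduces the condition to $F\chi_S^B(c(y))=t$. Intersecting with $[x]_{P_i}$ yields exactly $\{y\in[x]_{P_i}\mid F\chi_S^B(c(y))=t\}=[x]_{P_{i+1}}$, as required.

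For the optimization (\autoref{patchyTheorem}) the same scheme applies to $\fmod{s}(\delta_i(S),\top)$ with $s=F\chi_S^C(c(x))$, whose translation is $\lambda_s(T(\delta_i(S)),\neg T(\delta_i(S)))$ and whose $\lambda_s$-conjunct, by naturality along $\chi_S^C\colon C\to 3$, denotes $\{y\mid F\chi_S^C(c(y))\in\semantics{\lambda_s}_3(\{2\},\{1\})\}$; here $\chi_{\{1,2\}}\cdot\chi_S^C=\chi_C$ is constant, so membership in $[s]_{F\chi_{\{1,2\}}}$ is automatic and the axiom applies at once, giving $\{y\mid F\chi_S^C(c(y))=s\}$. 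Identifying this with $[x]_{P_{i+1}}=\{y\in[x]_{P_i}\mid F\chi_S^B(c(y))=t\}$ is exactly where cancellativity enters: for $y\in[x]_{P_i}$ stability gives $F\chi_B(c(y))=F\chi_B(c(x))$, so injectivity of $\fpair{F\chi_{\{1,2\}},F\chi_{\{2\}}}$ (\autoref{funCancellative}) reduces $F\chi_S^B(c(y))=t$ to $F\chi_S(c(y))=F\chi_S(c(x))$, which coincides with $F\chi_S^C(c(y))=s$ because $\chi_S^C$ factors through $\chi_S$ by an injection, preserved by~$F$. I expect the main obstacle to be precisely this interplay: because the axiom makes $\lambda_t$ sharp only within an $F\chi_{\{1,2\}}$-class, the translation is \emph{not} denotation-preserving on general formulae, and the whole argument hinges on first isolating and proving the stability invariant and then checking that, for states in a predecessor block, the already-certified $B$-behaviour places $F\chi_S^B(c(y))$ in exactly the equivalence class where $\lambda_t$ characterizes~$t$.
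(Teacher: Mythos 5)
Your proof is correct and follows essentially the same route as the paper's: an induction on the iteration index showing that $T$ preserves the denotations of all $\delta_i$ and $\beta_i$, combining naturality of the predicate liftings along $\chi_S^B$ with the stability invariant $P_i\subseteq\ker(F\chi_{B'}\cdot c)$ so that the axiom on $\lambda_t$ applies within the class $[t]_{F\chi_{\{1,2\}}}$. The only differences are organizational: the paper packages the naturality-plus-axiom step as a separate lemma and imports stability from an invariant of the underlying partition refinement algorithm, whereas you prove stability by a direct induction on $i$ and handle the cancellative case inline rather than via \autoref{optimizedPcancel}.
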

Thus, the domain-specific modal signatures also inherit a Hennessy-Milner Theorem.

\begin{example}\label{exCertMarkov}
  For labelled Markov chains ($FX=(\Dist X +1)^A$) and the
  interpretation via the modalities $\fpair{a}_p$
  (\itemref{exDomainSpecInt}{dsiMarkov}), this yields certificates
  (thus in particular distinguishing formulae) in run time
  $\CO(|A|\cdot m\cdot \log n)$, with the same bound on formula
  size. Desharnais \etal~describe an
  algorithm~\cite[Fig.~4]{desharnaisEA02} that computes distinguishing
  formulae in the negation-free fragment of the same logic (they note
  also that this fragment does not suffice for certificates). They do
  not provide a run-time analysis, but the nested loop structure
  indicates that the asymptotic complexity is roughly $|A|\cdot n^4$.
\end{example}

\twnote[inline]{}

\section{Worst Case Tree Size of Certificates}
\label{worstcase}
In the complexity analysis (\autoref{complexityAnalysis}), we have
seen that certificates -- and thus also distinguishing formulae --
have dag size $\CO(m\cdot \log n + n)$ on input coalgebras with~$n$
states and~$m$ transitions. However, when formulae are written in the
usual linear way, multiple occurrences of the same subformula lead to
an exponential blow up of the formula size in this sense, which for
emphasis we refer to as the \emph{tree size}.

\takeout{}%

Figueira and Gor{\'{\i}}n~\cite{FigueiraG10} show that exponential
tree size is inevitable even for distinguishing formulae. The proof is
based on winning strategies in bisimulation games, a technique that is
also applied in other results on lower bounds on formula
size~\cite{FrenchEA13,AdlerImmermanLics01,AdlerImmermanTocl03}.

\begin{oprob}%
  Do states in $\R^{(-)}$-coalgebras generally have certificates of
  subexponential tree size in the number of states? If yes, can small
  certificates be computed efficiently?
\end{oprob}
We note that for another cancellative functor, the answer is
well-known: On deterministic automata, i.e.~coalgebras for
$FX = 2\times X^A$, the standard minimization algorithm constructs
distinguishing words of linear length.

\begin{rem}
\label{cleavelandMinimal}
Cleaveland~\cite[p.~368]{Cleaveland91} also mentions that minimal
distinguishing formulae may be exponential in size, however for a
slightly different notion of minimality: a formula~$\phi$
distinguishing $x$ from $y$ is \emph{minimal} if no $\phi$
obtained by replacing a non-trivial subformula of $\phi$ with the
formula $\top$ distinguishes $x$ from $y$.  This is weaker than
demanding that the formula size of $\phi$ is as small as
possible. For example, in the transition system \smnote{}
\begin{center}
  \hspace{8mm}
  \begin{tikzpicture}[coalgebra,x=1.5cm,baseline=(x.base)]
    \begin{scope}[every node/.append style={
        state,
        label distance=-.5mm,
        outer sep=3pt,
        inner sep=0pt,
        text depth=0pt,
        font=\normalsize,%
      }
      ]
      \node[label=above:$x$] (x)  at (0, 0) {$\bullet$};
      \node (x1)  at (1, 0) {$\bullet$};
      \begin{scope}[xshift=3cm]
        \node[label=above:$y$]
        (y 0) at (0,0) {$\bullet$};
        \foreach \n in {1,3} {
          \node (y \n) at (\n,0) {$\bullet$};
        }
        \node (y 2) at (2,0) {$\cdots$};
      \end{scope}
    \end{scope}
    \path[path with edges]
      (x) edge (x1)
      (x) edge[out=150,in=200,looseness=6.5,overlay] (x)
      (y 0) edge (y 1)
      (y 1) edge[shorten >=1mm,-] (y 2)
      (y 2) edge[shorten <=1mm] (y 3)
      ;
    \draw [decorate,decoration={brace,amplitude=3pt,raise=-1pt},yshift=0pt]
    (y 1.north east) -- node[font=\normalsize,yshift=6pt] {$n$}
    (y 3.north west);
  \end{tikzpicture}
  \qquad\qquad for $n\in \N$,
\end{center}
the formula $\phi = \Diamond^{n+2}\top$ distinguishes $x$ from $y$
and is minimal in the above sense. However,~$x$ can in fact be
distinguished from~$y$ in size $\CO(1)$,
by the formula~$\Diamond \neg\Diamond\top$.
\end{rem}

\section{Conclusions and Further Work}

We have presented a generic algorithm that computes distinguishing
formulae for behaviourally inequivalent states in state-based systems
of various types, cast as coalgebras for a functor capturing the
system type. Our algorithm is based on coalgebraic partition
refinement~\cite{concurSpecialIssue}, and like that algorithm runs in
time $\CO((m+n)\cdot \log n \cdot p(c))$, with a functor-specific
factor $p(c)$ that is $1$ in many cases of interest. Independently of
this factor, the distinguishing formulae constructed by the algorithm
have dag size $\CO(m\cdot \log n + n)$; they live in a dedicated
instance of coalgebraic modal logic~\cite{Pattinson04,Schroder08},
with binary modalities extracted from the type functor in a systematic
way.
We have shown that for \emph{cancellative} functors, the construction
of formulae and, more importantly, the logic can be simplified,
requiring only unary modalities and conjunction. We have also
discussed how distinguishing formulae can be translated into a more
familiar domain-specific syntax (e.g. Hennessy-Milner logic for
transition systems).

There is an open source implementation of the underlying partition refinement
algorithm~\cite{coparFM19}, which may serve as a basis for a future implementation.

In partition refinement, blocks are successively refined in a top-down
manner, and this is reflected by the use of conjunction in
distinguishing formulae. Alternatively, bisimilarity may be computed
bottom-up, as in a recent partition \emph{aggregation}
algorithm~\cite{BjorklundCleophas2020}. It is an interesting point for
future investigation whether this algorithm can be extended to compute
distinguishing formulae, which would likely be of a rather different
shape than those computed via partition refinement.

\label{maintextend} %

\smnote[inline]{}

\bibliographystyle{plainurl} %
\bibliography{refs}

\providecommand{\noopsort}[1]{}
\begin{thebibliography}{10}

\bibitem{AczelMendler89}
Peter Aczel and Nax Mendler.
\newblock A final coalgebra theorem.
\newblock In {\em Proc.~Category Theory and Computer Science (CTCS)}, volume
  389 of {\em LNCS}, pages 357--365. Springer, 1989.

\bibitem{AdamekEA21}
Ji\v{r}\'i Ad\'{a}mek, Stefan Milius, and Lawrence~S. Moss.
\newblock Initial algebras, terminal coalgebras, and the theory of fixed points
  of functors.
\newblock draft book, available online at
  \url{https://www8.cs.fau.de/ext/milius/publications/files/CoalgebraBook.pdf},
  2021.

\bibitem{AdlerImmermanLics01}
Micah Adler and Neil Immerman.
\newblock An \emph{n!} lower bound on formula size.
\newblock In {\em LICS 2001}, pages 197--206. {IEEE} Computer Society, 2001.

\bibitem{AdlerImmermanTocl03}
Micah Adler and Neil Immerman.
\newblock An \emph{n!} lower bound on formula size.
\newblock {\em {ACM} Trans. Comput. Log.}, 4(3):296--314, 2003.

\bibitem{ArmasCervantesEA14}
Abel Armas-Cervantes, Paolo Baldan, Marlon Dumas, and Luciano
  Garc{\'i}a-Ba{\~{n}}uelos.
\newblock Behavioral comparison of process models based on canonically reduced
  event structures.
\newblock In {\em Business Process Management}, pages 267--282. Springer, 2014.

\bibitem{ArmasCervantesEA13}
Abel Armas-Cervantes, Luciano Garc{\'i}a-Ba{\~{n}}uelos, and Marlon Dumas.
\newblock Event structures as a foundation for process model differencing, part
  1: Acyclic processes.
\newblock In {\em Web Services and Formal Methods}, pages 69--86. Springer,
  2013.

\bibitem{BartelsEA04}
Falk Bartels, Ana Sokolova, and Erik de~Vink.
\newblock A hierarchy of probabilistic system types.
\newblock {\em Theoret.\ Comput.\ Sci.}, 327:3--22, 2004.

\bibitem{Bernardo04}
Marco Bernardo.
\newblock Two{T}owers 5.1 user manual, 2004.

\bibitem{BernardoEA98}
Marco Bernardo, Rance Cleaveland, Steve Sims, and W.~Stewart.
\newblock Two{T}owers: {A} tool integrating functional and performance analysis
  of concurrent systems.
\newblock In {\em Formal Description Techniques and Protocol Specification,
  Testing and Verification, {FORTE} / {PSTV} 1998}, volume 135 of {\em {IFIP}
  Conference Proceedings}, pages 457--467. Kluwer, 1998.

\bibitem{BernardoMiculan19}
Marco Bernardo and Marino Miculan.
\newblock Constructive logical characterizations of bisimilarity for reactive
  probabilistic systems.
\newblock {\em Theoretical Computer Science}, 764:80 -- 99, 2019.
\newblock Selected papers of ICTCS 2016.

\bibitem{BjorklundCleophas2020}
Johanna Bj\"{o}rklund and Loek Cleophas.
\newblock Aggregation-based minimization of finite state automata.
\newblock {\em Acta Informatica}, 2020.

\bibitem{CelikkanCleaveland95}
Ufuk Celikkan and Rance Cleaveland.
\newblock Generating diagnostic information for behavioral preorders.
\newblock {\em Distributed Computing}, 9(2):61--75, 1995.

\bibitem{Cleaveland91}
Rance Cleaveland.
\newblock On automatically explaining bisimulation inequivalence.
\newblock In {\em Computer-Aided Verification}, pages 364--372. Springer, 1991.

\bibitem{cranen_et_al:LIPIcs:2015:5408}
Sjoerd Cranen, Bas Luttik, and Tim A.~C. Willemse.
\newblock {Evidence for Fixpoint Logic}.
\newblock In {\em 24th EACSL Annual Conference on Computer Science Logic (CSL
  2015)}, volume~41 of {\em LIPIcs}, pages 78--93. Schloss
  Dagstuhl--Leibniz-Zentrum f{\"u}r Informatik, 2015.

\bibitem{coparFM19}
Hans-Peter Deifel, Stefan Milius, Lutz Schr{\"o}der, and Thorsten Wi{\ss}mann.
\newblock Generic partition refinement and weighted tree automata.
\newblock In {\em Formal Methods -- The Next 30 Years, Proc.~3rd World Congress
  on Formal Methods (FM 2019)}, volume 11800 of {\em LNCS}, pages 280--297.
  Springer, 10 2019.

\bibitem{DesharnaisEA98}
J.~{Desharnais}, A.~{Edalat}, and P.~{Panangaden}.
\newblock A logical characterization of bisimulation for labeled markov
  processes.
\newblock In {\em Proceedings. Thirteenth Annual IEEE Symposium on Logic in
  Computer Science (Cat. No.98CB36226)}, pages 478--487, 1998.

\bibitem{desharnaisEA02}
Josée Desharnais, Abbas Edalat, and Prakash Panangaden.
\newblock Bisimulation for labelled markov processes.
\newblock {\em Information and Computation}, 179(2):163--193, 2002.

\bibitem{Dijkman08}
Remco Dijkman.
\newblock Diagnosing differences between business process models.
\newblock In {\em Business Process Management}, pages 261--277, Berlin,
  Heidelberg, 2008. Springer Berlin Heidelberg.

\bibitem{Doberkat09}
Ernst{-}Erich Doberkat.
\newblock {\em Stochastic Coalgebraic Logic}.
\newblock Springer, 2009.

\bibitem{DorschEA17}
Ulrich Dorsch, Stefan Milius, Lutz Schr{\"o}der, and Thorsten Wi{\ss}mann.
\newblock Efficient coalgebraic partition refinement.
\newblock In {\em Proc.~28th International Conference on Concurrency Theory
  (CONCUR 2017)}, LIPIcs. Schloss Dagstuhl - Leibniz-Zentrum f{\"u}r
  Informatik, 2017.

\bibitem{DorschEA18}
Ulrich Dorsch, Stefan Milius, Lutz Schr{\"{o}}der, and Thorsten Wi{\ss}mann.
\newblock Predicate liftings and functor presentations in coalgebraic
  expression languages.
\newblock In {\em Coalgebraic Methods in Computer Science, {CMCS} 2018}, volume
  11202 of {\em LNCS}, pages 56--77. Springer, 2018.

\bibitem{FigueiraG10}
Santiago Figueira and Daniel Gor{\'{\i}}n.
\newblock On the size of shortest modal descriptions.
\newblock In {\em Advances in Modal Logic 8, papers from the eighth conference
  on "Advances in Modal Logic," held in Moscow, Russia, 24-27 August 2010},
  pages 120--139. College Publications, 2010.

\bibitem{FrenchEA13}
Tim French, Wiebe {van der Hoek}, Petar Iliev, and Barteld Kooi.
\newblock On the succinctness of some modal logics.
\newblock {\em Artificial Intelligence}, 197:56 -- 85, 2013.

\bibitem{GorinSchroeder13}
Daniel Gor{\'{\i}}n and Lutz Schr{\"{o}}der.
\newblock Simulations and bisimulations for coalgebraic modal logics.
\newblock In {\em Algebra and Coalgebra in Computer Science - 5th International
  Conference, {CALCO} 2013}, volume 8089 of {\em LNCS}, pages 253--266.
  Springer, 2013.

\bibitem{GummS01}
H.~Peter Gumm and Tobias Schr{\"{o}}der.
\newblock Monoid-labeled transition systems.
\newblock In {\em Coalgebraic Methods in Computer Science, {CMCS} 2001}, volume
  44(1) of {\em ENTCS}, pages 185--204. Elsevier, 2001.

\bibitem{Gumm2005}
H.Peter Gumm.
\newblock From {$T$}-coalgebras to filter structures and transition systems.
\newblock In {\em Algebra and Coalgebra in Computer Science}, volume 3629 of
  {\em LNCS}, pages 194--212. Springer, 2005.

\bibitem{Hopcroft71}
John Hopcroft.
\newblock An $n \log n$ algorithm for minimizing states in a finite automaton.
\newblock In {\em Theory of Machines and Computations}, pages 189--196.
  Academic Press, 1971.

\bibitem{KanellakisSmolka83}
Paris~C. Kanellakis and Scott~A. Smolka.
\newblock Ccs expressions, finite state processes, and three problems of
  equivalence.
\newblock In {\em Proceedings of the Second Annual ACM Symposium on Principles
  of Distributed Computing}, PODC '83, pages 228--240. ACM, 1983.

\bibitem{KanellakisS90}
Paris~C. Kanellakis and Scott~A. Smolka.
\newblock {CCS} expressions, finite state processes, and three problems of
  equivalence.
\newblock {\em Inf. Comput.}, 86(1):43--68, 1990.

\bibitem{Klin05}
Bartek Klin.
\newblock The least fibred lifting and the expressivity of coalgebraic modal
  logic.
\newblock In {\em Algebra and Coalgebra in Computer Science, {CALCO} 2005},
  volume 3629 of {\em LNCS}, pages 247--262. Springer, 2005.

\bibitem{Knuutila2001}
Timo Knuutila.
\newblock Re-describing an algorithm by {H}opcroft.
\newblock {\em Theor.\ Comput.\ Sci.}, 250:333 -- 363, 2001.

\bibitem{KoenigEA20}
Barbara K{\"o}nig, Christina Mika-Michalski, and Lutz Schr{\"o}der.
\newblock Explaining non-bisimilarity in a coalgebraic approach: Games and
  distinguishing formulas.
\newblock In {\em Coalgebraic Methods in Computer Science}, pages 133--154.
  Springer, 2020.

\bibitem{LarsenS91}
Kim~Guldstrand Larsen and Arne Arne~Skou.
\newblock Bisimulation through probabilistic testing.
\newblock {\em Inform.\ Comput.}, 94(1):1--28, 1991.

\bibitem{MartiVenema15}
Johannes Marti and Yde Venema.
\newblock Lax extensions of coalgebra functors and their logic.
\newblock {\em J.\ Comput.\ Syst.\ Sci.}, 81(5):880--900, 2015.

\bibitem{Milner89}
R.~Milner.
\newblock {\em Communication and Concurrency}.
\newblock International series in computer science. Prentice-Hall, 1989.

\bibitem{PaigeTarjan87}
Robert Paige and Robert~E.\ Tarjan.
\newblock Three partition refinement algorithms.
\newblock {\em SIAM J.~Comput.}, 16(6):973--989, 1987.

\bibitem{Park81}
D.~Park.
\newblock Concurrency and automata on infinite sequences.
\newblock In {\em Proceedings of 5th GI-Conference on Theoretical Computer
  Science}, volume 104 of {\em LNCS}, pages 167--183, 1981.

\bibitem{Pattinson03}
Dirk Pattinson.
\newblock Coalgebraic modal logic: soundness, completeness and decidability of
  local consequence.
\newblock {\em Theoretical Computer Science}, 309(1):177 -- 193, 2003.

\bibitem{Pattinson04}
Dirk Pattinson.
\newblock Expressive logics for coalgebras via terminal sequence induction.
\newblock {\em Notre Dame J.\ Formal Log.}, 45(1):19--33, 2004.

\bibitem{Rutten00}
Jan Rutten.
\newblock Universal coalgebra: a theory of systems.
\newblock {\em Theor.\ Comput.\ Sci.}, 249:3--80, 2000.

\bibitem{RuttenDV99}
Jan Rutten and Erik de~Vink.
\newblock Bisimulation for probabilistic transition systems: a coalgebraic
  approach.
\newblock {\em Theoret.\ Comput.\ Sci.}, 221:271--293, 1999.

\bibitem{Schroder08}
Lutz Schr{\"{o}}der.
\newblock Expressivity of coalgebraic modal logic: The limits and beyond.
\newblock {\em Theor.\ Comput.\ Sci.}, 390(2-3):230--247, 2008.

\bibitem{trnkova71}
V\v{e}ra Trnkov\'a.
\newblock On a descriptive classification of set functors {I}.
\newblock {\em Commentationes Mathematicae Universitatis Carolinae},
  12(1):143--174, 1971.

\bibitem{ValmariF10}
Antti Valmari and Giuliana Franceschinis.
\newblock Simple {$\CO(m\log n)$} time {M}arkov chain lumping.
\newblock In {\em Tools and Algorithms for the Construction and Analysis of
  Systems, TACAS 2010}, volume 6015 of {\em LNCS}, pages 38--52. Springer,
  2010.

\bibitem{ValmariLehtinen08}
Antti Valmari and Petri Lehtinen.
\newblock Efficient minimization of dfas with partial transition.
\newblock In {\em Theoretical Aspects of Computer Science, {STACS} 2008},
  volume~1 of {\em LIPIcs}, pages 645--656. Schloss Dagstuhl -- Leibniz-Zentrum
  für Informatik, Germany, 2008.

\bibitem{concurSpecialIssue}
Thorsten Wißmann, Ulrich Dorsch, Stefan Milius, and Lutz Schröder.
\newblock {Efficient and Modular Coalgebraic Partition Refinement}.
\newblock {\em {Logical Methods in Computer Science}}, {Volume 16, Issue 1},
  January 2020.

\end{thebibliography}

\clearpage
\appendix
\section{Appendix: Omitted Proofs}
      \allowdisplaybreaks
\def\thesection{A} %
\appendixsect{preliminaries}{Preliminaries}
\begin{proofappendix}[Details for]{behEqDifferntCoalg}
  Given a pair of $F$-coalgebra $(C,c)$ and $(D,d)$, we
  have a canonical $F$-coalgebra structure on the
  the disjoint union $C+D$ of their carriers:
  \[
    C+D \xrightarrow{c+d} FC + FD \xrightarrow{[F\inl, F\inr]} F(C+D).
  \]
  The canonical inclusion maps $\inl\colon C\to C+D$ and $\inr\colon D\to C+D$
  are $F$-coalgebra morphisms. We say that states $x \in C$ and $y \in D$ are \emph{behavioural
    equivalent} if $\inl(x)\sim \inr(y)$.

  Note that this definition extends the original definition of $\sim$,
  in the sense that $x,y$ in the same coalgebra $(C,c)$ are
  behaviourally equivalent ($x\sim y$) iff $\inl(x) \sim \inr(y)$ in
  the canonical coalgebra on $C+C$.
\end{proofappendix}

\begin{proofappendix}[Details on Predicate Liftings in]{sec:coalgLogic}
  The naturality of $\semantics{\heartsuit}_X\colon (2^X)^n\to 2^{FX}$ in $X$
  for $\arity{\heartsuit}{n}$
  means that for every map $f\colon X\to Y$, the diagram
  \[
    \begin{tikzcd}
      (2^Y)^n
      \arrow{r}{\semantics{\heartsuit}_Y}
      \arrow{d}[swap]{(2^{f})^n}
      & 2^{FY}
      \arrow{d}{2^{Ff}}
      \\
      (2^X)^n
      \arrow{r}{\semantics{\heartsuit}_Y}
      & 2^{FX}
    \end{tikzcd}
  \]
  commutes. Since $2^{(-)}$ is \emph{contravariant}, the map $f\colon X\to Y$ is
  sent to $2^{f}\colon 2^Y\to 2^X$ which takes inverse images; writing down the
  commutativity element-wise yields \eqref{eq:naturality}. By the Yoneda
  lemma, one can define predicate liftings
  \begin{lemma} \label{predLiftYoneda}
    A predicate lifting $\semantics{\heartsuit}_X\colon (2^X)^n\to 2^{FX}$ for
    $\arity{\heartsuit}{n}$ is uniquely defined by a map $f\colon F(2^n)\to 2$.
    Then $\semantics{\heartsuit}_X$ is given by
    \[
      \semantics{\heartsuit}_X(P_1,\ldots,P_n)(\underbrace{t}_{\in\,FX}) =
      f(F(\underbrace{x\mapsto (P_1(x),\ldots,P_n(x))}_{X\to 2^n})(t))
    \]
    or written as sets (considering $f\subseteq F(2^n), P_i\subseteq X$):
    \[
      \semantics{\heartsuit}_X(P_1,\ldots,P_n) =
      \{ t\in FX \mid F\fpair{\chi_{P_1},\ldots,\chi_{P_n}}(t) \in f
      \}
    \]
  \end{lemma}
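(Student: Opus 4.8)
The plan is to recognize the statement as a direct instance of the Yoneda lemma, exploiting that the source functor of a predicate lifting is representable. First I would observe that for every set~$X$ there is a bijection $(2^X)^n\cong \Set(X,2^n)$, natural in~$X$, sending a tuple of predicates $(P_1,\ldots,P_n)$ to its \emph{classifying map} $\fpair{\chi_{P_1},\ldots,\chi_{P_n}}\colon X\to 2^n$, i.e.\ $x\mapsto(P_1(x),\ldots,P_n(x))$. Under this bijection, taking inverse images $g^{-1}[-]$ along a map $g\colon X\to Y$ corresponds to precomposition $\Set(g,2^n)$, so that $X\mapsto (2^X)^n$ is precisely the contravariant representable functor $\Set(-,2^n)$. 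Dually, $X\mapsto 2^{FX}\cong\Set(FX,2)$ is a contravariant functor acting by inverse image along $Fg$. The key point is that the naturality condition~\eqref{eq:naturality} is exactly the requirement that $\semantics{\heartsuit}$ be a natural transformation $\Set(-,2^n)\to 2^{F(-)}$ of these two contravariant functors.

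With these identifications, I would invoke the (contravariant) Yoneda lemma: natural transformations out of the representable $\Set(-,2^n)$ into any contravariant functor $H$ are in bijection with the elements of $H(2^n)$. Instantiating $H=2^{F(-)}$ yields $H(2^n)=2^{F(2^n)}$, i.e.\ maps $f\colon F(2^n)\to 2$, which establishes the claimed uniqueness. Concretely, the Yoneda correspondence sends $\semantics{\heartsuit}$ to the value $f:=\semantics{\heartsuit}_{2^n}(\pr_1,\ldots,\pr_n)\in 2^{F(2^n)}$, where $\pr_i\colon 2^n\to 2$ are the projections; under the bijection above, this tuple is exactly $\id_{2^n}$, the universal element.

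For the reverse direction and the explicit formula, I would take an arbitrary $f\colon F(2^n)\to 2$ and define $\semantics{\heartsuit}_X(P_1,\ldots,P_n):=(F\fpair{\chi_{P_1},\ldots,\chi_{P_n}})^{-1}[f]$, which is precisely the asserted formula in both its pointwise and its set-theoretic form. Naturality of this family follows immediately from functoriality of~$F$, since $F(g\cdot h)=Fg\cdot Fh$ turns composition of classifying maps into composition of inverse images; and the two assignments are mutually inverse by the standard Yoneda computation (evaluating the formula at $\id_{2^n}$ returns $(F\id_{2^n})^{-1}[f]=f$, and conversely the naturality square forces the formula). I do not expect a genuine obstacle: the whole argument is routine once the representability is spotted, and the only real care needed is bookkeeping of variance — keeping track that $2^{(-)}$ is contravariant and that~\eqref{eq:naturality} coincides with the categorical naturality square.
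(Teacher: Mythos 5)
Your proof is correct and follows essentially the same route as the paper's: both recognize $(2^X)^n\cong\Set(X,2^n)$ as a contravariant representable (the paper calls this step a ``power law''), identify the naturality condition~\eqref{eq:naturality} with being a natural transformation into $2^{F(-)}$, and then apply the Yoneda lemma to obtain the bijection with maps $F(2^n)\to 2$ together with the explicit inverse-image formula. No gaps.
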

  \begin{proof}
    The following mathematical objects are in one-to-one correspondence
    \[
      \dfrac{
          F(2^n)\to 2
      }{
        \dfrac{
          (2^n)^X\to 2^{FX}\text{ natural in $X$}
        }{
          (2^X)^n\to 2^{FX}\text{ natural in $X$}
        }
      }
      \qquad
      \qquad
      \dfrac{
        f
      }{
        \dfrac{
          p\mapsto t\mapsto f(Fp(t))
        }{
          (P_1,\ldots,P_n)\mapsto t\mapsto f(F\fpair{P_1,\ldots,P_n}(t))
        }
      }
    \]
    The first correspondence is the
    Yoneda lemma and the second correspondence is a power law.
    On the right, the inhabitants of the sets are listed when starting with
    $f\colon F(2^n)\to 2$. By the definition of $\fpair{-,-}$ we have:
    \[
      \fpair{P_1,\ldots,P_n}\colon X\to 2^n
      \qquad
      x\mapsto (P_1(x),\ldots,P_n(x))
      \tag*{\qedhere}
    \]
  \end{proof}
\end{proofappendix}

\appendixsect{sec:main}{Constructing Distinguishing Formulae}
\begin{proofappendix}[Verification of]{defF3Mod}
  We verify that for every $t\in F3$
  \[
    \semantics{\fmod{t}}_X\colon
    (2^X)^2 \to 2^{FX},
    \qquad
    \semantics{\fmod{t}}_X(S,B) = \{t'\in FX\mid
    F\chi_{S\cap B}^B(t') = t
    \}
  \]
  defines a predicate lifting \eqref{eq:naturality}. For $f\colon X\to Y$ 
  and $S, B\in 2^X$, note that we have
  \[
    \chi_{S\cap B}^B\cdot f = \chi_{f^{-1}[S\cap B]}^{f^{-1}[B]}
    \tag*{$(*)$}
  \]
  because $f(x) \in X'$ iff $x\in f^{-1}[X']$ 
  for all $x\in X$ and $X'\subseteq X$. We verify:
  \begin{align*}
    Ff^{-1}\big[\semantics{\fmod{t}}_Y(S,B)\big]
    &= Ff^{-1}\big[\set{t'\in FY\mid F\chi_{S\cap B}^B(t') = t}\big]
      \tag{\autoref{defF3Mod}}
    \\
    &= \set{t'' \in FX\mid F\chi_{S\cap B}^B(Ff(t'')) = t}\big]
      \tag{def.~inv.~Image}
    \\
    &= \set{t'' \in FX\mid F(\chi_{S\cap B}^B\cdot f)(t'')) = t}\big]
      \tag{Functoriality}
    \\
    &= \set{t'' \in FX\mid F\chi_{f^{-1}[S\cap B]}^{f^{-1}[B]}(t'') = t}\big]
    \tag*{$(*)$}
    \\
    &= \set{t'' \in FX\mid F\chi_{f^{-1}[S]\cap f^{-1}[B]}^{f^{-1}[B]}(t'') = t}\big]
    \\
    &= \semantics{\fmod{t}}_X(f^{-1}[S],f^{-1}[B])
      \tag{\autoref{defF3Mod}}
  \end{align*}
  Hence, $\semantics{\fmod{t}}$ is a predicate lifting.
\end{proofappendix}

\begin{proofappendix}{lemF3CoalgSem}
  This follows directly from \autoref{defF3Mod} for $S:=\semantics{\phi_S}$ and
  $B:=\semantics{\phi_B}$, using that $S\cap B = S$:
  \begin{align*}
    \semantics{\fmod{t}(\phi_S,\phi_B)}
    &=
    c^{-1}[\semantics{\fmod{t}}_C(\semantics{\phi_S},\semantics{\phi_B})]
    \\
    &= c^{-1}[\semantics{\fmod{t}}_C(S,B)]
    \\
    & = c^{-1}[\{t'\in FC \mid
    F\chi_{S\cap B}^B(t') = t\}]
    \\
    &= \{x \in C \mid F\chi_{S}^B(c(x)) = t\}.
    \tag*{\qedhere}
  \end{align*}
\end{proofappendix}

\begin{proofappendix}{lemF1CoalgSem}
  Note that for $\chi_C^C\colon C\to 3$, we have
  $\chi_C^C = (C \xra{!} 1 \xra{j_1} 3)$ where $j_1(0) = 2$.
  \begin{align*}
    \semantics{\fmod{t}}
    &= \semantics{\fmod{Fj_1(t)}(\top,\top)}
    \tag{\autoref{notationF1Mod}}
    \\
    & = \{x \in C \mid F\chi_{C}^{C}(c(x)) = Fj_1(t)\}
    \tag{\autoref{lemF3CoalgSem}, $\semantics{\top}=C$}
    \\
    & = \{x \in C \mid Fj_1(F!(c(x))) = Fj_1(t)\}
    \tag{$\chi_C^C =j_1 \cdot !$}
    \\
    & = \{x \in C \mid F!(c(x)) = t\}
    \tag{$Fj_1$ injective}
  \end{align*}
  In the last step use that, w.l.o.g., $F$ preserves injective maps (\autoref{trnkovahull}).
  \qed
\end{proofappendix}

\begin{proofappendix}{algoCertsCorrect}
  \begin{enumerate}
  \item We first observe that given $x\in C$,
    $S\subseteq B \subseteq C$, and formulae $\phi_S$ and~$\phi_B$
    which characterize $S$ and $B$, respectively, we have:
  \begin{align}
    \semantics{\fmod{F\chi_{S}^B(c(x))}(\phi_S,\phi_B)}
    &=%
    \{x' \in C\mid F\chi_{\semantics{\phi_S}}^{\semantics{\phi_B}}(c(x')) =
    F\chi_{S}^B(c(x)) \}\label{eqF3ModBS:1}
      \\
    &= [x]_{F\chi_S^B(c(x))},
    \label{eqF3ModBS}
  \end{align}
  where~\eqref{eqF3ModBS:1} uses \autoref{lemF3CoalgSem}
  and~\eqref{eqF3ModBS}, holds since $\semantics{\phi_B} = B$ and
  $\semantics{\phi_S} = S$.
\item We proceed to\smnote{} the verification of \eqref{eqCertCorrect} by induction on $i$.
  \begin{itemize}
  \item In the base case $i = 0$, we have $\semantics{\beta_0(\{C\})} =
    \semantics{\top} = \{C\}$ for the only block in $X/Q_0$. Since $P_0 = \ker
    (F!\cdot c)$, $\delta_0$ is well-defined, and by \autoref{lemF1CoalgSem} we
    have 
    \[
      \semantics{\delta_0([x]_{P_0})}
      = \semantics{\fmod{F!(c(x))}}
      = \{y\in C \mid F!(c(x)) = F!(c(y))\}
      = [x]_{P_0}.
    \]
  \item The inductive hypothesis states that 
    \[
      \semantics{\delta_i(S)} = S
      \qquad\text{and}\qquad
      \semantics{\beta_i(B)} = B.\tag*{\text{(IH)}}
    \]
    We prove that $\beta_{i+1}$ is correct:
    \begin{align*}
      & \semantics{\beta_{i+1}([x]_{Q_{i+1}})}
      \\ &=
      \begin{cases}
        \semantics{\delta_{i}(S)} & \text{if }[x]_{Q_{i+1}} = S
        \text{, hence }S = [x]_{P_i} \\
        \semantics{\beta_{i}(B)} ~\cap ~C\setminus \semantics{\delta_{i}(S)} & \text{if }[x]_{Q_{i+1}} = B\setminus S
        \text{, hence }B = [x]_{Q_i}
 \\
        \semantics{\beta_{i}([x]_{Q_i})} & \text{if }[x]_{Q_{i+1}} \in C/Q_i \\
      \end{cases}
      \\
    &\overset{\mathclap{\text{(IH)}}}{=}
      \begin{cases}
        S & \text{if }[x]_{Q_{i+1}} = S\\
        B ~\cap ~C\setminus S & \text{if }[x]_{Q_{i+1}} = B\setminus S \\
        [x]_{Q_i} & \text{if }[x]_{Q_{i+1}} \in C/Q_i \\
      \end{cases}
      \\ &=
      \begin{cases}
        [x]_{Q_{i+1}} & \text{if }[x]_{Q_{i+1}} = S = [x]_{P_i}\\
        [x]_{Q_{i+1}}& \text{if }[x]_{Q_{i+1}} = B\setminus S
        \qquad\text{(since $B \cap C\setminus S = B \setminus S$)}   \\
        [x]_{Q_{i+1}} & \text{if }[x]_{Q_{i+1}} \in C/Q_i
        \qquad\text{(since $[x]_{Q_i}$ is not split)}\\
      \end{cases}
      \\ &=
      [x]_{Q_{i+1}}.
    \end{align*}
    For $\delta_{i+1}$, we compute as follows:
    \begin{align*}
      & \semantics{\delta_{i+1}([x]_{P_{i+1}})}
      \\
      &=
                                              \qquad
          \begin{cases}
          \semantics{\delta_{i}([x]_{P_{i}})} &\text{if }[x]_{P_{i+1}} = [x]_{P_i} \\
          \semantics{\delta_{i}([x]_{P_{i}})}
          \cap \semantics{\fmod{F\chi_{S}^B(c(x))}(\delta_i(S), \beta_i(B))}
          &\text{otherwise}
          \end{cases}
            \\
        &\overset{\mathclap{\text{(IH) \& \eqref{eqF3ModBS}}}}{=}
          \qquad
          \begin{cases}
          [x]_{P_{i}} &\text{if }[x]_{P_{i+1}} = [x]_{P_i} \\
          [x]_{P_{i}}
          \cap [x]_{F\chi_S^B(c(x))}
          &\text{otherwise}
          \end{cases}
            \\
        &\overset{\mathclap{\text{def. }P_{i+1}}}{=}
          \qquad
          \begin{cases}
          [x]_{P_{i+1}} &\text{if }[x]_{P_{i+1}} = [x]_{P_i} \\
          [x]_{P_{i+1}}
          &\text{otherwise}
          \end{cases}
            \\ &= \quad[x]_{P_{i+1}}
                 \tag*{\qedhere}
    \end{align*}
  \end{itemize}
\end{enumerate}
\end{proofappendix}

\begin{proofappendix}[Details for]{extractDistinguish}
  In order to verify that the first differing conjunct is a distinguishing
  formula, we perform a case distinction on the least $i$ with $(x,y)\notin P_i$:

  If $x$ and $y$ are already split by $P_0$, then the conjunct at index $0$ in
  the respective certificates of $[x]_{\sim}$ and $[y]_{\sim}$ differs, and we have
  $t = F!(c(x))$ and $t' = F!(c(y))$. By \autoref{lemF1CoalgSem}, $\fmod{t}$
  distinguishes $x$ from $y$ (and $\fmod{t'}$ distinguishes~$y$ from
  $x$)\smnote{}.

  If $x$ and $y$ are split by $P_{i+1}$ (but $(x,y)\in P_i$) in the
  $i$th iteration, then
  \[
    \underbrace{F\chi_S^B(c(x))}_{t~:=} \neq \underbrace{F\chi_S^B(c(y))}_{t'~:=}.
  \]
  Thus, the conjuncts that differs in the respective certificates for
  $[x]_{\sim}$ and $[y]_{\sim}$ are the following conjuncts at index
  $i+1$:
  \[
    \fmod{t}(\delta_i(S),\beta_i(B))
    \qquad
    \text{and}
    \qquad
    \fmod{t'}(\delta_i(S),\beta_i(B)).
  \]
  By \autoref{lemF3CoalgSem}, $\fmod{t}(\delta_i(S),\beta_i(B))$
  distinguishes $x$ from $y$ (and $\fmod{t}(\delta_i(S),\beta_i(B))$
  distinguishes $y$ from $x$).
\end{proofappendix}
\takeout{}%

\begin{proofappendix}{certSize}
  Before proving \autoref{certSize}, we need to establish a sequence
  of lemmas on the underlying partition refinement algorithm. We
  assume wlog that $F$ preserves finite intersections; that is
  pullbacks of pairs of injective maps. In fact, the functor $G$
  mentioned in \autoref{trnkovahull}, which coincides with $F$ on all
  nonempty sets and map and therefore has the same coalgebras, preserves
  finite intersections.\lsnote{}.

  Let $(C,c)$ be a coalgebra for $F$. As additional notation, we
  define for all sets $T\subseteq C$ and $S\subseteq C$:
  \[
    T\to S \qquad:\Longleftrightarrow\qquad
    \exists x\in T, y\in  S\colon x\to y.
  \]
  In other words, we write $T\to S$ if there is a transition from (some state
  of) $T$ to (some state of) $S$. Also we define
  the set of predecessor states of a set as:
  \[
    \pred(S) = \{x\in C\mid \{x\}\to S\}
    \qquad\text{for }S\subseteq C.
  \]
  \begin{lemma} \label{noEdgeNoChiS}
    For every $F$-coalgebra $(C,c)$, $x\in C$, and $S\subseteq B\subseteq C$
    with $S$ finite, we have
    \[
      \{x\}\not\to S
      \qquad\Longrightarrow\qquad
      F\chi_S^B(c(x)) = F\chi_{\emptyset}^B(c(x)).
    \]
  \end{lemma}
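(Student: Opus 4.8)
The plan is to exploit that the only difference between $\chi_S^B$ and $\chi_\emptyset^B$ lies over $S$: both maps send $C\setminus B$ to $0$ and take value $1$ on $B\setminus S$, while on $S$ the map $\chi_S^B$ takes value $2$ and $\chi_\emptyset^B$ takes value $1$. Writing $j\colon C\setminus S\monoto C$ for the inclusion, this says precisely that $\chi_S^B\cdot j = \chi_\emptyset^B\cdot j$, i.e.\ the two maps agree once restricted to the complement of~$S$. Hence it suffices to show that $c(x)$ factors through $F(C\setminus S)$ along $Fj$; functoriality then closes the argument, since for any $t$ with $Fj(t)=c(x)$ we have $F\chi_S^B(c(x)) = F(\chi_S^B\cdot j)(t) = F(\chi_\emptyset^B\cdot j)(t) = F\chi_\emptyset^B(c(x))$.

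To obtain this factorization, I would first unfold the hypothesis $\{x\}\not\to S$ via \autoref{coalgebraEdge}: for every $y\in S$ there is no transition $x\to y$, which by definition means $c(x)\in Fi_y[F(C\setminus\{y\})]$, where $i_y\colon C\setminus\{y\}\monoto C$ is the inclusion. Thus $c(x)$ lies in the intersection $\bigcap_{y\in S} Fi_y[F(C\setminus\{y\})]$ of finitely many subsets of $FC$; here finiteness of~$S$ is essential. Since $F$ preserves finite intersections (i.e.\ pullbacks of injective maps, as assumed at the start of this proof environment) and $\bigcap_{y\in S}(C\setminus\{y\}) = C\setminus S$ as subobjects of~$C$, iterating preservation of binary intersections over the finite family yields
\[
  \bigcap_{y\in S} Fi_y\big[F(C\setminus\{y\})\big] = Fj\big[F(C\setminus S)\big].
\]
Therefore $c(x)\in Fj[F(C\setminus S)]$, i.e.\ $c(x)=Fj(t)$ for some $t\in F(C\setminus S)$, which is exactly the factorization required above.

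I expect the main obstacle to be this intersection step: one must justify that preservation of finite intersections, applied to the monos $C\setminus\{y\}\monoto C$ for $y\in S$, gives genuinely the image of $Fj$ (and not merely a containment), and that the subobject intersection $\bigcap_{y\in S}(C\setminus\{y\})$ really computes $C\setminus S$. Everything else is routine functoriality together with the elementary observation that $\chi_S^B$ and $\chi_\emptyset^B$ coincide off~$S$.
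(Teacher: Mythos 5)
Your proposal is correct and follows essentially the same route as the paper's own proof: unfold $\{x\}\not\to S$ into membership of $c(x)$ in each $Fi_y[F(C\setminus\{y\})]$, use preservation of finite intersections (available w.l.o.g.\ via the Trnková hull, as the paper also assumes at the start of its proof) to factor $c(x)$ through $F(C\setminus S)$, and conclude by functoriality since $\chi_S^B$ and $\chi_\emptyset^B$ agree on $C\setminus S$. The ``obstacle'' you flag is handled in the paper exactly as you anticipate, so there is nothing to repair.
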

  \begin{proof}
    For every $y\in S$, we have that $x\not\to y$. Hence, for every $y\in S$,
    there exists $t_y\in F(C\setminus\{y\})$ such that
    \[
      c(x) = Fi(t_y)
      \qquad\text{for }i\colon C\setminus\{y\} \monoto C.
    \]
    The set $C\setminus S$ is the intersection of all sets
    $C\setminus\{y\}$ with $y \in S$:
    \[
      C\setminus S = \bigcap_{y\in S} (C\setminus\{y\}).
    \]
    Since $F$ preserves finite intersections and $S$ is
    finite, we have that
    \[
      F(C\setminus S) = \bigcap_{y\in S} F(C\setminus\{y\}).
    \]
    Since $c(x)\in FC$ is contained in every $F(C\setminus\{y\})$ (as witnessed by $t_y$)
    it is also contained in their intersection. That is, for $m\colon C\setminus
    S\monoto C$ being the inclusion map, there is $t'\in F(C\setminus S)$ with
    $Fm(t') = c(x)$. Now consider the following diagrams:
    \[
      \begin{tikzcd}[column sep=2mm]
        c(x)
        \descto{r}{\(\in\)}
        & FC
        \arrow{r}{F\chi_S^B}
        &[8mm] F3
        \\
        t'
        \arrow[mapsto]{u}
        \descto{r}{\(\in\)}
        & F(C\setminus S)
        \arrow{u}{Fm}
        \arrow{ur}[swap]{F\chi_\emptyset^B}
      \end{tikzcd}
      \qquad\text{and}\qquad
      \begin{tikzcd}[column sep=2mm]
        FC
        \arrow{r}{F\chi_\emptyset^B}
        &[8mm] F3
        \\
        F(C\setminus S)
        \arrow{u}{Fm}
        \arrow{ur}[swap]{F\chi_\emptyset^B}
      \end{tikzcd}
    \]
    Both triangles commute because $\chi_\emptyset^B = \chi_S^B\cdot m$ and
    $\chi_{\emptyset}^B = \chi_\emptyset^B\cdot m$. Thus, we conclude
    \[
      F\chi_S^B(c(x))
      = F\chi_S^B(Fm(t'))
      = F\chi_\emptyset^B(t')
      = F\chi_\emptyset^B(Fm(t'))
      = F\chi_\emptyset^B(c(x)).\tag*{\qedhere}
    \]
  \end{proof}
  \begin{lemma} \label{stableP}
    For all $(x,x') \in P_i$ and $B\in C/Q_i$ in \autoref{coalgPT}, we have
    \[
      F\chi_\emptyset^B(c(x)) = F\chi_\emptyset^B(c(x')).
    \]
  \end{lemma}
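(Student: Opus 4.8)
The plan is to prove \autoref{stableP} by induction on the iteration index~$i$, exploiting the recursive definitions $P_{i+1} = P_i \cap \ker(F\chi_S^B\cdot c)$ and $C/Q_{i+1} = (C/Q_i\setminus\{B\})\cup\{S,B\setminus S\}$ from \autoref{coalgPT}. Throughout, I would first note that $\chi_\emptyset^{B'}\colon C\to 3$ is simply the $\{0,1\}$-valued indicator of the block~$B'$ viewed as a map into~$3$: it sends $x$ to $1$ if $x\in B'$ and to $0$ otherwise. The key device will be to factor such indicators through the finer map~$\chi_S^B$ by a suitable function $3\to 3$, so that agreement of $x$ and~$x'$ under $F\chi_S^B\cdot c$ (which is exactly what membership in $P_{i+1}$ buys us) propagates to agreement under $F\chi_\emptyset^{B'}\cdot c$ by mere functoriality.

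For the base case $i=0$, the only block is $B=C\in C/Q_0$, and $\chi_\emptyset^C$ is the constant map with value~$1$, i.e.\ $\chi_\emptyset^C = \iota\cdot{!}$ for $\iota\colon 1\to 3$ with $\iota(0)=1$. Hence $F\chi_\emptyset^C = F\iota\cdot F!$. Since $P_0 = \ker(F!\cdot c)$, any $(x,x')\in P_0$ satisfies $F!(c(x)) = F!(c(x'))$, and applying $F\iota$ yields $F\chi_\emptyset^C(c(x)) = F\chi_\emptyset^C(c(x'))$.

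For the inductive step I would assume the claim for~$i$ and take $(x,x')\in P_{i+1}$ and $B'\in C/Q_{i+1}$, noting $(x,x')\in P_i$ since $P_{i+1}\subseteq P_i$. If $B'\in C/Q_i$ is a surviving old block, the claim follows immediately from the induction hypothesis. The two remaining cases are $B' = S$ and $B' = B\setminus S$. Here I would use that $(x,x')\in P_{i+1}$ in particular gives $F\chi_S^B(c(x)) = F\chi_S^B(c(x'))$, and then factor the indicators: writing $g,h\colon 3\to 3$ for the maps determined by $g(2)=1$, $g(1)=0$, $g(0)=0$ and $h(2)=0$, $h(1)=1$, $h(0)=0$, a routine check on the three cases $x\in S$, $x\in B\setminus S$, $x\in C\setminus B$ shows $\chi_\emptyset^S = g\cdot\chi_S^B$ and $\chi_\emptyset^{B\setminus S} = h\cdot\chi_S^B$. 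Functoriality then gives $F\chi_\emptyset^S = Fg\cdot F\chi_S^B$ and $F\chi_\emptyset^{B\setminus S} = Fh\cdot F\chi_S^B$, so both quantities agree on $c(x)$ and $c(x')$.

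The only genuine idea is the factorization of the block indicators through~$\chi_S^B$; once that is spotted, everything reduces to functor bookkeeping and the three-way case distinction. In particular---and in contrast to the preceding \autoref{noEdgeNoChiS}---this argument needs neither preservation of injective maps nor preservation of intersections, only plain functoriality of~$F$, so I expect no real obstacle beyond setting up the induction correctly and tracking which blocks of $C/Q_{i+1}$ are old versus newly created.
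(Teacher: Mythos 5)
Your proof is correct, but it takes a genuinely different route from the paper's. The paper does not induct on~$i$ at all: it invokes an invariant of the underlying partition refinement algorithm (Prop.~4.12 of the cited work), namely the existence of a map $c_i\colon C/P_i\to F(C/Q_i)$ with $F[-]_{Q_i}\cdot c = c_i\cdot[-]_{P_i}$, factors $\chi_\emptyset^B$ as $\chi_\emptyset^{\{B\}}\cdot[-]_{Q_i}$, and concludes in one line from $[x]_{P_i}=[x']_{P_i}$. You instead re-derive exactly the instances of that invariant you need, by induction on the loop iteration: the base case via $\chi_\emptyset^C = \iota\cdot{!}$ and $P_0=\ker(F!\cdot c)$, the surviving blocks via the induction hypothesis and $P_{i+1}\subseteq P_i$, and the two new blocks via the factorizations $\chi_\emptyset^S = g\cdot\chi_S^B$ and $\chi_\emptyset^{B\setminus S}=h\cdot\chi_S^B$ combined with $P_{i+1}\subseteq\ker(F\chi_S^B\cdot c)$; I checked the case analysis for $g$ and $h$ and it is right, and no strengthening of the induction hypothesis is needed since the new blocks are handled without it. What each approach buys: the paper's proof is shorter but leans on an external correctness result of \autoref{coalgPT}, whereas yours is self-contained, uses only plain functoriality (as you correctly observe, neither preservation of injections nor of intersections is needed), at the cost of essentially reconstituting the stability invariant that the cited proposition already packages.
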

  \begin{proof}
    One can show\cite[Prop.~4.12]{concurSpecialIssue} that in every
    iteration there is a map $c_i\colon C/P_i\to F(C/Q_i)$ that
    satisfies $F[-]_{Q_i}\cdot c= c_i\cdot [-]_{P_i}$:
    \[
      \begin{tikzcd}
        C
        \arrow{r}{c}
        \arrow[->>]{d}[swap]{[-]_{P_i}}
        & FC
        \arrow[->>]{d}{F[-]_{Q_i}}
        \\
        C/P_i
        \arrow{r}{c_i}
        & F(C/Q_i)
      \end{tikzcd}
    \]
    where the maps $[-]_{P_i}, [-]_{Q_i}$ send elements of $C$ to their
    equivalence class (\autoref{eqEqClass}). The map $\chi_\emptyset^B\colon
    C\to 3$ for $B\in C/Q_i$ can be decomposes as:
    \[
      \begin{tikzcd}
        C
        \arrow[->>]{d}[swap]{[-]_{Q_i}}
        \arrow{r}{\chi_\emptyset^B}
        & 3
        \\
        C/Q_i
        \arrow{ur}[swap]{\chi_\emptyset^{\{B\}}}
      \end{tikzcd}
    \]
    Combining these two diagrams, we obtain:
    \begin{equation}\label{eqChiEq}
      F\chi_\emptyset^B\cdot c
      = F\chi_\emptyset^{\{B\}}\cdot F[-]_{Q_i}\cdot c
      = F\chi_\emptyset^{\{B\}}\cdot c_i\cdot [-]_{P_i}.
    \end{equation}
    For all $(x,x') \in P_i$, we have $[x]_{P_i} = [x']_{P_{i}}$, and
    thus we have
    \[
      F\chi_\emptyset^B(c(x))
      \overset{\eqref{eqChiEq}}= F\chi_\emptyset^{\{B\}}(c_i([x]_{P_i}))
      = F\chi_\emptyset^{\{B\}}(c_i([x']_{P_i}))
      \overset{\eqref{eqChiEq}}= F\chi_\emptyset^B(c(x')).
      \tag*{\qedhere}
    \]
  \end{proof}
  \begin{lemma} \label{noEdgeNoSplit}
    For $S\subsetneqq B\in C/Q_i$ in the $i$th iteration of \autoref{coalgPT},
    any block $T\in C/P_i$ with no edge to $S$ is not modified; in
    symbols:\smnote{}
    
    \[
      T\not\to S \quad\Longrightarrow\quad
      T\in C/P_{i+1}
      \tag*{for all \(T\in C/P_i\).}
    \]
  \end{lemma}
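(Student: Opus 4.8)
The plan is to reduce the claim that $T$ is not modified to the assertion that $F\chi_S^B\cdot c$ is constant on the block~$T$, and then to obtain this constancy by chaining \autoref{noEdgeNoChiS} with the stability \autoref{stableP}. First I would unfold step~\ref{defPi1}: since $P_{i+1} = P_i\cap\ker(F\chi_S^B\cdot c)$, a block $T\in C/P_i$ remains a block of $C/P_{i+1}$ exactly when all its elements share one value $F\chi_S^B(c(x))\in F3$; otherwise the intersection splits~$T$. Hence it suffices to prove $F\chi_S^B(c(x)) = F\chi_S^B(c(x'))$ for all $x,x'\in T$.

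So I would fix $x,x'\in T$. As elements of one $P_i$-block they satisfy $(x,x')\in P_i$, and by the definition of $T\to S$ the hypothesis $T\not\to S$ yields the pointwise facts $\{x\}\not\to S$ and $\{x'\}\not\to S$. Since the input coalgebra is finite, $S$ is finite, so \autoref{noEdgeNoChiS} applies and replaces $\chi_S^B$ by $\chi_\emptyset^B$:
\[
  F\chi_S^B(c(x)) = F\chi_\emptyset^B(c(x)), \qquad
  F\chi_S^B(c(x')) = F\chi_\emptyset^B(c(x')).
\]
The two right-hand sides no longer mention~$S$, and \autoref{stableP}, applied to $(x,x')\in P_i$ and $B\in C/Q_i$, equates them: $F\chi_\emptyset^B(c(x)) = F\chi_\emptyset^B(c(x'))$. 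Chaining the three equalities gives $F\chi_S^B(c(x)) = F\chi_S^B(c(x'))$, and since $x,x'\in T$ were arbitrary, $T\in C/P_{i+1}$.

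Given these two preceding lemmas, the argument is essentially bookkeeping, so the only points requiring attention at this level are the passage from the block-level premise $T\not\to S$ to the pointwise premises of \autoref{noEdgeNoChiS}, and the finiteness of~$S$ that this lemma demands (automatic on finite inputs). The genuinely substantial work sits in \autoref{noEdgeNoChiS}, whose proof uses that (w.l.o.g.)~$F$ preserves finite intersections to turn \textquotedblleft$c(x)$ avoids every single target in~$S$\textquotedblright{} into \textquotedblleft$c(x)$ factors through $F(C\setminus S)$\textquotedblright; I would not expect any further obstacle here.
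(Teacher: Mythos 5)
Your proof is correct and follows exactly the paper's own argument: reduce to showing $F\chi_S^B\cdot c$ is constant on $T$, then chain \autoref{noEdgeNoChiS} (twice) with \autoref{stableP} via the intermediate value $F\chi_\emptyset^B(c(x))$. The extra care you take about the finiteness of $S$ and the unfolding of step \ref{defPi1} is implicit in the paper but entirely consistent with it.
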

  \begin{proof}
    Since $T\not\to S$, we have
    $\{x\}\not\to S$ and $\{x'\}\not\to S$ for all $x,x'\in T$. Thus,
    \begin{align*}
      F\chi_S^B(c(x))
      &= F\chi_\emptyset^B(c(x))
        \tag{\autoref{noEdgeNoChiS}, $\{x\}\not\to S$}
        \\
      &= F\chi_\emptyset^B(c(x'))
        \tag{\autoref{stableP}, $(x,x')\in P_i$}
        \\
      &= F\chi_S^B(c(x'))
        \tag{\autoref{noEdgeNoChiS}, $\{x'\}\not\to S$}
    \end{align*}
    as desired. \qedhere
  \end{proof}

  \begin{lemma}
    \label{newBlockCount}
    For  $S\subseteq C$ and finite $C$ in the $i$th iteration of
    \autoref{coalgPT}, 
    \[
      |\{T'\in C/P_{i+1}\mid T'\not\in C/P_i\}| ~\le~ 2\cdot |\pred(S)|.
    \]
  \end{lemma}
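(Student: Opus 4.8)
The plan is to bound the number of new blocks by charging each one to an element of $\pred(S)$, using at most two charges per such element. Throughout, $S\subsetneqq B\in C/Q_i$ with $S\in C/P_i$ are the blocks picked in the $i$th iteration, and finiteness of $C$ ensures $S$ is finite so that \autoref{noEdgeNoChiS} applies.

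First I would determine which blocks can produce new blocks at all. By \autoref{noEdgeNoSplit}, any block $T\in C/P_i$ with $T\not\to S$ remains unchanged, i.e.\ $T\in C/P_{i+1}$. Since $P_{i+1}\subseteq P_i$, every block of $C/P_{i+1}$ sits inside a unique block of $C/P_i$; hence a new block can only arise by properly splitting some $T\in C/P_i$ with $T\to S$, that is, with $T\cap\pred(S)\neq\emptyset$. If such a $T$ is split into $k_T$ subblocks, then (as $C/P_i$ is a partition and $T$ one of its blocks) each subblock is a proper, hence new, subset of $T$; thus the quantity to be bounded equals $\sum_{T} k_T$, the sum ranging over the properly split blocks.

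The key step is to show that each split block $T$ satisfies $k_T\le 1+|T\cap\pred(S)|$. The subblocks of $T$ in $C/P_{i+1}$ are exactly the fibres of the map $x\mapsto F\chi_S^B(c(x))$ restricted to $T$, so $k_T$ is the number of distinct values this map takes on $T$. I would argue that all states in $T\setminus\pred(S)$ share a single value: for such a state $x$ we have $\{x\}\not\to S$, so $F\chi_S^B(c(x))=F\chi_\emptyset^B(c(x))$ by \autoref{noEdgeNoChiS}, and $F\chi_\emptyset^B(c(-))$ is constant on the $P_i$-block $T$ by \autoref{stableP}. Consequently the states of $T\setminus\pred(S)$ all lie in one subblock, and every other subblock must contain a state of $T\cap\pred(S)$; since these remaining subblocks are pairwise disjoint, there are at most $|T\cap\pred(S)|$ of them, giving $k_T\le 1+|T\cap\pred(S)|$ (the bound holding trivially also when $T\subseteq\pred(S)$).

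Finally I would sum over the split blocks and exploit their disjointness. Writing $\mathcal{T}$ for the set of properly split blocks, each contains at least one element of $\pred(S)$, so $|\mathcal{T}|\le|\pred(S)|$; and since the blocks are pairwise disjoint, $\sum_{T\in\mathcal{T}}|T\cap\pred(S)|\le|\pred(S)|$. Combining,
\[
  \sum_{T\in\mathcal{T}} k_T
  \;\le\; \sum_{T\in\mathcal{T}}\bigl(1+|T\cap\pred(S)|\bigr)
  \;=\; |\mathcal{T}| + \sum_{T\in\mathcal{T}}|T\cap\pred(S)|
  \;\le\; 2\,|\pred(S)|,
\]
as claimed. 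The only delicate point is the key step: without the observation that all non-predecessors of $S$ inside a block collapse to a single value --- which is precisely what \autoref{noEdgeNoChiS} together with \autoref{stableP} buys us --- one could only bound $k_T$ by $|T|$, which would not yield a bound in terms of $|\pred(S)|$.
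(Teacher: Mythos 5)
Your proof is correct and follows essentially the same route as the paper's: restrict attention to blocks $T$ with $T\to S$ via \autoref{noEdgeNoSplit}, bound the number of fragments of each such $T$ by $1+|T\cap\pred(S)|$ using \autoref{noEdgeNoChiS} together with \autoref{stableP}, and sum using disjointness of the blocks. The only (immaterial) difference is in the final accounting: the paper absorbs the $+1$ via $1+|T\cap\pred(S)|\le 2|T\cap\pred(S)|$, while you bound the two sums $|\mathcal{T}|$ and $\sum_T|T\cap\pred(S)|$ by $|\pred(S)|$ separately.
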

  \begin{proof}
    Let $S\subsetneqq B\in C/Q_i$ be used for splitting in iteration
    $i$. By contraposition, \autoref{noEdgeNoSplit} implies that if
    $T'\in C/P_{i+1}$ and $T'\not\in C/P_i$, then (the unique)
    $T\in C/P_i$ with $T'\subseteq T$ satisfies $T\not\in C/P_{i+1}$
    and thefore has a transition to~$S$. By the finiteness of $C$, the
    block $T\in C/P_i$ is split into finitely many blocks
    $T_1,\ldots,T_k\in C/P_{i+1}$, representing the equivalence
    classes for $F\chi_S^B\cdot c\colon C\to F3$.  By
    \autoref{noEdgeNoChiS} we know that if $x\in T$ has no transition
    to $S$, then $F\chi_S^B(c(x)) =
    F\chi_\emptyset^B(c(x))$. Moreover, all elements of $T\in C/P_i$
    are sent to the same value by $F\chi_\emptyset^B\cdot c$
    (\autoref{stableP}). Hence, there is at most one block $T_j$ with no
    transition to $S$, and all other blocks $T_{j'}$, $j'\neq j$, have
    a transition to $S$.%
    Therefore the number blocks $T_j$ is bounded above as follows: $k
    \leq |T\cap \pred(S)| + 1$.
    Summing over all predecessor blocks $T$ we obtain:
    \begin{align*}
      &|\{T'\in C/P_{i+1}\mid T'\not\in C/P_i\}|
        \\
      \le~&
      |\{T'\in C/P_{i+1}\mid T'\subseteq T\in C/P_i\text{ and }T\to S\}|
            \tag{\autoref{noEdgeNoSplit}}
          \\
      =~&\sum_{\substack{T\in C/P_i\\ T\to S}} |\{T'\in C/P_{i+1}\mid T'\subseteq T\}|
          \\
      \leq~&\sum_{\substack{T\in C/P_i\\ T\to S}} (|T\cap \pred(S)| + 1)
      \tag{bound on $k$ above}
          \\
      \le~& 2\cdot \sum_{\substack{T\in C/P_i\\ T\to S}} |T\cap \pred(S)|
      \tag*{$|T\cap \pred(S)| \ge 1$}
      \\
      \le~& 2\cdot |\pred(S)| \tag{$T\in C/P_i$ are disjoint}
    \end{align*}
    This completes the proof \qedhere
  \end{proof}
  \begin{lemma} \label{totalBlockCount} Throughout the execution of
    \autoref{coalgPT} for an input coalgebra $(C,c)$ with $n=|C|$ states and $m$
    transitions, we have
    \[
      |\{T\subseteq C\mid T\in C/P_i\text{ for some }i\}|
      \le 2\cdot m \cdot \log_2 n + 2\cdot m + n.
    \]
  \end{lemma}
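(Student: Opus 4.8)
The plan is to count the distinct blocks by separating the blocks of the initial partition $C/P_0$ from the blocks that are \emph{freshly created} in the refinement steps, to bound the number of fresh blocks per iteration by \autoref{newBlockCount}, and finally to control the resulting sum with the Paige--Tarjan ``process-the-smaller-half'' principle built into the side condition $2\cdot|S|\le|B|$ of step \ref{step1}.

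First, since the partitions are successively refined ($P_{i+1}\subseteq P_i$), a subset $T\subseteq C$ that is a block at some stage is either a block of $C/P_0$ or first becomes a block in exactly one iteration, and once split it never reoccurs. Hence
\[
  |\{T\subseteq C\mid T\in C/P_i\text{ for some }i\}|
  \le |C/P_0| + \sum_{i\ge 0}|\{T'\in C/P_{i+1}\mid T'\notin C/P_i\}|,
\]
with $|C/P_0|\le n$. By \autoref{newBlockCount} the $i$-th summand is at most $2\cdot|\pred(S_i)|$, where $S_i$ is the block selected in step \ref{step1}. Writing $d^-(y)$ for the number of transitions into $y$, every predecessor of $S_i$ accounts for at least one transition into $S_i$, so $|\pred(S_i)|\le\sum_{y\in S_i}d^-(y)$; note also that $\sum_{y\in C}d^-(y)=m$.

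The key step is to bound, for each fixed $y\in C$, the number of iterations in which $y$ lands in the selected block. Let $i_1<\dots<i_k$ list these iterations and let $b_j$ be the size of $y$'s block in $C/Q_{i_j}$ immediately before the split. As $y\in S_{i_j}\subsetneqq B_{i_j}$ with $2\cdot|S_{i_j}|\le|B_{i_j}|=b_j$, after step \ref{defQi1} the $Q$-block of $y$ equals $S_{i_j}$, of size at most $b_j/2$; since $Q$ is only refined afterwards, block sizes are non-increasing and thus $b_{j+1}\le b_j/2$. With $b_1\le n$ and $b_k\ge 2$ (a split block has at least two elements) this yields $2^k\le n$, i.e.\ $k\le\log_2 n$.

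Combining these bounds and swapping the order of summation gives
\[
  \sum_{i\ge 0}|\pred(S_i)|
  \le \sum_{i\ge 0}\sum_{y\in S_i}d^-(y)
  = \sum_{y\in C}d^-(y)\cdot|\{i\mid y\in S_i\}|
  \le (\log_2 n)\sum_{y\in C}d^-(y)
  = m\log_2 n,
\]
so the total number of distinct blocks is at most $n + 2m\log_2 n$, which in particular is $\le 2m\log_2 n + 2m + n$ as claimed. I expect the only delicate point to be the halving argument of the third paragraph, namely checking that the selection rule $2\cdot|S|\le|B|$ really forces a state's $Q$-block to at least halve on each selection; everything else is the decomposition supplied by \autoref{newBlockCount} together with a routine exchange of summation.
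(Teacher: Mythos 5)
Your proof is correct and follows essentially the same route as the paper's: decompose the count into the at most $n$ blocks of $C/P_0$ plus the newly created blocks per iteration, bound the latter by $2\cdot|\pred(S_i)|$ via \autoref{newBlockCount}, and exchange the order of summation after bounding the number of iterations in which a fixed state lies in the chosen splitter. The only (harmless) difference is that you prove the halving bound $|\{i\mid y\in S_i\}|\le\log_2 n$ directly from the side condition $2\cdot|S|\le|B|$, whereas the paper cites this fact from the underlying partition-refinement analysis with the slightly weaker constant $\log_2 n+1$; your constant is in fact marginally tighter and still implies the stated inequality.
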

  \begin{rem*}
    Note that the proof is similar to arguments given in the complexity analysis
    of the Paige-Tarjan algorithm; for instance, compare to
    \cite[p.~980]{PaigeTarjan87} (or \cite[Lem.~7.15]{concurSpecialIssue}).
  \end{rem*}
  \begin{proof}
    Because $|S|\le \frac{1}{2}\cdot |B|$ holds in step \ref{step1} of
    \autoref{coalgPT}, one can show that every state $x\in C$ is
    contained in the set $S$ in at most $(\log_2(n)+1)$
    iterations~\cite[Lem.~7.15]{concurSpecialIssue}.
    More formally, let
    $S_i\subsetneq B_i\in C/Q_i$ be the blocks picked in the $i$th iteration 
    of \autoref{coalgPT}. Then we have
    \begin{equation}
    |\{S_i\mid x \in S_i\}| \le \log_2 n + 1
      \qquad\text{for all }x\in C.
      \label{timesInSubblock}
    \end{equation}
    Let the algorithm terminate after $\ell$ iterations
    returning~$C/P_\ell$. Then, the number of new blocks introduced by
    step \ref{defPi1} is bounded as follows (note that after the third step,
    $x \in S_i$ is a side condition enforcing that we have a summand
    $|\pred(\set{x})|$ provided that $x$ lies in $S_i$, whereas before
    we sum over all $x \in S_i$): \allowdisplaybreaks
    \begin{align*}
      &\sum_{0\le i< \ell}
      |\{T'\in C/P_{i+1}\mid T' \notin C/P_i\}|
      \\ \le~& \sum_{0\le i< \ell} 2\cdot |\pred(S_i)|
               \tag{\autoref{newBlockCount}}
      \\ \le~& 2\cdot \sum_{0\le i< \ell} \,\sum_{x\in S_i}|\pred(\{x\})|
      \\ =~& 2\cdot \sum_{x\in C}\,\sum_{\substack{0\le i< \ell\\x\in S_i}} |\pred(\{x\})|
      \\ =~& 2\cdot \sum_{x\in C}\, |\pred(\{x\})|\cdot \sum_{\substack{0\le i< \ell\\x\in S_i}} 1
      \\ =~& 2\cdot \sum_{x\in C}\, |\pred(\{x\})|\cdot (\log_2n + 1)
             \tag*{by~\eqref{timesInSubblock}}
      \\ =~& 2\cdot m \cdot (\log_2n + 1)
             = 2\cdot m\cdot \log_2 n + 2\cdot m
    \end{align*}
    The only blocks we have not counted so far are the blocks of $C/P_0$, since
    $|C/P_0|\le n$, we have at most $2\cdot m\cdot \log_2 n + 2\cdot m+ n$
    different blocks in $(C/P_i)_{0\le i <\ell}$.
    \qedhere
  \end{proof}
  We are now ready to prove the main theorem on the dag size of
  formulae created by \autoref{algoCerts}.
  \begin{proof}[Proof of \autoref{certSize}]
    Regarding the height of the dag, it is immediate that~$\delta_i$ and $\beta_i$
    have a height of at most $i+1$. Since $|C/Q_{i}| < |C/Q_{i+1}| \le |C|=n$ for
    all $i$, there are at most $n$ iterations, with the final partition being
    $C/P_{n+1} = C/Q_{n+1}$.

    In \autoref{algoCerts} we create a new modal operator formula whenever \autoref{coalgPT}
    creates a new block in $C/P_i$. By \autoref{totalBlockCount}, the number of
    modalities in the dag is thus bounded by
    \[
      2\cdot m \cdot \log_2 n + 2\cdot m + n
    \]
    In every iteration of the main loop, $\beta$ is extended by two new
    formulae, one for $S$ and one for $B\setminus S$. The formula
    $\beta_{i+1}(S)$ does not increase the size of the dag, because no new
    node needs to be allocated. For $\beta_{i+1}(B\setminus S)$, we need to
    allocate one new node for the conjunction, so there are at most $n$ new
    such nodes allocated throughout the execution of the whole algorithm.
    Even if
    the optimization in \autoref{cancelConjunct} is applied, the additional run
    time can be neglected under the $\CO$-notation.\twnote{}
    \qedhere
  \end{proof}
\end{proofappendix}
\begin{proofappendix}{runTimePreserved}
  We implement every operation of \autoref{algoCerts} in constant time. The
  arrays for $\beta$ and $\delta$ are re-used in every iteration. Hence
  the index $i$ is entirely neglected and only serves as an indicator for
  whether we refer to a value before or after the loop iteration. We
  proceed by case distinction as follows:
  \begin{enumerate}
  \item Initialization step: 
    \begin{itemize}
    \item The only block $\{C\}$ in $C/Q_0$ has index 0, and so we make $\beta(0)$
      point to the node $\top$.
    \item For every block $T$ in $C/P_0$, \autoref{coalgPT} has computed
      $F!(c(x)) \in F1$ for some (in fact every) $x\in T$. Since $F1$ canonically
      embeds into $F3$ (\autoref{notationF1Mod}), we create a new node labelled
      $\fmod{Fj_1(F!(c(x)))}$ with two edges to~$\top$.
      
      For every $T\in C/P_0$, this runs in constant time and can be performed whenever
      the original \autoref{coalgPT} creates a new such block $T$.
    \end{itemize}
    
  \item In the refinement step, we can look up the certificates
    $\delta_i(S)$ resp.~$\beta_i(B)$ for $S$ resp.~$B$ in constant
    time using the indices of the blocks $S$ and $B$. Whenever the
    original algorithm creates a new block, we also immediately
    construct the certificate of this new block by creating at most
    two new nodes in the dag (with at most four outgoing
    edges). However, if a block does not change (that is,
    $[x]_{Q_i} = [x]_{Q_{i+1}}$ or $[x]_{P_i} = [x]_{P_{i+1}}$,
    resp.), then the corresponding certificate is not changed either
    in steps \autoref{defBetai1} resp.~\autoref{defDeltai1}.

    \smallskip
    \noindent
    In the loop body we update the certificates as follows:
    \begin{enumerate}[(I)]
    
  \item[\ref{defBetai1}]
    The new block $S\in C/Q_{i+1}$ just points to the certificate $\delta_i(S)$
    constructed earlier. For the new block $(B\setminus S) \in C/Q_{i+1}$, we
    allocate a new node $\wedge$, with one edge to $\beta_i(B)$ and one
    negated edge to $\delta_i(S)$. (See also details for
    \autoref{cancelConjunct} on the run time for computing the optimized negation.)
    
  \item[\ref{defDeltai1}]

    Not all resulting blocks have a transition to $S$. There may be
    (at most) one new block $T'\in C/P_{i+1}$, $T'\subseteq T$ with no
    transition to $S$ (see the proof of \autoref{newBlockCount}). In
    the refinable partition structure, such a block will inherit the
    index from $T$ (i.e.~the index of $T$ in $C/P_i$ equals the index
    of $T'$ in $C/P_{i+1}$). Moreover, every $x\in T'$ fulfils
    $F\chi_S^B(c(x)) = F\chi_\emptyset^B(c(x))$ (by
    \autoref{noEdgeNoChiS}), %
    and $F\chi_\emptyset^B(c(x)) = F\chi_\emptyset^B(c(y))$ for every
    $y\in T$ (by \autoref{stableP}). %

    Now, one first saves the node of the certificate $\delta_i(T)$ in some
    variable $\delta'$, say. Then the array $\delta$ is updated at index $T$ by the formula
    \[
      \fmod{F\chi_\emptyset^B(c(y))}(\delta_i(S),\beta_i(B))
      \qquad\text{for an arbitrary $y\in T$.}
    \]
    Consequently, any block $T'$ inheriting the index of $T$
    automatically has the correct certificate.
    
    The allocation of nodes for this formula is completely analogous
    to the one for an ordinary block $[x]_{P_{i+1}} \subsetneqq T$ having edges to $S$: One
    allocates a new node labelled $\wedge$ with edges to the saved node $\delta'$
    (the original value of $\delta_i(T)$) and to another newly allocated node labelled
    $\fmod{F\chi_S^B(c(x))}$ with edges to the nodes~$\delta_i(S)$ and $\delta_i(B)$.
    \qed
    \end{enumerate}
\end{enumerate}
\end{proofappendix}

\begin{proofappendix}[Details for]{cancelConjunct}
  \smnote{}
  
  In order to keep the formula
  size smaller, one can implement the optimization of
  \autoref{cancelConjunct}, but one has to take care not to increase the run
    time. To this end, mark every modal operator node $\fmod{t}(\delta,\beta)$
    in the formula dag with a boolean flag expressing whether:
    \begin{center}
      $\fmod{t}(\delta,\beta)$ is a conjunct of some $\beta_i$-formula.
    \end{center}
    Thus, every new modal operator in \ref{defDeltai1} is flagged
    `false' initially. When splitting the block $B$ in $C/Q_i$ into
    $S$ and $B\setminus S$ in step \ref{defBetai1}, the formula for
    block $B\setminus S$ is a conjunction of $\beta_i(B)$ and the
    negation of all `false'-marked conjuncts
    of~$\delta_i(S)$. Afterwards these conjuncts are all marked `true', because
    they are inherited by $\beta_i(S)$. The `false'-marked conjuncts
    always form a prefix of all conjuncts of a formula in
    $\delta_i$. It therefore suffices to greedily take conjuncts from the
    root of a formula graph while they are marked `false'.

    As a consequence, step \ref{defDeltai1} does not run in constant
    time but instead takes as many steps as there are `false'-marked
    conjuncts in~$\delta_i(S)$. However, over the whole execution of
    the algorithm this eventually amortizes because every newly
    allocated modal operator allocated is initially marked `false' and
    later marked `true' precisely once.
\end{proofappendix}

\begin{proofappendix}{thmLogRunTime}
  The overall run time is immediate, because the underlying \autoref{coalgPT}
  has run time $\CO((m+n)\cdot \log n\cdot p(c))$ and \autoref{algoCerts}
  preserves this run time by \autoref{runTimePreserved}.
\end{proofappendix}
  
\appendixsect{sec:cancellative}{Cancellative Functors}\label{sec:A-cancellative}

\begin{proofappendix}{monoidValCancellative}
  First note that for $FX=M^{(X)}$ the maps in \autoref{funCancellative}
  are defined by:
  \[
    \begin{array}{r@{\ }l@{\qquad}l}
      M^{(\chi_{\set{1,2}})}\colon & M^{(3)}\to M^{(2)}, & t \mapsto
      (t(0), t(1)+t(2)), \\[5pt]
      M^{(\chi_{\set{2}})}\colon & M^{(3)}\to M^{(2)}, & t \mapsto (t(0)+t(1), t(2)),
    \end{array}
  \]
  where we write $s\in M^{(2)}$ as the pair $(s(0), s(1))$.
  
  For $(\Leftarrow)$, let $s,t\in M^{(3)}$ with
  \[
    \fpair{M^{(\chi_{\set{1,2}})},M^{(\chi_{\set{2}})}}(s)
    = \fpair{M^{(\chi_{\set{1,2}})},M^{(\chi_{\set{2}})}}(t),
  \]
  which is written point-wise as follows:
  \begin{align*}  
    (s(0), s(1)+s(2))
    &= (t(0), t(1)+t(2)) \\
    (s(0)+s(1), s(2))
    &= (t(0)+t(1), t(2)).
  \end{align*}
  Hence, $s(0) = t(0)$, $s(2) = t(2)$ and moreover
  \[
    s(1) + s(2) = t(1) + t(2) = t(1) + s(2).
  \]
  Since $M$ is cancellative, we have $s(1) = t(1)$, which proves that
  $s=t$. Thus, the map
  $\fpair{M^{(\chi_{\set{1,2}})},M^{(\chi_{\set{2}})}}$ is injective.

  For $(\Rightarrow)$, let $a,b,c\in M$ with $c+a = c+b$. Define
  $s,t\in M^{(3)}$ by
  \[
    s(0) = s(2) = c,\quad s(1) = a
    \qquad\text{and}\qquad
    t(0) = t(2) = c,\quad t(1) = b.
  \]
  Thus,
  \[
  \begin{aligned}
    M^{(\chi_{\set{1,2}})}(s) &= (s(0), s(1) + s(2)) \\
    &= (c,a+c) \\
    &= (c,b+c)
    \\
    &= (t(0), t(1) + t(2))
    \\
    &= M^{(\chi_{\set{1,2}})}(t),
  \end{aligned}
  \qquad\qquad
  \begin{aligned}
    M^{(\chi_{\set{2}})}(s)
    & = (s(0) + s(1), s(2))\\
    &= (c+a,c)\\
    &= (c+b,c) \\
    &= (t(0) + t(1), t(2)) \\
    &= M^{(\chi_{\set{2}})}(t).
  \end{aligned}
  \]
  Since $\fpair{M^{(\chi_{\set{1,2}})},M^{(\chi_{\set{2}})}}$ is
  injective, we see that $s =t $ holds. Thus, we have $a=s(1) = t(1) = b$,
  which proves that $M$ is cancellative.\qed
\end{proofappendix}

\begin{proofappendix}{cancellativeClosure}
  \begin{enumerate}
  \item For the constant functor $C_X$ with value $X$,  $C_X\chi_S$ is the
    identity map on $X$ for every set~$S$. Therefore $C_X$ is cancellative.

  \item The identity functor is cancellative because the map $\langle
    \chi_{\set{1,2}},\chi_{\set{2}}\rangle$ is clearly injective.

  \item Let $\alpha\colon F\monoto G$ a natural transformation with injective
    components and let $G$ be cancellative. Combining the naturality squares of
    $\alpha$ for $\chi_{\set{1,2}}$ and $\chi_{\set{2}}$, we obtain the
    commutative square:
    \[
      \begin{tikzcd}
        F3
        \arrow{r}{\fpair{F\chi_{\set{1,2}}, F{\chi_{\set{2}}}}}
        \arrow[>->]{d}[swap]{\alpha_{3}}
        &[15mm]
        F2\times F2
        \arrow{d}{\alpha_{2}\times \alpha_{2}}
        \\
        G3
        \arrow[>->]{r}{\fpair{G\chi_{\set{1,2}}, G{\chi_{\set{2}}}}}
        & G2\times G2
      \end{tikzcd}
    \]
    Every composition of injective maps is injective, and so by standard
    cancellation laws for injective maps,
    $\fpair{F\chi_{\set{1,2}},F\chi_{\set{2}}}$ is injective as well, showing
    that the subfunctor $F$ is cancellative.

  \item Let $(F_i)_{i\in I}$ be a family of cancellative
    functors, and suppose that we have elements $s,t\in
    (\prod_{i\in I}F_i)(3) = \prod_{i \in I}F_i3$ with
    \[
      \big(\prod_{i\in I}F_i\chi_{\set{1,2}}\big)(s)
      = \big(\prod_{i\in I}F_i\chi_{\set{1,2}}\big)(t)
      \quad\text{and}\quad
      \big(\prod_{i\in I}F_i\chi_{\set{2}}\big)(s)
      = \big(\prod_{i\in I}F_i\chi_{\set{2}}\big)(t).
    \]
    Write $\pr_i$ for the $i$th projection function from the product.
    For every $i \in I$ we have: 
    \[
      F_i\chi_{\set{1,2}}(\pr_i(s))
      = F_i\chi_{\set{1,2}}(\pr_i(t))
      \qquad\text{and}\qquad
      F_i\chi_{\set{2}}(\pr_i(s))
      = F_i\chi_{\set{2}}(\pr_i(t)).
    \]
    Since every $F_i$ is cancellative, we have $\pr_i(s)=\pr_i(t)$ for every
    $i\in I$. This implies~$s=t$ since the product projections $(\pr_i)_{i\in I}$
    are jointly injective.
    \smnote{}
    
  \item Again, let $(F_i)_{i\in I}$ be a family of cancellative
    functors. Suppose that we have elements $s,t\in
    (\coprod_{i\in I}F_i)(3) = \coprod_{i \in I}F_i3$ satisfying%
    \smnote{}
    \[%
      \big(\coprod_{i\in I}F_i\chi_{\set{1,2}} \big)(s)
      =  \big(\coprod_{i\in I}F_i\chi_{\set{1,2}} \big)(t)
      \quad\text{and}\quad
       \big(\coprod_{i\in I}F_i\chi_{\set{2}} \big)(s)
      =  \big(\coprod_{i\in I}F_i\chi_{\set{2}} \big)(t).
    \]
    This implies that there exists an $i\in I$ and $t',s'
    \in F_i$ with $s = \inj_i (s')$, $t=\inj_i(t')$, and 
    \[
      F_i\chi_{\set{1,2}}(s)
      = F_i\chi_{\set{1,2}}(t)
      \qquad\text{and}\qquad
      F_i\chi_{\set{2}}(s)
      = F_i\chi_{\set{2}}(t).
    \]
    Since $F_i$ is cancellative, we have $s=t$ as desired.
    \qed
  \end{enumerate}
\end{proofappendix}

\begin{proofappendix}[Details for]{cancellativeVsZippable}
  \begin{center}
    \begin{tabular}{@{}lll@{}}
      \toprule
      Operation & cancellative & non-cancellative \\
      \midrule
      Quotient & $X  \mapsto \coprod_{n \in \N}X^n$ & $\Powf$ \\
      Composition & $\Bag=\N^{(-)}$ & $\Bag\Bag$ \\
      \bottomrule
    \end{tabular}
    \hfill
    \begin{tabular}{@{}lll@{}}
      \toprule
      & cancellative & non-cancellative \\
      \midrule
      zippable &
              $X\mapsto X$
                 & $\Powf$
      \\
      non-zippable & see \eqref{size4sets}
                   & $\Powf\Powf$
      \\
      \bottomrule
    \end{tabular}
  \end{center}
  \begin{enumerate}
\item Cancellative functors are not closed under quotients: e.g.~the
  non-cancellative functor $\Powf$ is a quotient of the signature
  functor $X  \mapsto \coprod_{n \in \N}X^n$ (which is cancellative by \autoref{cancellativeClosure}).

\item Cancellative functors are not closed under composition. For the additive
  monoid $(\N,+,0)$ of natural numbers, the monoid-valued functor $\Bag =
  \N^{(-)}$ sends $X$ to the set of finite multisets on~$X$ (`bags'). Since $\N$
  is cancellative, $\Bag$ is a cancellative functor. However, $\Bag\Bag$ is not:
  \begin{align*}
    &\fpair{\Bag\Bag\chi_{\set{1,2}},\Bag\Bag\chi_{\set{2}}}\big(\bag[\big]{\bag{0,1},\bag{1,2}}\big)
    \\
    &= \big(\bag[\big]{\bag{0,1},\bag{1,1}}, \bag[\big]{\bag{0,0},\bag{0,1}}\big) \\
    &= \big(\bag[\big]{\bag{0,1},\bag{1,1}}, \bag[\big]{\bag{0,1},\bag{0,0}}\big) \\
    &=\fpair{\Bag\Bag\chi_{\set{1,2}},\Bag\Bag\chi_{\set{2}}}\big(\bag[\big]{\bag{0,2}, \bag{1,1}}\big)
  \end{align*}
  Here, we use $\bag{\cdots}$ to denote multisets,
  so~$\bag{0,1} = \bag{1,0}$ but $\bag{1}\neq \bag{1,1}$.

\item The identity functor $X\mapsto X$ is both
  zippable~\cite{concurSpecialIssue} and cancellative~(\autoref{cancellativeClosure}).
\item The monoid-valued functor $\Powf = \BoolMonoid^{(-)}$ is
  zippable~\cite{concurSpecialIssue}, but not
  cancellative~(\autoref{monoidValCancellative}), because $\BoolMonoid$ is a
  non-cancellative monoid.

  \item
  The functor $\Pow\Pow$ is neither
  zippable~\cite[Ex.~5.10]{concurSpecialIssue} nor cancellative because
  \begin{align*}
    \fpair{\Pow\Pow\chi_{\set{1,2}},\Pow\Pow\chi_{\set{2}}} (\big\{\set{0}, \set{2}\big\})
    &= (\set[\big]{\set{0}, \set{1}}, \set[\big]{\set{0}, \set{1}})
    \\ &= \fpair{\Pow\Pow\chi_{\set{1,2}},\Pow\Pow\chi_{\set{2}}} (\big\{\set{0},
    \set{1}, \set{2}\big\}).
  \end{align*}
  
  \item
  Every functor $F$ satisfying $|F(2+ 2)| > 1$ and $|F3| = 1$ is
  cancellative but not zippable:
  \begin{itemize}
  \item Indeed, every map with domain $1$ is injective, in particular
    the map
    \[
      \fpair{F\chi_{\set{1,2}},F\chi_{\set{2}}}\colon 1\cong F3 \longrightarrow F2\times F2,
    \]
    whence $F$ is cancellative.

  \item If $|F(2+2)|>1$ and $|F3|=1$ we have that the map
    \[
      \fpair{2+\mathord{!},\mathord{!}+2}
      \colon \underbrace{F(2+2)}_{|-|\mathrlap{\, > 1}} \to
      \underbrace{F(2+1)}_{\cong F3\cong 1} \times
      \underbrace{F(1+2)}_{\cong F3\cong 1}\cong 1
    \]
    is not injective, whence $F$ is not zippable.
  \end{itemize}
  An example for such a functor is given by\twnote{}
  \begin{equation}
    FX = \{ S \subseteq X : |S| = 0 \text{ or }|S| = 4\}
    \label{size4sets}
  \end{equation}
  which sends a map $f\colon X\to Y$ to the map $Ff\colon FX\to FY$ defined by
  \[
    Ff(S) = \begin{cases}
      f[S] & \text{if }|f[S]| = 4 \\
      \emptyset & \text{otherwise}.
    \end{cases}
  \]

\item For the proof of
  \[
    F\text{ zippable }\&~
    F\text{ cancellative }
    ~~\Longrightarrow~~
    F\text{ $m$-zippable }
    \qquad\text{for all }m\in \N
  \]
  recall from König et~al.~\cite{KoenigEA20} that a functor $F$
  is $m$-zippable
  if the canonical map
  \[
    \unzip_m\colon~~ F(A_1+A_2+\ldots + A_m)
    ~~\longrightarrow~~ F(A_1 + 1)
    \times F(A_2 + 1)
    \times\ldots
    \times F(A_m + 1)
  \]
  is injective. Formally, $\unzip_m$ is given by
  \[
    \fpair{F[\Delta_{i,j}]_{j\in \bar m}}_{i\in \bar m}\colon~~
    F\coprod_{j = 1}^{m} A_j
    \longrightarrow
    \prod_{i = 1}^{m} F(A_i + 1)
  \]
  where $\bar m$ is the set $\bar m = \set{1,\ldots,m}$ and the map $\Delta_{i,j}$ is defined by
  \[
    \Delta_{i,j}\colon A_j\to A_i + 1
    \qquad
    \Delta_{i,j} := \begin{cases}
      A_j \xrightarrow{\inl} A_i + 1
      &\text{if }i = j \\
      A_j \xrightarrow{!} 1 \xrightarrow{\inr} A_i + 1
      & \text{if }i \neq j.
    \end{cases}
  \]
  First, we show that for a zippable and cancellative functor $F$, the map
  \[
    g_{A,B} ~~:=~~
    F(A+1+B)
    \xrightarrow{\fpair{F(A+!), F(!+B)}}
    F(A+1)\times F(1+B)
  \]
  is injective for all sets $A,B$. Indeed, we have
  the following chain of injective maps, where the index at the $1$ is only
  notation to distinguish coproduct components more easily:
  \allowdisplaybreaks
  \begin{align*}
    & F(A+(1_M +B))
    \\
    & \quad\monodown \fpair{F(A+!), F(!+(1_M+B))} \tag{$F$ is zippable}
    \\
    & F(A+1) \times F(1_A+1_M+B)
    \\
    & \quad\monodown \id\times \fpair{F(!+B),F(1_A+1_M+!)}\tag{$F$ is zippable}
    \\
    & F(A+1)\times F(1+B) \times F(1_A+1_M+1_B)
    \\
    & \quad\monodown \id\times \id \times \fpair{F\chi_{1_M+1_B}, F\chi_{1_B}} \tag{$F$ is cancellative, $1_A+1_M+1_B \cong \{0,1,2\}$}
    \\
    & F(A+1)\times F(1+B) \times F2 \times F2
  \end{align*}
  Call this composition $f$.
  The injective map $f$ factors through $g_{A,B}$, because it matches with $g_{A,B}$ on the
  components $F(A+1)$ and $F(1+B)$, and for the other components, one has the
  map
  \[
    h ~:=~
    F(A+1)\times F(1+B)
    \xrightarrow{F\chi_{1}\times F\chi_B}
    F2 \times F2
  \]
  with $f = \fpair{\id_{F(A+1)\times F(1+B)}, h}\cdot g_{A,B}$. Since $f$ is
  injective, $g_{A,B}$ must be injective, too.
  
  Also note that a function $F$ is cancellative iff equivalently the map
  \[
    \fpair{F(1+!), F(!+1)}\colon 
    F(1+1+1)\longrightarrow F(1+1)\times F(1+1)
  \]
  is injective, for $!\colon 1+1 \to 1$ and $1+1+1\cong 3$ and $1+1 \cong 2$.

  We now proceed with the proof of the desired implication by induction on $m$.
  In the base cases $m=0$ and $m=1$ there is nothing to show because every functor is
  $0$- and $1$-zippable, and for $m=2$, the implication is trivial (zippability
  and $2$-zippability are identical properties). In the
  inductive step, given that $F$ is $2$-zippable, $m$-zippable ($m\ge 2$), and
  cancellative, we show that $F$ is $(m+1)$-zippable.

  We have the following chain of injective maps, where we again annotate some of
  the singleton sets $1$ with indices to indicate from which coproduct
  components they come:
  \begin{align*}
    & F(A_1+\ldots+A_{m-1}+(A_m+A_{m+1}))
    \\
    & \qquad\monodown \unzip_m 
      \tag{$F$ is $m$-zippable}
    \\
    &
    \prod_{i=1}^{m-1} F(A_i + 1) 
    \times F(A_m + A_{m+1} + 1_{1..(m-1)})
    \\
    \cong~
    &
    \prod_{i=1}^{m-1} F(A_i + 1) 
    \times F(A_m + 1_{1..(m-1)} + A_{m+1})
    \\
    &\quad \monodown {\id\times g_{A_m,A_{m+1}}}
      \tag{the above injective helper map $g$}
      \\
    &
    \prod_{i=1}^{m-1} F(A_i + 1) 
    \times F(A_m + 1)
    \times F(1+A_{m+1})
    \\
    \cong~
    &
    \prod_{i=1}^{m-1} F(A_i + 1) 
    \times F(A_m + 1)
    \times F(A_{m+1} + 1)
  \end{align*}
  This composition thus is injective as well, and in fact the composition is
  precisely $\unzip_{m+1}$, showing that $F$ is $(m+1)$-zippable.
  \qed
\end{enumerate}
\end{proofappendix}

The optimization present in the algorithms for
Markov chains~\cite{ValmariF10} and automata~\cite{Hopcroft71} can now
be adapted to coalgebras for cancellative functors, where it suffices
to split only according to transitions into $S$, neglecting
transitions into $B\setminus S$. More formally, this means that we
replace the three-valued $\chi_S^B\colon C \to 3$ with
$\chi_S\colon C\to 2$ in the refinement step \ref{defPi1}: %

\begin{proposition}\label{optimizedPcancel}
  Let $F$ be a cancellative set functor. For $S\in C/P_i$ in the $i$-th
  iteration of \autoref{coalgPT}, we have 
  \(
  P_{i+1} = P_i \cap \ker (C\xrightarrow{c}{FC}\xrightarrow{F\chi_S} F2).
  \)
\end{proposition}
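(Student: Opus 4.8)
The plan is to show that, under cancellativity, refining $P_i$ along the three-valued map $F\chi_S^B\cdot c$ produces the same partition as refining along the two-valued map $F\chi_S\cdot c$, the extra information about $B$ being redundant precisely because $P_i$ is already stable with respect to $B\in C/Q_i$. Recall from the discussion after \autoref{funCancellative} that $\chi_{\set{1,2}}\cdot\chi_S^B=\chi_B$ and $\chi_{\set{2}}\cdot\chi_S^B=\chi_S$, where $\chi_B,\chi_S\colon C\to 2$ are ordinary characteristic functions; by functoriality this gives $\fpair{F\chi_{\set{1,2}},F\chi_{\set{2}}}\cdot F\chi_S^B=\fpair{F\chi_B,F\chi_S}$.

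The first key step uses cancellativity. Since $\fpair{F\chi_{\set{1,2}},F\chi_{\set{2}}}$ is injective (\autoref{funCancellative}) and postcomposition with an injective map preserves kernels, I obtain
\[
  \ker(F\chi_S^B\cdot c)
  = \ker\big(\fpair{F\chi_B,F\chi_S}\cdot c\big)
  = \ker(F\chi_B\cdot c)\cap\ker(F\chi_S\cdot c),
\]
the last equality being the standard fact that the kernel of a pairing is the intersection of the component kernels. Intersecting with $P_i$ and recalling step \ref{defPi1} of \autoref{coalgPT} yields
\[
  P_{i+1}=P_i\cap\ker(F\chi_S^B\cdot c)=P_i\cap\ker(F\chi_B\cdot c)\cap\ker(F\chi_S\cdot c).
\]

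The second key step discards the factor $\ker(F\chi_B\cdot c)$ using stability. By \autoref{stableP}, every $P_i$-related pair agrees under $F\chi_\emptyset^B\cdot c$, i.e.\ $P_i\subseteq\ker(F\chi_\emptyset^B\cdot c)$. Since $\chi_B=\chi_{\set{1,2}}\cdot\chi_\emptyset^B$, functoriality gives $\ker(F\chi_\emptyset^B\cdot c)\subseteq\ker(F\chi_B\cdot c)$, so $P_i\subseteq\ker(F\chi_B\cdot c)$ and hence $P_i\cap\ker(F\chi_B\cdot c)=P_i$. Substituting into the previous display collapses it to $P_{i+1}=P_i\cap\ker(F\chi_S\cdot c)$, which is the claim.

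The main obstacle is conceptual rather than computational: cancellativity by itself only shows that $F\chi_S^B\cdot c$ is \emph{jointly} determined by $F\chi_B\cdot c$ and $F\chi_S\cdot c$, leaving open the worry that the $B$-component still contributes to the split. The point of invoking \autoref{stableP} is exactly that $F\chi_B\cdot c$ is already constant on every block of $C/P_i$, so that only the $S$-component can refine the partition. The argument therefore hinges on combining these two facts; once they are lined up, everything else is routine manipulation of kernels.
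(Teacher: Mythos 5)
Your proof is correct and follows essentially the same route as the paper's: factor $\fpair{F\chi_B,F\chi_S}$ through $F\chi_S^B$, use cancellativity to identify $\ker(F\chi_S^B\cdot c)$ with $\ker(F\chi_B\cdot c)\cap\ker(F\chi_S\cdot c)$, and then discard the $B$-factor because $P_i\subseteq\ker(F\chi_B\cdot c)$. The only (harmless) divergence is in justifying that last inclusion: the paper appeals directly to $P_i\subseteq Q_i\subseteq\ker(F\chi_B\cdot c)$, whereas you derive it from \autoref{stableP} via $\chi_B=\chi_{\set{1,2}}\cdot\chi_\emptyset^B$, which is if anything a more careful reading of the stability invariant.
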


\begin{proofappendix}{optimizedPcancel}
  From the definition~\eqref{eqKer} of the kernel, we immediately
  obtain the following properties for all maps $f,g\colon Y\to Z$,
  $h \colon X\to Y$:
  \begin{align}
    f\text{ injective} ~~&\Longrightarrow~~ \ker(f\cdot h) = \ker(h)
                         \label{kerInjective}
                         \\
    \ker(f) = \ker(g) ~~&\Longrightarrow~~ \ker(f\cdot h) = \ker(g\cdot h)
                         \label{kerPreCompose}
                          \\
    \ker(\fpair{f,g}) &= \ker(f)\cap\ker(g).
                         \label{kerIntersect}
  \end{align}
  \twnote{}
  For every coalgebra $c\colon C\to FC$ and $S\subseteq B\subseteq C$
  we have
  \[
    \fpair{F\chi_B,F\chi_S} = \fpair{F\chi_{\set{1,2}}, F\chi_{\set{2}}}\cdot F\chi_S^B.
  \]
  Since $F$ is cancellative, $\fpair{F\chi_{\set{1,2}}, F\chi_{\set{2}}}$ is
  injective, and we thus obtain
  \begin{equation}
    \ker(\fpair{F\chi_B,F\chi_S})
    = \ker(\fpair{F\chi_{\set{1,2}}, F\chi_{\set{2}}}\cdot F\chi_S^B)
    \overset{\text{\eqref{kerInjective}}}{=} \ker(F\chi_S^B).
  \end{equation}
  By \eqref{kerPreCompose}, this implies that
  \begin{equation}
    \ker(\fpair{F\chi_B,F\chi_S}\cdot c)
    = \ker(F\chi_S^B\cdot c).
    \label{cancelBinaryChi}
  \end{equation}
  
  Let $B\in C/Q_i$ be the block that is split into $S$ and $B\setminus S$ in
  iteration $i$. Since $P_i$ is finer than $Q_i$ and $B\in C/Q_i$, we have 
  $P_i\subseteq Q_i\subseteq \ker(F\chi_B\cdot c)$; thus:
  \begin{equation}
    P_i = P_i \cap \ker(C\xrightarrow{c}{FC} \xrightarrow{F\chi_B} F2).
    \label{PiKerB}
  \end{equation}
  Now we verify the desired property:\smnote{}
  \begin{align*}
    P_{i+1} &= ~P_i\cap \ker(C\xrightarrow{c}{FC}\xrightarrow{F\chi_S^B} F2)
    & \text{(by~\ref{defPi1})}
    \\
    &\overset{\mathclap{\text{}}}{=} 
    ~P_i\cap \ker(\fpair{F\chi_B,F\chi_S}\cdot c)
    & \text{(by~\eqref{cancelBinaryChi})}
    \\
    &= P_i\cap \ker(\fpair{F\chi_B\cdot c,F\chi_S\cdot c})
    &\text{(def.~$\fpair{-,-}$)}
    \\ &
    = P_i\cap \ker(F\chi_B\cdot c)\cap\ker(F\chi_S\cdot c)
    &\text{(by~\eqref{kerIntersect})}
      \\ &
    = P_i\cap\ker(F\chi_S\cdot c)
    &\text{(by~\eqref{PiKerB})}
  \end{align*}
  This completes the proof.
  \qed
\end{proofappendix}
\begin{expl}
  For coalgebras for a signature functor $\Sigma$ or a monoid-valued
  functor $M^{(-)}$ for cancellative~$M$, the refinement step
  \ref{defPi1} of \autoref{coalgPT} can be optimized to compute
  $P_{i+1}$ according to \autoref{optimizedPcancel}.
\end{expl}

Observe that, in the optimized step~\ref{defPi1}, $B$ is no
longer mentioned. It is therefore unsurprising that we do not need a
certificate for it when constructing certificates for the blocks of
$P_{i+1}$. Instead, we can reflect the map
$F\chi_S\cdot c\colon C\to F2$ in the coalgebraic modal formula and
take $F2$ as the (unary) modal operators.  Just like $F1$ in
\autoref{notationF1Mod}, the set $F2$ canonically embeds into $F3$:

\begin{proofappendix}{patchyTheorem}
  Before proving \autoref{patchyTheorem}, we define a new set of (unary)
  modalities (\autoref{notationF2Mod}), establish a lemma about its semantics
  (\autoref{lemF2CoalgSem}), fully phrase the entire optimized algorithm
  (\autoref{algoCertsCancel}), and then show its correctness (\autoref{algoPatchCorrect}).
\end{proofappendix}
\begin{notation} \label{notationF2Mod} Define the injective map
  $j_2\colon 2\monoto 3$ by $j_2(0) = 1$ and $j_2(1)=2$. Then the
  injection $Fj_2\colon F2\monoto F3$ provides a way to interpret
  elements $t\in F2$ as unary modalities $\fmod{t}$:
  \[
    \fmod{t}(\delta) := \fmod{Fj_2(t)}(\delta,\top).
  \]
\end{notation}

\begin{proofappendix}[Remark to]{notationF2Mod}
  There are several different ways to define $\fmod{t}(\delta)$ for $t\in
  F2$, depending on the definition of the inclusion $j_2$.
  \begin{center}
    \def\arraystretch{1.2}
    \begin{tabular}{l@{\qquad}l@{\qquad}l}
      \toprule
      $j_2\colon 2\monoto 3$ & $j_2\cdot \chi_S$ for $S\subseteq C$
      & Definition for $t\in F2$ \\
      \midrule
      $0\mapsto 0, 1\mapsto 1$
      & $j_2\cdot \chi_S = \chi_{\emptyset}^S$
      & $\fmod{t}(\delta) := \fmod{Fj_2(t)}(\bot,\delta)$
        \\
      $0\mapsto 0, 1\mapsto 2$
      & $j_2\cdot \chi_S = \chi_{S}^S$
      & $\fmod{t}(\delta) := \fmod{Fj_2(t)}(\delta,\delta)$
        \\
      $0\mapsto 1, 1\mapsto 2$
      & $j_2\cdot \chi_S = \chi_{S}^C$
      & $\fmod{t}(\delta) := \fmod{Fj_2(t)}(\delta,\top)$
        \\
      \bottomrule
    \end{tabular}
  \end{center}
  All these variants make the following \autoref{lemF2CoalgSem} true because in any case:
  \begin{equation}
    \fmod{t}(\delta) = \fmod{Fj_2(t)}(\phi,\psi)
    \qquad\text{implies}\qquad
    j_2\cdot \chi_{\semantics{\delta}} = \chi_{\semantics{\phi}}^{\semantics{\psi}}.
    \label{eqF2ModProp}
  \end{equation}
\end{proofappendix}
Analogously to \autoref{lemF3CoalgSem} we can show:
\begin{lemma} \label{lemF2CoalgSem} Given a cancellative functor~$F$,
  an $F$-coalgebra $(C,c)$, $t\in F2$, a formula~$\delta$, and
  $x \in C$, we have $x\in \semantics{\fmod{t}(\delta)}$ if and only if
  $F\chi_{\semantics{\delta}}(c(x)) = t$.
\end{lemma}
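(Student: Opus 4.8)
The plan is to reduce the claim to the already-established \autoref{lemF3CoalgSem} via the definition of the unary $F2$-modality. Starting from \autoref{notationF2Mod}, we have $\fmod{t}(\delta) = \fmod{Fj_2(t)}(\delta,\top)$, where $j_2\colon 2\monoto 3$ sends $0\mapsto 1$ and $1\mapsto 2$. Since $\semantics{\top}=C$, the side condition $\semantics{\delta}\subseteq\semantics{\top}$ of \autoref{lemF3CoalgSem} holds trivially, so applying that lemma to the binary modality $\fmod{Fj_2(t)}$ with arguments $\delta$ and $\top$ immediately yields
\[
  x\in\semantics{\fmod{t}(\delta)}
  \quad\Longleftrightarrow\quad
  F\chi_{\semantics{\delta}}^{\semantics{\top}}(c(x)) = Fj_2(t).
\]

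The next step is to rewrite the three-valued map occurring on the right. Taking $B=\semantics{\top}=C$ in~\eqref{eqChi3}, the \textqt{$0$} branch of $\chi_S^B$ disappears, since $C\setminus B=\emptyset$; hence $\chi_{\semantics{\delta}}^C\colon C\to 3$ takes only the values $2$ (on $\semantics{\delta}$) and $1$ (elsewhere). This is precisely the behaviour of $j_2$ post-composed with the two-valued characteristic map, so $\chi_{\semantics{\delta}}^C = j_2\cdot\chi_{\semantics{\delta}}$ — exactly the factorization recorded in~\eqref{eqF2ModProp}. By functoriality, $F\chi_{\semantics{\delta}}^C = Fj_2\cdot F\chi_{\semantics{\delta}}$, so the condition above becomes $Fj_2\big(F\chi_{\semantics{\delta}}(c(x))\big) = Fj_2(t)$.

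Finally, since $j_2$ is injective and we may assume without loss of generality that $F$ preserves injective maps (\autoref{trnkovahull}), the map $Fj_2$ is injective; cancelling it gives $F\chi_{\semantics{\delta}}(c(x)) = t$, as required. I expect the only point demanding care to be the factorization $\chi_{\semantics{\delta}}^C = j_2\cdot\chi_{\semantics{\delta}}$, which hinges on the specific choice of $j_2$ together with $\semantics{\top}=C$ forcing the \textqt{$0$} case of $\chi_S^B$ to vanish. Notably, cancellativity of $F$ is not actually used in this argument; the sole functorial ingredient is preservation of injectives, needed to cancel $Fj_2$. The cancellative hypothesis is inherited from the ambient section and becomes relevant only when these $F2$-modalities are subsequently deployed in the optimized algorithm.
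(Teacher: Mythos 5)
Your proof is correct and follows essentially the same route as the paper's: unfold $\fmod{t}(\delta)=\fmod{Fj_2(t)}(\delta,\top)$, apply \autoref{lemF3CoalgSem} with $\semantics{\top}=C$, use the factorization $\chi_{\semantics{\delta}}^C=j_2\cdot\chi_{\semantics{\delta}}$, and cancel the injective $Fj_2$ via preservation of injections (\autoref{trnkovahull}). Your side remark that cancellativity of $F$ plays no role in this particular argument is also accurate.
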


\noindent 
In \autoref{algoCerts}, the family $\beta$ is only used in the
definition of $\delta_{i+1}$ to characterize the larger block $B$ that
has been split into the smaller blocks $S\subseteq B$ and
$B\setminus S$. For a cancellative functor, we can replace
\[
    \fmod{F\chi_S^B(c(x))}(\delta_i(S),\beta_i(B))
    \quad\text{ with }\quad
    \fmod{F\chi_S(c(x))}(\delta_i(S))
\]
in the definition of $\delta_{i+1}$. Hence, we can omit $\beta_i$ from
\autoref{algoCerts} altogether, obtaining the following algorithm,
which is again based on coalgebraic partition refinement
(\autoref{coalgPT}).

\begin{proof}[Proof of \autoref{lemF2CoalgSem}]
  Since we put $j_2\colon 2\monoto 3$ with $j_2(0) = 1$ and $j_2(1)=2$,
  we have
  $j_2\cdot \chi_{S} = \chi_S^C$ for all $S\subseteq C$.
  \begin{align*}
    \semantics{\fmod{t}(\delta)}
    &= \semantics{\fmod{Fj_2(t)}(\delta,\top)}
      \tag{\autoref{notationF2Mod}}
      \\
    & = \{x \in C \mid F\chi_{\semantics{\delta}}^{C}(c(x)) = Fj_2(t)\}
      \tag{\autoref{lemF3CoalgSem}, $\semantics{\top} = C$}
      \\
    & = \{x \in C \mid Fj_2(F\chi_{\semantics{\delta}}(c(x))) = Fj_2(t)\}
      \tag{$\chi_{\semantics{\delta}}^C = j_2\cdot \chi_{\semantics{\delta}}$}
      \\
    & = \{x \in C \mid F\chi_{\semantics{\delta}}(c(x)) = t\}
      \tag{$Fj_2$ injective}
  \end{align*}
  In the last step, we use that $F$ preserves injective maps (\autoref{trnkovahull})
\end{proof}

\begin{algorithm}\label{algoCertsCancel}
  We extend \autoref{coalgPT} as follows. Initially, define
  \[
    \delta_0([x]_{P_0}) = \fmod{F!(c(x))}.
  \]
  In the $i$-th iteration, extend step \ref{defPi1} by the additional assignment
  \begin{enumerate}[({A$\!'$}1)]
    \setcounter{enumi}{2}
      \item\label{defDeltai1Optimized} $
      \delta_{i+1}([x]_{P_{i+1}}) = \begin{cases}
           \delta_{i}([x]_{P_{i}}) &\text{if }[x]_{P_{i+1}} = [x]_{P_i} \\
          \delta_{i}([x]_{P_{i}}) \wedge \fmod{F\chi_S(c(x))}(\delta_i(S))
          &\text{otherwise.}\\
          \end{cases}
        $
  \end{enumerate}
\end{algorithm}
The certificates thus computed are reduced to roughly half the size
compared to \autoref{algoCerts}; the asymptotic run time and formula
size (\autoref{complexityAnalysis}) remain unchanged. More importantly:
\begin{rem} \label{cancelNegFree} The certificates constructed by
  \autoref{algoCertsCancel} do not contain negation (or disjunction); they are
  built from $\top$, conjunction $\wedge$, and unary modal operators $\fmod{t}$
  for $t \in F2$ (the nullary operators $\fmod{t}$ for $t\in F1$ embed into $F2$).
\end{rem}
\begin{proof}[Details on \autoref{cancelNegFree}]
  Define the injective map $j_{12}\colon 1\monoto 2$ by $j_{12}(0) = 1$. Hence,
  we can also embed the nullary $t\in F1$ into $F2$:
  \[
    \fmod{t} = \fmod{Fj_{12}(t)}(\top)
    \qquad \text{(cf.~\autoref{notationF2Mod})}.
  \]
  This is compatible with the notations established so far because we have
  $j_2\cdot j_{12} = j_1\colon 1 \monoto 3$ for the inclusions defined in \autoref{notationF1Mod}
  and \autoref{notationF2Mod}. Thus, we obtain the same modal operator
  regardless of whether we embed $t\in F1$ first into $F2$ and from there into
  $F3$ ($j_2$, \autoref{notationF2Mod}) or directly into $F3$
  ($j_1$, \autoref{notationF1Mod}):
  \(
    \fmod{t} = \fmod{Fj_{12}(t)}(\top) = \fmod{Fj_2(Fj_{12}(t))}(\top,\top)
    =\fmod{Fj_{1}(t)}(\top,\top).
  \)
\end{proof}

\begin{theorem}\label{algoPatchCorrect}
  For cancellative functors, \autoref{algoCertsCancel} is correct;
  that is, for all $i\in \N$ we have:
  \[
    \forall S\in X/P_i\colon
    \semantics{\delta_i(S)}
    = S.
  \]
\end{theorem}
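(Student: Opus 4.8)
The plan is to run the same induction on the iteration index $i$ as in the proof of \autoref{algoCertsCorrect}, but to replace the binary $F3$-modalities $\fmod{F\chi_S^B(c(x))}(\delta_i(S),\beta_i(B))$ by the unary $F2$-modalities $\fmod{F\chi_S(c(x))}(\delta_i(S))$ of \autoref{notationF2Mod}, and to invoke \autoref{optimizedPcancel} wherever the earlier proof unfolded the definition of $P_{i+1}$ via $F\chi_S^B$. Since $\beta$ no longer occurs in \autoref{algoCertsCancel}, the induction hypothesis is now the single statement $\semantics{\delta_i(S)} = S$ for all $S\in C/P_i$ (the claim with $X$ replaced by the carrier $C$).

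The first ingredient I would isolate is the $F2$-analogue of the identities \eqref{eqF3ModBS:1}--\eqref{eqF3ModBS}: for any $x\in C$ and any formula $\phi_S$ with $\semantics{\phi_S}=S$, \autoref{lemF2CoalgSem} gives
\[
  \semantics{\fmod{F\chi_S(c(x))}(\phi_S)}
  = \{x'\in C \mid F\chi_S(c(x')) = F\chi_S(c(x))\}
  = [x]_{\ker(F\chi_S\cdot c)},
\]
using that $\semantics{\phi_S}=S$ and hence $\chi_{\semantics{\phi_S}}=\chi_S$. This is exactly the fact that the unary modality singles out the refinement class of $x$ under $F\chi_S\cdot c$. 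The base case $i=0$ is unchanged from \autoref{algoCertsCorrect}: since $\delta_0([x]_{P_0}) = \fmod{F!(c(x))}$ and $P_0 = \ker(F!\cdot c)$, \autoref{lemF1CoalgSem} yields $\semantics{\delta_0([x]_{P_0})} = \{y\mid F!(c(y)) = F!(c(x))\} = [x]_{P_0}$.

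For the inductive step I would assume $\semantics{\delta_i(S)}=S$ for all $S\in C/P_i$, unfold step \ref{defDeltai1Optimized}, and substitute the induction hypothesis together with the displayed identity above to obtain
\[
  \semantics{\delta_{i+1}([x]_{P_{i+1}})} =
  \begin{cases}
    [x]_{P_i} & \text{if } [x]_{P_{i+1}} = [x]_{P_i}, \\
    [x]_{P_i} \cap [x]_{\ker(F\chi_S\cdot c)} & \text{otherwise.}
  \end{cases}
\]
Cancellativity enters precisely here, through \autoref{optimizedPcancel}, which asserts $P_{i+1} = P_i \cap \ker(F\chi_S\cdot c)$; this collapses the intersection in the second case to $[x]_{P_{i+1}}$, while in the first case $[x]_{P_{i+1}} = [x]_{P_i}$ trivially. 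Both branches therefore yield $[x]_{P_{i+1}}$, closing the induction.

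The one genuinely new point -- the only substantive departure from \autoref{algoCertsCorrect} -- is that we no longer retain a certificate $\beta_i(B)$ for the superblock $B$, so the correctness of refining with respect to $S$ alone rests entirely on \autoref{optimizedPcancel}: for cancellative $F$, once $P_i\subseteq \ker(F\chi_B\cdot c)$ holds, the kernels of $F\chi_S^B\cdot c$ and $F\chi_S\cdot c$ agree. Granting that proposition, the remainder is a routine specialization of the earlier argument, and I expect no further obstacle.
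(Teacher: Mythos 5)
Your proposal is correct and follows essentially the same route as the paper's proof: the same induction on the iteration index $i$, the same base case via \autoref{lemF1CoalgSem}, the same use of \autoref{lemF2CoalgSem} to identify $\semantics{\fmod{F\chi_S(c(x))}(\delta_i(S))}$ with $[x]_{\ker(F\chi_S\cdot c)}$, and the same appeal to \autoref{optimizedPcancel} to rewrite $P_{i+1} = P_i \cap \ker(F\chi_S\cdot c)$ in the final step. No gaps.
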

Note that the optimized \autoref{algoCertsCancel} can also be
implemented by directly constructing certificates for the unary modal
operators $F2$. That is, one can treat the modal operators $F2$ as
first class citizens, in lieu of embedding them into the set $F3$ as
we did in \autoref{notationF2Mod}.  The only difference between the
two implementation approaches w.r.t.\ the size of the formula dag is
one edge per modality, namely the edge to the node $\top$ from the
node $\fmod{Fj_2(F\chi_S(c(x)))}(\delta_i(\delta_i),\top)$, which
arises when step~\ref{defDeltai1Optimized} is expanded according to
\autoref{notationF2Mod}.

\begin{proof}[Proof of \autoref{algoPatchCorrect}]
  We prove the desired correctness by induction over $i$, the index of loop
  iterations.

  The definition of $\delta_0$ is identical to the definition in
  \autoref{algoCerts} whence
  \[
    \semantics{\delta_0(S)} = S
    \qquad
    \text{for all }S\in C/P_0,
  \]
  proved completely analogously as in the proof of \autoref{algoCertsCorrect}.

  In the $i$-th iteration with chosen block $S\in C/P_i$, we
  distinguish two cases, whether a block $[x]_{P_{i+1}} \in C/P_{i+1}$
  remains the same or is split into other blocks: 
  \begin{itemize}
  \item If $[x]_{P_{i+1}} = [x]_{P_{i}}$, then we have
    \[
      \semantics{\delta_{i+1}([x]_{P_{i}})}
      \overset{\text{\ref{defDeltai1Optimized}}}{=}
      \semantics{\delta_{i}([x]_{P_{i}})}
      \overset{\text{I.H.}}{=}
      [x]_{P_i} = [x]_{P_{i+1}}.
    \]
  \item If $[x]_{P_{i+1}} \neq [x]_{P_{i}}$, we compute as follows:
    \begin{align*}
      & \semantics{\delta_{i+1}([x]_{P_{i+1}})}
      \\
      =&
      \semantics{\delta_{i}([x]_{P_{i}})\wedge \fmod{F\chi_S(c(x))}(\delta_i(S))}
      \tag*{\ref{defDeltai1Optimized}}
        \\
      =&
      \semantics{\delta_{i}([x]_{P_{i}})}\cap\semantics{ \fmod{F\chi_S(c(x))}(\delta_i(S))}
      \\
      =&
      [x]_{P_{i}} \cap\semantics{ \fmod{F\chi_S(c(x))}(\delta_i(S))}
        \tag{I.H.}
      \\
      =&
        [x]_{P_{i}} \cap
        \{x'\in C\mid F\chi_{\semantics{\delta_i(S)}}(c(x'))
                       = F\chi_S(c(x)) \}
        \tag{\autoref{lemF2CoalgSem}}
      \\ =&
       [x]_{P_i} \cap 
        \{x'\in C\mid F\chi_{S}(c(x'))
                       = F\chi_S(c(x)) \}
            \tag{I.H.}
      \\ =&
       [x]_{P_i} \cap 
        \{x'\in C\mid (x,x') \in \ker(F\chi_{S}\cdot c) \}
            \tag{def.~$\ker$}
      \\ =&
       [x]_{P_i} \cap 
            [x]_{F\chi_{S}\cdot c}
            \tag{def.~$[x]_{R}$}
      \\ =& [x]_{P_{i+1}} %
    \end{align*}
    The last step is the block-wise definition of 
    $P_{i+1} = P_i\cap \ker(F\chi_S\cdot c)$ (see \autoref{optimizedPcancel}).
    \qedhere
  \end{itemize}
\end{proof}

\appendixsect{domainSpecific}{Domain-Specific Certificates}
\begin{proofappendix}[Details for]{remBoolCombination}
  For every set $X$, define the set $\BoolComb(X)$ as terms $K$ over the grammar
  \begin{equation}\label{eq:gramm}
    K ::= X ~\vert~ \neg K ~\vert~ K\wedge K.
  \end{equation}
  There is an obvious way to evaluate boolean combinations of
  predicates using the maps
  \[
    e_X\colon \BoolComb(2^X)\to 2^X
  \]
  defined inductively as follows:
  \[
    e_X(S\subseteq X) = X,
    \qquad
    e_X(\neg K) = X\setminus K,
    \qquad
    e_X(K_1\wedge K_2) = K_1\cap K_2.
  \]
  Given a signature $\Lambda$ of modal operators $\lambda$ and
  corresponding predicate liftings~$\semantics{\lambda}$, we can
  combine all of them. To this end, write $\Lambda$ for the
  corresponding signature functor
  (cf.~\itemref{ex:powerset}{exSignature}); we define a family of maps
  $\semantics{\Lambda}_X$ as follows:
  \[
    \semantics{\Lambda}_X\colon
    \Lambda (2^X) = \coprod_{\arity[\scriptstyle]{\lambda}{n}\in \Lambda} (2^X)^n
    \xra{\big[\semantics{\lambda}\big]_{\lambda\in \Lambda}}
    2^{FX}.
  \]
  Since every $\semantics{\lambda}_X\colon (2^X)^n\to 2^{FX}$ is natural in $X$,
  so is $\semantics{\Lambda}_X$.
  We can replace $\Lambda$ with the signature
  \[
    \Lambda' := \coprod_{n\in \N} \BoolComb(\Lambda(\BoolComb(n))),
  \]
  where $\inj_n(K) \in \Lambda'$,
  $K \in \BoolComb(\Lambda(\BoolComb(n)))$ has the arity $n$. Observe
  that $\BoolComb$ is functorial; in fact, it is the (free or term) monad for
  the signature functor $\Sigma X = X + X \times X$ associated to the
  grammar in~\eqref{eq:gramm}. Thus $\BoolComb \cdot \Lambda \cdot
  \BoolComb$ is a functor, too. Applying the
  Yoneda-Lemma to this functor, we have for every $\arity{t}{n}\in \Lambda'$ the
  (natural) family of maps $\alpha^t$:
  \[
    \alpha^t_X\colon X^n \to \BoolComb(\Lambda(\BoolComb(X)))
    \qquad\text{for every set }X.
  \]
  Hence, we obtain a predicate lifting for $t$ by defining:
  \[
    \semantics{t}_X\colon
    \!\!
    \begin{tikzcd}[column sep=7mm]
     \big((2^X)^n
      \arrow{r}{\alpha^t_{2^X}}& 
    \BoolComb(\Lambda(\BoolComb(2^X)))
    \arrow{r}[yshift=2mm]{\BoolComb(\Lambda(e_{X}))}& 
    \BoolComb(\Lambda(2^X))
    \arrow{r}[yshift=2mm]{\BoolComb(\semantics{\Lambda}_X)}& 
    \BoolComb(2^{FX})
    \arrow{r}[yshift=2mm]{\BoolComb(e_{FX})}& 
    2^{FX}\big).
    \end{tikzcd}
  \]
  It is a composition of natural transformations and so is itself natural in $X$.
\end{proofappendix}

\begin{defn}
  \label{domainCertSimple}
  Given a modal signature~$\Lambda$ for a functor~$F$, a \emph{simple
    domain-specific interpretation} consists of functions
  $\tau\colon F1 \to \bar\Lambda$ and
  $\kappa\colon F2 \to \bar\Lambda$ assigning a nullary modality
  $\tau_o$ to each $o \in F1$ and a unary modality $\kappa_s$ to
  each $s \in F2$ such that the predicate liftings
  $\semantics{\tau_o}_X\in 2^{FX}$ and
  $\semantics{\kappa_s}\colon 2^X \to 2^{FX}$ satisfy
  \[
    \semantics{\tau_o}_1 = \{f\}
    \quad\text{(in $2^{F1}$)}
    \qquad\text{and}\qquad
    [s]_{F!} \cap \semantics{\kappa_s}_2(\{1\}) = \{s\}
    \qquad\text{(in $2^{F2}$).}
  \]
\end{defn}

\begin{proposition}\label{domainCertCancellative}
  Let $\Lambda$ be a modal signature for a cancellative functor~$F$,
  and $(\tau,\kappa)$ a simple domain-specific interpretation. Define
  $\lambda\colon F3 \to \bar\Lambda$ by
  $\lambda_{t}(\delta,\rho) =
  \kappa_{F\chi_{\set{2}}(t)}(\delta)$. Then $(\tau, \lambda)$ is a
  domain-specific interpretation.
\end{proposition}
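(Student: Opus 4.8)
The plan is to check the two defining conditions of Definition~\ref{domainCert} for the pair $(\tau,\lambda)$. The condition on $\tau$, namely $\semantics{\tau_o}_1 = \{o\}$ for all $o\in F1$, is inherited unchanged from the simple interpretation $(\tau,\kappa)$ (Definition~\ref{domainCertSimple}), so all the work lies in verifying the $\lambda$-condition
\[
  [t]_{F\chi_{\{1,2\}}} \cap \semantics{\lambda_t}_3(\{2\},\{1\}) = \{t\}
  \qquad\text{in } 2^{F3}, \text{ for every } t\in F3.
\]
First I would abbreviate $s := F\chi_{\{2\}}(t)\in F2$, so that $\lambda_t(\delta,\rho) = \kappa_s(\delta)$ and hence $\semantics{\lambda_t}_3(\{2\},\{1\}) = \semantics{\kappa_s}_3(\{2\})$. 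The key preparatory step is to rewrite this set by applying the naturality condition~\eqref{eq:naturality} of the predicate lifting $\semantics{\kappa_s}$ to the map $\chi_{\{2\}}\colon 3\to 2$: since $\chi_{\{2\}}^{-1}[\{1\}] = \{2\}$, naturality gives $\semantics{\kappa_s}_3(\{2\}) = (F\chi_{\{2\}})^{-1}[\semantics{\kappa_s}_2(\{1\})]$, reducing the goal to $[t]_{F\chi_{\{1,2\}}} \cap (F\chi_{\{2\}})^{-1}[\semantics{\kappa_s}_2(\{1\})] = \{t\}$.

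For the inclusion $\{t\}\subseteq$, membership $t\in[t]_{F\chi_{\{1,2\}}}$ is immediate, and $t$ lies in the preimage because $F\chi_{\{2\}}(t) = s\in\semantics{\kappa_s}_2(\{1\})$, which follows from the simple-interpretation axiom $[s]_{F!}\cap\semantics{\kappa_s}_2(\{1\}) = \{s\}$ (as $s\in[s]_{F!}$ trivially). For the converse inclusion I would take any $t'$ in the intersection and aim to prove $t' = t$ by invoking cancellativity (Definition~\ref{funCancellative}): the map $\fpair{F\chi_{\{1,2\}}, F\chi_{\{2\}}}\colon F3\to F2\times F2$ is injective, so it suffices to show $F\chi_{\{1,2\}}(t') = F\chi_{\{1,2\}}(t)$ and $F\chi_{\{2\}}(t') = F\chi_{\{2\}}(t)$. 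The first equation is precisely the hypothesis $t'\in[t]_{F\chi_{\{1,2\}}}$.

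The nontrivial point, and the main obstacle, is the second equation $F\chi_{\{2\}}(t') = s$. From $t'$ in the preimage I only know that $F\chi_{\{2\}}(t')\in\semantics{\kappa_s}_2(\{1\})$; to apply the simple-interpretation axiom and pin this value down to $s$, I must additionally establish $F\chi_{\{2\}}(t')\in[s]_{F!}$. This is where I would use that both characteristic maps collapse to the terminal map, i.e.\ $!\cdot\chi_{\{2\}} = {!}\cdot\chi_{\{1,2\}}\colon 3\to 1$ by uniqueness of maps into $1$; functoriality then yields $F!\cdot F\chi_{\{2\}} = F!\cdot F\chi_{\{1,2\}}$. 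Applying this identity to $t'$ and using $F\chi_{\{1,2\}}(t') = F\chi_{\{1,2\}}(t)$ gives
\[
  F!(F\chi_{\{2\}}(t')) = F!(F\chi_{\{1,2\}}(t')) = F!(F\chi_{\{1,2\}}(t)) = F!(F\chi_{\{2\}}(t)) = F!(s),
\]
so $F\chi_{\{2\}}(t')\in[s]_{F!}$. The axiom $[s]_{F!}\cap\semantics{\kappa_s}_2(\{1\}) = \{s\}$ then forces $F\chi_{\{2\}}(t') = s = F\chi_{\{2\}}(t)$, and cancellativity concludes $t' = t$, completing the proof.
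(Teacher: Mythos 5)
Your proposal is correct and follows essentially the same route as the paper's proof: abbreviate $s := F\chi_{\set{2}}(t)$, use naturality of $\semantics{\kappa_s}$ along $\chi_{\set{2}}\colon 3\to 2$ to reduce to membership of $F\chi_{\set{2}}(t')$ in $\semantics{\kappa_s}_2(\set{1})$, combine $t'\in[t]_{F\chi_{\set{1,2}}}$ with $!\cdot\chi_{\set{2}}={!}\cdot\chi_{\set{1,2}}$ to get $F\chi_{\set{2}}(t')\in[s]_{F!}$, apply the simple-interpretation axiom, and finish by cancellativity. The only difference is presentational (two inclusions versus the paper's chain of equivalences), and you are in fact slightly more explicit than the paper about why $F\chi_{\set{2}}(t')$ lands in $[s]_{F!}$.
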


\begin{proof}
  We verify that $(\tau,\lambda)$ is a domain-specific interpretation
  (\autoref{domainCert}) by verifying that for every $t\in F3$, defining
  \[
    \tau_{t}(\delta,\rho) = \kappa_{F\chi_{\set{2}}(t)}(\delta)
  \]
  satisfies
  \[
    [t]_{F\chi_{\{1,2\}}} \cap \semantics{\tau_t}_3(\{2\},\{1\}) = \{t\}
    \qquad\text{in $2^{F3}$}.
  \]
  In the following, we put $s := F\chi_{\set{2}}(t) \in F2$.
  By the naturality of the predicate lifting of $\kappa_s$, the
  following square commutes (recall that $2^{(-)}$ is contravariant):
  \begin{equation}\label{diagNat}
    \begin{tikzcd}
      2^2
      \arrow{r}{\semantics{\kappa_s}_2}
      \arrow{d}[swap]{2^{\chi_{\set{2}}}}
      & 2^{F2}
      \arrow{d}{2^{F\chi_{\set{2}}}}
      \\
      2^3
      \arrow{r}{\semantics{\kappa_s}_3}
      & 2^{F3}
    \end{tikzcd}
  \end{equation}
  We thus have:
  \begin{align*}
    \semantics{\tau_t}_3(\set{2},\set{1})
    &=
    \semantics{\kappa_s}_3(\set{2})
      \tag{def.~$\tau_t$}
    \\
    &= \semantics{\kappa_s}_3(\chi_{\set{2}}^{-1}[\set{1}])
    \tag{def.~$\chi_{\set{2}}$}
    \\
    &= \semantics{\kappa_s}_3(2^{\chi_{\set{2}}}(\set{1}))
      \tag{def.~$2^{(-)}$}
    \\
    &= 2^{F\chi_{\set{2}}}(\semantics{\kappa_s}_2(\set{1}))
      \tag{by~\eqref{diagNat}}
    \\
    &= \{t'\in F3\mid F\chi_{\set{2}}(t')\in \semantics{\kappa_s}_2(\set{1})\}
  \end{align*}
  For every $t'\in F3$, we have%
  \smnote{}
  \begin{align*}
    &t' \in [t]_{F\chi_{\{1,2\}}} \cap \semantics{\tau_t}_3(\{2\},\{1\})
      \\
    \Leftrightarrow~&  t'\in [t]_{F\chi_{\{1,2\}}}\text{ and }t'\in\semantics{\tau_t}_3(\{2\},\{1\})
    \\
    \Leftrightarrow~&  t'\in [t]_{F\chi_{\{1,2\}}}\text{ and }F\chi_{\set{2}}(t')\in \semantics{\kappa_s}_2(\set{1})
    \tag{by the above calculation}
    \\
    \Leftrightarrow~&  t'\in [t]_{F\chi_{\{1,2\}}}\text{ and } F\chi_{\set{2}}(t')\in [F\chi_{\set{2}}(t')]_{F!}\cap\semantics{\kappa_s}_2(\set{1})
    \tag{$\ker$ reflexive}
    \\
    \Leftrightarrow~&  t'\in [t]_{F\chi_{\{1,2\}}}\text{ and } F\chi_{\set{2}}(t')\in [F\chi_{\set{2}}(t)]_{F!}\cap\semantics{\kappa_s}_2(\set{1})
    \tag{$t'\in [t]_{F\chi_{\set{1,2}}}$}
    \\
    \Leftrightarrow~&  t'\in [t]_{F\chi_{\{1,2\}}}\text{ and } F\chi_{\set{2}}(t')\in [s]_{F!}\cap\semantics{\kappa_s}_2(\set{1})
        \tag{def.~$s$}
    \\
    \Leftrightarrow~&  t'\in [t]_{F\chi_{\{1,2\}}}\text{ and } F\chi_{\set{2}}(t')\in \{s\}
        \tag{assumption on $\kappa_s$}
    \\
    \Leftrightarrow~& t'\in [t]_{F\chi_{\{1,2\}}}\text{ and } F\chi_{\set{2}}(t')\in \{F\chi_{\set{2}}(t)\}
        \tag{def.~$s$}
    \\
    \Leftrightarrow~& F\chi_{\{1,2\}}(t') = F\chi_{\{1,2\}}(t)\text{ and } F\chi_{\set{2}}(t') = F\chi_{\set{2}}(t)
    \\
    \Leftrightarrow~& \fpair{F\chi_{\{1,2\}},F\chi_{\{2\}}}(t') = \fpair{F\chi_{\{1,2\}},F\chi_{\{2\}}}(t)
    \tag{def.~$\fpair{-,-}$}
                    \\
    \Leftrightarrow~& t' = t
                      \tag{$F$ cancellative}
  \end{align*}
  Note that $\fpair{F\chi_{\{1,2\}},F\chi_{\{2\}}}$ is injective because $F$ is cancellative.
\end{proof}

\begin{proofappendix}[Details for]{exDomainSpecInt}
  \begin{enumerate}
    \item
  We verify that \itemref{exDomainSpecInt}{F3Pow} indeed provides domain-specific certificate (\autoref{domainCert}).
  For $t\in \Pow 3$, we have
  \[
    \lambda_t(\delta,\rho) ~=~
    \begin{cases}
      \neg\Diamond \rho &\text{if }2\in t\notni 1 \\
      \Diamond \delta \wedge \Diamond\rho &\text{if }2\in t \ni 1 \\
      \neg\Diamond \delta &\text{if }2\not\in t\ni 1  \\
      \top &\text{if }2\not\in t \notni 1 \\
    \end{cases}
  \]
  We proceed by the following case distinction:
  \begin{itemize}
  \item If $1\notin \Pow \chi_{\set{1,2}}(t)$, then $t= \emptyset$ or
    $t= \{0\}$. In both cases we have $[t]_{\Pow \chi_{1,2}} = \{t\}$. Since~$\lambda_t(\delta,\rho) =\top$,
    \[
      [t]_{\Pow \chi_{1,2}} \cap\semantics{\lambda_t}_3(\set{2},\set{1})= \{t\}
    \]
    as desired.
  \item If $1\in\Pow \chi_{\set{1,2}}(t)$, then
    $2\in t$ or $1\in t$. This yields 
    \begin{align*}
      2\in t \notni 1&\Longrightarrow~ \semantics{\overbrace{(\delta,\rho)\mapsto
                       \mathrlap{\neg\Diamond \rho}\phantom{\Diamond\delta\wedge\Diamond\rho}}^{\textstyle\lambda_t}}_3(\set{2}, \set{1}) = \{t'\in F3 \mid 1\notin t' \}
                      \\
      2\in t \ni 1 &\Longrightarrow~ \semantics{(\delta,\rho)\mapsto \Diamond\delta\wedge\Diamond\rho}_3(\set{2}, \set{1}) = \{t'\in F3 \mid 2\in t'\text{ and }1\in t' \}
                     \\
      2\notin t \ni 1&\Longrightarrow~ \semantics{(\delta,\rho)\mapsto
                       \mathrlap{\neg\Diamond \delta}\phantom{\Diamond\delta\wedge\Diamond\rho}}_3(\set{2}, \set{1}) = \{t'\in F3 \mid 2\notin t' \}
    \end{align*}
    Consequently, we have for every $t'\in
    \semantics{\lambda_t}_3(\set{2},\set{1})$ that 
    \[
      \text{$1\in t$ iff $1\in t'$} 
      \qquad\text{and}\qquad
      \text{$2\in t$ iff $2\in t'$}.
    \]
    To conclude, note that if $t'\in [t]_{\Pow\chi_{\set{1,2}}}$ then $0\in
    t'$ iff $0\in t$. Thus
    \[
      [t]_{\Pow\chi_{\{1,2\}}} \cap \semantics{\lambda_t}_3(\{2\},\{1\}) = \{t\}.
      \tag*{\qed}
    \]
  \end{itemize}

\item \textbf{\itemref{exDomainSpecInt}{dsiSignature}:}
  For the verification for signature functors,
 define a helper map
  $v\colon \Sigma 2\to \Powf \N$ by
  $v(\sigma(x_1,\ldots,x_n))$ $= \{i\in \N\mid x_i = 1\}$.
  The predicate lifting for the (unary) modal operator
    $\gradI$, for $I\subseteq \N$, is obtained from
    \autoref{predLiftYoneda} by the predicate $f_I\colon \Sigma 2\to 2$
    corresponding to the set
    \[
      f_I = \{t\in \Sigma 2\mid v(t) = I\}.
    \]
    This gives rise to the predicate lifting
    \begin{align*}
      \semantics{\gradI}_X(P) &=
      \{t\in \Sigma X\mid F\chi_P(t) \in f_I \}
      \tag{\autoref{predLiftYoneda}}
      \\ &=
      \{t\in \Sigma X\mid v(F\chi_P(t)) = I\}.
      \tag{def.~$f_I$}
    \end{align*}
    Similarly, for the nullary modal operator $\sigma$ (for the $n$-ary operation
    symbol $\arity{\sigma}{n}\in \Sigma$), take  $\Sigma 1 \to 2$
    given by the set 
    \[
      g_\sigma = \{\sigma(0,\ldots,0)\}
    \]
    (noting that $2^0 = 1$). This gives rise to the predicate lifting
    \begin{align*}
      \semantics{\sigma}_X &=
      \{t\in \Sigma X\mid F!(t) \in g_\sigma \}
      \tag{\autoref{predLiftYoneda}}
      \\ &=
      \big\{t\in \Sigma X\mid F\chi_P(t) \in \{\sigma(0,\ldots,0)\}\big\}
      \tag{def.~$g_\sigma$}
      \\ &= \{\sigma(x_1,\ldots,x_n)\mid x_1,\ldots,x_n \in X\}.
    \end{align*}
    \sloppypar
    \noindent
    For the verification of the (simple) domain-specific
    interpretation
    (\autoref{domainCertSimple}),
    we put
    \[
      \kappa_s(\delta) := \grad{v(s)}\delta
      \qquad\text{for }s\in \Sigma 2
    \]
    with then induces the claimed $\lambda_t$ via \autoref{domainCertCancellative}:
    \[
      \lambda_{\sigma(x_1,\ldots,x_n)}(\delta,\rho) = \grad{\{i\in \N\mid x_i = 2\}}\delta
      \qquad\text{for }\sigma(x_1,\ldots,x_n) \in \Sigma 3
    \]
    There is nothing to show for $\tau_o := \sigma$ since it has the correct
    semantics by the definition of $\semantics{\sigma}_1$.
    Note that
    $\fpair{F!,v}\colon \Sigma 2\to \Sigma 1\times \Powf\N$ is
    injective because for every $s\in \Sigma 2$ the operation symbol
    and all its parameters (from $2$) are uniquely determined by
    $F!(s)$ and $v(s)$. For $\kappa_s:=\grad{v(s)}$, $s\in F2$, we
    have
    \[
      [s]_{F!} = \{s' \in F2\mid F!(s) = F!(s')\}.
    \]
    Thus, we compute%
    \smnote{}
    \begin{align*}
      & [s]_{F!} \cap \semantics{\kappa_s}_2(\set{1})
      \\=~& \{ s'\in \Sigma 2\mid s'\in [s]_{F!}\text{ and }s'\in \semantics{\kappa_s}_2(\set{1})\}
      \\=~& \{ s'\in \Sigma 2\mid F!(s) = F!(s')\text{ and }s'\in \semantics{\grad{v(s)}}_2(\set{1})\}
      \\=~& \{ s'\in \Sigma 2\mid F!(s) = F!(s')\text{ and }
      v(F\chi_{\set{1}}(s')) = v(s)\}
      \tag{def.~$\semantics{\grad{v(s)}}_2$}
      \\=~& \{ s'\in \Sigma 2\mid F!(s) = F!(s')\text{ and }
      v(s') = v(s)\}
      \tag{$\id_2=\chi_{\set{1}}\colon 2\to 2$}
      \\=~& \{ s'\in \Sigma 2\mid \fpair{F!,v}(s) = \fpair{F!,v}(s') \}
      \tag{def.~$\fpair{-,-}$}
      \\=~& \{ s\}
      \tag{$\fpair{F!,v}$ injective}
    \end{align*}

  \item \textbf{\itemref{exDomainSpecInt}{dsiMonoid}:}
    For every $m\in M$, define the map
    \[
      f_m\colon M^{(2)}\to 2
      \qquad\text{with}\qquad \{\mu \in M^{(2)}\mid \mu(1) = m\}.
    \]
    which gives rise to the predicate lifting of the unary modal operator
    $\grad{m}$:
    \begin{align*}
      \semantics{\grad{m}}_X(P) &= \{\mu \in M^{(X)} \mid M^{(P)}(\mu) \in f_m\}
      \tag{\autoref{predLiftYoneda}} \\
      &=\{\mu \in M^{(X)} \mid M^{(P)}(\mu)(1) = m\}
      \tag{def.~$f_m$}
    \end{align*}
    For the verification of the axioms of the domain-specific interpretation
    (\autoref{domainCert}), we have that $\tau$ satisfies the axiom:
    \begin{align*}
      \semantics{\tau_o}_1 = \semantics{\grad{o(0)}\top}_1
      &= \{\mu\in M^{(1)}\mid \sum_{x\in \semantics{\top}_1} \mu(x) = o(0)\}
      \\
      &= \{\mu\in M^{(1)}\mid \mu(0) = o(0)\}
      = \{o\}
      \tag{$\semantics{\top}_1 = 1 = \{0\}$}
    \end{align*}
    For the other component of the domain-specific interpretation, we
    proceed by case distinction:
    \begin{itemize}
    \item If $M$ is non-cancellative, we have
      $\lambda_t(\delta,\rho)=\grad{t(2)}\delta\wedge \grad{t(1)}\rho$
      for $t\in M^{(3)}$ and thus we have for every $t'\in M^{(3)}$:
      \begin{align*}
        & t'\in ([t]_{F\chi_{\set{1,2}}}
        \cap \semantics{\lambda_t}_3(\set{2},\set{1}))
        \\ \Leftrightarrow~&
        t'\in [t]_{F\chi_{\set{1,2}}}
        \text{ and } t'\in \semantics{\lambda_t}_3(\set{2},\set{1})
        \\ \Leftrightarrow~&t'\in [t]_{F\chi_{\set{1,2}}}
        \text{ and } t'\in \semantics{(\delta,\rho)\mapsto \grad{t(2)}\delta\wedge \grad{t(1)}\rho}_3(\set{2},\set{1})
        \tag{def.~$\lambda_t$}
        \\ \Leftrightarrow~&t'\in [t]_{F\chi_{\set{1,2}}}
        \text{ and } t'\in \semantics{\grad{t(2)}}_3(\set{2}) \cap \semantics{\grad{t(1)}}_3(\set{1})
        \\ \Leftrightarrow~&t'\in [t]_{F\chi_{\set{1,2}}}
        \text{ and } t'\in \semantics{\grad{t(2)}}_3(\set{2})
        \text{ and } t'\in \semantics{\grad{t(1)}}_3(\set{1})
        \\ \Leftrightarrow~&t'\in [t]_{F\chi_{\set{1,2}}}
        \text{ and } t'(2) = t(2)
        \text{ and } t'(1) = t(1)
        \tag{def.~$\semantics{\grad{m}}$}
        \\ \Leftrightarrow~&t'(0) = t(0)\text{ and }t'(1)+t'(2) = t(1)+t(2)
        \\ &~~~~~
        \text{ and } t'(2) = t(2)
        \text{ and } t'(1) = t(1)
        \\ \Leftrightarrow~&t'(0) = t(0)
        \text{ and } t'(2) = t(2)
        \text{ and } t'(1) = t(1)
        \\ \Leftrightarrow~&t' = t
        \\ \Leftrightarrow~&t' \in \{t\}
      \end{align*}

    \item If $M$ is cancellative, we put \( \kappa_s(\delta) =
      \grad{s(1)}\,\delta \) for $s \in M^{(2)}$, which then induces
      $\lambda_t(\delta,\rho) = \grad{s(2)}\,\delta$ via
      \autoref{domainCertCancellative}. We verify \autoref{domainCertSimple} for
      all $s'\in M^{(2)}$:
      \begin{align*}
        &s' \in ([s]_{F!} \cap \semantics{\kappa_s}_2(\{1\}))
        \\ \Leftrightarrow~&s' \in [s]_{F!} \text{ and } s'\in \semantics{\kappa_s}_2(\{1\})
        \\ \Leftrightarrow~&F!(s') = F!(s) \text{ and } s'\in \semantics{\grad{s(1)}}_2(\{1\})
        \tag{def.~$\kappa_s$}
        \\ \Leftrightarrow~&F!(s') = F!(s) \text{ and } \sum_{x\in \{1\}}s'(x) = s(1)
        \tag{def.~$\grad{s(1)}$}
        \\ \Leftrightarrow~&F!(s') = F!(s) \text{ and } s'(1) = s(1)
        \\ \Leftrightarrow~&s'(0) + s'(1) = s(0) + s(1) \text{ and } s'(1) = s(1)
        \\ \Leftrightarrow~&s'(0) = s(0) \text{ and } s'(1) = s(1)
        \tag{$M$ cancellative}
        \\ \Leftrightarrow~&s' = s
        \\ \Leftrightarrow~&s' \in \{s\}
      \end{align*}
    \end{itemize}

  \item \textbf{\itemref{exDomainSpecInt}{dsiMarkov}:} For $FX=(\Dist X +1)^A$,
    the predicate lifting of $\fpair{a}_p$, $a\in A$, $p\in[0,1]$ is:
    \[
      \semantics{\fpair{a}_p}_X(S) := \{
      \text{if }t\in FX\mid p > 0 \text{ then }t(a) \in \Dist X\text{ and }
      \sum_{x\in S}t(a)(x) \ge p
      \}
    \]
    first note that
    \[
    \semantics{\fpair{a}_{1}\top}_1 = \{o\in F1\mid o(a) \in \Dist 1\}
    \qquad\text{ and }\qquad
    \semantics{\neg\fpair{a}_{1}\top}_1 = \{o\in F1\mid o(a) \in 1\}.
    \]
    Thus, we have:
    \begin{align*}
      \semantics{\tau_o}_1 &=
      \semantics[\big]{
      \bigwedge_{\substack{a\in A\\ o(a) \in \Dist 1}} \fpair{a}_{1}\top
      \wedge
      \bigwedge_{\substack{a\in A\\ o(a) \in 1}} \neg \fpair{a}_{1}\top
      }_1
      \\ & =
      \bigcap_{\substack{a\in A\\ o(a) \in \Dist 1}} \{o'\in F1\mid o'(a)\in \Dist 1\}
      \cap
      \bigcap_{\substack{a\in A\\ o(a) \in 1}} \{o'\in F1\mid o'(a)\in 1\}
      = \{o\}
    \end{align*}
    For the axiom of $\lambda_t$, $t\in F3 = (\Dist 3+1)^A$, we verify for all
    $t'\in F3$, where the crucial step is the arithmetic argument for replacing
    the inequalities by equalities:
    \allowdisplaybreaks
    \begin{align*}
      &t' \in ([t]_{F\chi_{\set{1,2}}}\cap \semantics{\lambda_t}_3(\set{2},\set{1}))
      \\ \Leftrightarrow~&
      t' \in [t]_{F\chi_{\set{1,2}}}\text{ and } t'\semantics{\lambda_t}_3(\set{2},\set{1})
      \\ \Leftrightarrow~&
      t' \in [t]_{F\chi_{\set{1,2}}} \text{ and }
                           t' \in
      \semantics[\big]{(\delta,\rho)\mapsto
      \bigwedge_{\substack{a\in A\\t(a) \in \Dist 3}}(\fpair{a}_{t(a)(2)}\,\delta\wedge \fpair{a}_{t(a)(1)}\,\rho)
      }_3(\set{2},\set{1})
      \\ \Leftrightarrow~&
      t' \in [t]_{F\chi_{\set{1,2}}} \text{ and }
                           t' \in
                           \bigcap_{\substack{a\in A\\t(a) \in \Dist 3}}
      \semantics{(\delta,\rho)\mapsto
      \fpair{a}_{t(a)(2)}\,\delta\wedge \fpair{a}_{t(a)(1)}\,\rho
      }_3(\set{2},\set{1})
      \\ \Leftrightarrow~&
      t' \in [t]_{F\chi_{\set{1,2}}} \text{ and }
                           t' \in
                           \bigcap_{\substack{a\in A\\t(a) \in \Dist 3}}
      \semantics{\fpair{a}_{t(a)(2)}}_3(\set{2})
      \cap\semantics{\fpair{a}_{t(a)(1)}}_3(\set{1})
      \\ \Leftrightarrow~&
      t' \in [t]_{F\chi_{\set{1,2}}} \text{ and }
                           \forall a\in A, t(a)\in \Dist 3:
                           t' \in
      \semantics{\fpair{a}_{t(a)(2)}}_3(\set{2})
      \cap\semantics{\fpair{a}_{t(a)(1)}}_3(\set{1})
      \\ \Leftrightarrow~&
      t' \in [t]_{F\chi_{\set{1,2}}} \text{ and }
                           \forall a\in A, t(a)\in \Dist 3:
                           t'(a)(2) \ge t(a)(2)
                           \wedge
                           t'(a)(1) \ge t(a)(1)
                           \tag{Def.~$\semantics{\fpair{a}p}$}
      \\ \Leftrightarrow~&
              \forall a \in A\colon (t'(a)\in 1\leftrightarrow t(a)\in 1)\text{ and if }a\in \Dist 3\text{ then:}
                           \\ &\qquad
                           t'(a)(0) = t(a)(0),~~ t'(a)(1) + t'(a)(2) = t(a)(1)+t(a)(2),
                           \\ &\qquad
                           t'(a)(2) \ge t(a)(2),~~
                           t'(a)(1) \ge t(a)(1)
      \\ \Leftrightarrow~&
              \forall a \in A\colon (t'(a)\in 1\leftrightarrow t(a)\in 1)\text{ and if }a\in \Dist 3\text{ then:}
                           \\ &\qquad
                           t'(a)(0) = t(a)(0),~~
                           t'(a)(1) = t(a)(2),~~
                           t'(a)(2) = t(a)(2)
                                \tag{arithmetic}
      \\ \Leftrightarrow~&
              \forall a \in A\colon (t'(a)\in 1\leftrightarrow t(a)\in 1)\text{ and if }a\in \Dist 3\text{ then }
                           t'(a) = t(a)
      \\ \Leftrightarrow~& t' \in \set{t}
    \end{align*}
  \end{enumerate}
\end{proofappendix}

\begin{proofappendix}{domainCertMainThm}
  \begin{lemma}
    \label{domainSpecificSetIndexed}
    Let $(\tau,\lambda)$ be a domain-specific interpretation for $F$. 
    For all $t\in FC$ and $S\subseteq B\subseteq C$ we have:
    \[
      \big([t]_{F\chi_B} ~\cap~
      \semantics{\lambda_{F\chi_S^B(t)}}_C(S,B\setminus S)\big)
      = [t]_{F\chi_S^B}
      \qquad\text{in $2^{FC}$.}
    \]
  \end{lemma}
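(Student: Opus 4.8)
The plan is to transport the defining axiom of the domain-specific interpretation (\autoref{domainCert}) from the set~$3$ to the carrier~$C$ along the map $\chi_S^B\colon C\to 3$ of~\eqref{eqChi3}, using naturality of the predicate lifting interpreting $\lambda$. Write $t_3:=F\chi_S^B(t)\in F3$, so that the modality occurring in the claim is $\lambda_{t_3}$, whose defining axiom reads $[t_3]_{F\chi_{\{1,2\}}}\cap\semantics{\lambda_{t_3}}_3(\{2\},\{1\})=\{t_3\}$ in $2^{F3}$.

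First I would rewrite two of the three sets in the claim as preimages of subsets of $F3$ under $F\chi_S^B$. Since $\chi_{\{1,2\}}\cdot\chi_S^B=\chi_B$ (as noted after \autoref{funCancellative}), functoriality gives $F\chi_B=F\chi_{\{1,2\}}\cdot F\chi_S^B$, and unfolding the kernels yields
\[
[t]_{F\chi_B}=(F\chi_S^B)^{-1}\big[[t_3]_{F\chi_{\{1,2\}}}\big]
\qquad\text{and}\qquad
[t]_{F\chi_S^B}=(F\chi_S^B)^{-1}\big[\{t_3\}\big];
\]
the first holds because $F\chi_B(t')=F\chi_B(t)$ iff $F\chi_{\{1,2\}}(F\chi_S^B(t'))=F\chi_{\{1,2\}}(t_3)$, and the second is immediate from the definition of the kernel.

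Next I would treat the middle factor. The predicate lifting $\semantics{\lambda_{t_3}}$ is natural (\autoref{remBoolCombination}), so instantiating the naturality condition~\eqref{eq:naturality} at $f=\chi_S^B\colon C\to 3$ with the arguments $\{2\},\{1\}\in 2^3$ gives
\[
(F\chi_S^B)^{-1}\big[\semantics{\lambda_{t_3}}_3(\{2\},\{1\})\big]
=\semantics{\lambda_{t_3}}_C\big((\chi_S^B)^{-1}[\{2\}],(\chi_S^B)^{-1}[\{1\}]\big)
=\semantics{\lambda_{t_3}}_C(S,B\setminus S),
\]
where the final equality uses $(\chi_S^B)^{-1}[\{2\}]=S$ and $(\chi_S^B)^{-1}[\{1\}]=B\setminus S$, read off directly from~\eqref{eqChi3}.

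Finally I would combine the three computations. Because taking preimages commutes with intersection, the left-hand side of the claim becomes
\[
(F\chi_S^B)^{-1}\big[[t_3]_{F\chi_{\{1,2\}}}\big]\cap(F\chi_S^B)^{-1}\big[\semantics{\lambda_{t_3}}_3(\{2\},\{1\})\big]
=(F\chi_S^B)^{-1}\big[[t_3]_{F\chi_{\{1,2\}}}\cap\semantics{\lambda_{t_3}}_3(\{2\},\{1\})\big],
\]
and by the defining axiom the bracketed set equals $\{t_3\}$, so the whole expression is $(F\chi_S^B)^{-1}[\{t_3\}]=[t]_{F\chi_S^B}$, as required. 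No step is genuinely hard; the only point needing care is aligning the two preimage rewrites so that the single use of naturality delivers precisely $S$ and $B\setminus S$ as the arguments of $\semantics{\lambda_{t_3}}_C$, after which the result reduces to the set~$3$ axiom already assumed.
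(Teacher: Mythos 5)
Your proof is correct and follows essentially the same route as the paper's: both rest on the factorization $\chi_{\{1,2\}}\cdot\chi_S^B=\chi_B$, the naturality of $\semantics{\lambda_{t_3}}$ instantiated at $\chi_S^B$ with arguments $\{2\},\{1\}$, and the defining axiom of the domain-specific interpretation, with your preimage-and-intersection phrasing merely repackaging the paper's element-wise chain of equivalences. (Incidentally, your reading $(\chi_S^B)^{-1}[\{2\}]=S$ is the right one; the paper's displayed intermediate $\semantics{\lambda_d}_C(B,B\setminus S)$ contains a typo for $\semantics{\lambda_d}_C(S,B\setminus S)$.)
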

  \begin{proof}
    Put $d:= F\chi_S^B(t)$;
    the naturality square of $\semantics{\lambda_d}$ for $\chi_S^B\colon C\to 3$
    is 
    \[
      \begin{tikzcd}
        2^3\times 2^3
        \arrow{r}{\semantics{\lambda_d}_3}
        \arrow{d}[swap]{2^{\chi_S^B}\times 2^{\chi_S^B}}
        & 2^{F3}
        \arrow{d}{2^{F\chi_S^B}}
        \\
        2^C\times 2^C
        \arrow{r}{\semantics{\lambda_d}_C}
        & 2^{FC}
      \end{tikzcd}
    \]
    Hence:
    \begin{align*}
      (F\chi_S^B)^{-1}\big[\semantics{\lambda_d}_3(\set{2},\set{1})\big]
      &=
      \semantics{\lambda_d}_C((\chi_S^B)^{-1}[\set{2}],(\chi_S^B)^{-1}[\set{1}])
      \\
      &= \semantics{\lambda_d}_C(B,B\setminus S).
      \tag{$*$}
    \end{align*}
    Now we verify for every $t'\in FC$ that
    \allowdisplaybreaks
    \begin{align*}
      &t' \in \big([t]_{F\chi_B} ~\cap~ \semantics{\lambda_{F\chi_S^B(t)}}_C(S,B\setminus S)\big)
      \\
  \Leftrightarrow~& t' \in [t]_{F\chi_B}\text{ and }
                    t'\in \semantics{\lambda_{F\chi_S^B(t)}}_C(S,B\setminus S)
      \\
  \Leftrightarrow~& t' \in [t]_{F\chi_B}\text{ and }
                    t'\in (F\chi_S^B)^{-1}\big[\semantics{\lambda_{F\chi_S^B(t)}}_3(\set{2},\set{1})\big]
               \tag{$*$}
      \\
  \Leftrightarrow~& F\chi_S^B(t') \in [F\chi_S^B(t)]_{F\chi_{\set{1,2}}}\text{ and }
                    F\chi_S^B(t')\in \semantics{\lambda_{F\chi_S^B(t)}}_3(\set{2},\set{1})
                    \tag{$\chi_{\set{1,2}}\cdot \chi_S^B = \chi_B$}
      \\
  \Leftrightarrow~& F\chi_S^B(t') \in [F\chi_S^B(t)]_{F\chi_{\set{1,2}}} \cap
                    \semantics{\lambda_{F\chi_S^B(t)}}_3(\set{2},\set{1})
      \\
  \Leftrightarrow~& F\chi_S^B(t') \in \{F\chi_S^B(t)\}
                    \tag{\autoref{domainCert}}
      \\
  \Leftrightarrow~& F\chi_S^B(t') = F\chi_S^B(t)
      \\
  \Leftrightarrow~& t' \in [t]_{F\chi_S^B}.
                    \tag*{\qedhere}
    \end{align*}
  \end{proof}
  \begin{proof}[Proof of \autoref{domainCertMainThm}]
    We prove by induction over the index $i$ of main loop iterations that
    $T(\delta_i([x]_{P_i}))$ and $T(\beta_i([x]_{Q_i}))$ are a certificates for $[x]_{P_i}$
    and $[x]_{Q_i}$, respectively. (In the
    cancellative case, $Q_i$ and $\beta_i$ are not defined; so just
    put~$C/Q_i=\{C\}$, $\beta_i(C) = \top$ for convenience.)
    \begin{enumerate}
    \item For $i=0$, we trivially have
      \[
        \semantics{T(\beta_0([x]_{P_i}))} =  \semantics{T(\top)} = \semantics{\top} = C.
      \]
      Furthermore, unravelling \autoref{notationF1Mod},
      \[
        \delta_0([x]_{P_0}) = \fmod{F!(c(x))} = \fmod{Fj_1(F!(c(x)))}(\top,\top).
      \]
      Consequently,
      \[
        T(\delta_0([x]_{P_0})) = \tau_{F!(Fj_1(F!(c(x))))} = \tau_{F!(c(x))}
      \]
      using $!\cdot j_1\cdot \mathord{!} = \mathord{!}\colon C\to 1$. The naturality of
      $\semantics{\tau_o}$, $o\in F1$, implies that
      $\semantics{\tau_o}_X = \{t\in FX\mid F!(t) = o\}$. Hence,
      \[
        \semantics{T(\delta_0([x]_{P_0}))} =
        c^{-1}[\semantics{\tau_{F!(c(x))}}_C] = \{x'\in C\mid F!(c(x')) = F!(c(x))\}=[x]_{P_0}.
      \]

    \item In the inductive step, there is nothing to show for $\beta_{i+1}$
      because it is only a boolean combination of $\beta_{i}$ and $\delta_i$.
      For $\delta_{i+1}$, we distinguish two cases: whether the class
      $[x]_{P_i}$ is refined or not. If $[x]_{P_{i+1}} = [x]_{P_i}$, then 
      \[
        \semantics{T(\delta_{i+1}([x]_{P_{i+1}}))}
        = \semantics{T(\delta_{i}([x]_{P_i}))}
        = [x]_{P_i},
      \]
      and we are done. Now suppose that $[x]_{P_{i+1}}\neq [x]_{P_i}$ in the $i$-th
      iteration with chosen $S\subsetneqq B\subseteq C$. By
      \ref{defDeltai1} resp.~\ref{defDeltai1Optimized} we have:
      \[
        \delta_{i+1}([x]_{P_{i+1}}) = \delta_i([x]_{P_i}) \wedge
        \fmod{t}(\delta_i(S), \beta')
      \]
      where $\beta'$ is $\beta_i(B)$ or $\top$; in any case
      $\semantics{\delta_i(S)}= S\subseteq \semantics{\beta'}$.
      Note that $t$ here is either~$F\chi_S^B(c(x))$ (\autoref{algoCerts})
      or $Fj_2(F\chi_S(c(x)))$ (\autoref{algoCertsCancel}). Put $B'=B$ in
      the first case and $B'=C$ else. Using $\chi_S^{C} =
      j_2\cdot\chi_S$, we see that 
      \[
        t = F\chi_S^{B'}(c(x))
        \qquad
        \semantics{\beta'} = B',
        \qquad\text{and}\qquad
        \semantics{T(\beta')} = B',
      \]
      where the last equation follows from the inductive hypothesis.
      Thus, we have
      \[
        \delta_{i+1}([x]_{P_{i+1}}) = \delta_i([x]_{P_i}) \wedge
        \fmod{F\chi_S^{B'}(c(x))}(\delta_i(S), \beta'),
      \]
      and therefore
      \[
        T(\delta_{i+1}([x]_{P_{i+1}})) = T(\delta_i([x]_{P_i})) \wedge
        \lambda_{F\chi_S^{B'}(c(x))}\big(T(\delta_i(S)), T(\beta')
        \wedge \neg T(\delta_i(S))\big).
      \]
      Moreover, we have
      \[
        P_{i+1} = P_i\cap \ker(F\chi_S^{B'}\cdot c),
      \]
      in the first case by \autoref{defPi1}, in the second case by
      \autoref{optimizedPcancel}, recalling that $\chi_S = \chi_S^C$.
      
      We are now prepared for our final computation:
      \begin{align*}
        & \semantics{T(\delta_{i+1}([x]_{P_{i+1}})) }
          \\
        =~& \semantics{T(\delta_i([x]_{P_i})) \wedge
        \lambda_{F\chi_S^{B'}(c(x))}(T(\delta_i(S)), T(\beta')\wedge \neg T(\delta_i(S)))}
            \\
        =~& \semantics{T(\delta_i([x]_{P_i}))} \cap
        \semantics{\lambda_{F\chi_S^{B'}(c(x))}(T(\delta_i(S)), T(\beta')\wedge \neg T(\delta_i(S)))}
            \\
        =~& \semantics{T(\delta_i([x]_{P_i}))} \cap
        c^{-1}\big[\semantics{\lambda_{F\chi_S^{B'}(c(x))}}_C(\semantics{T(\delta_i(S))},
        \semantics{T(\beta')}\cap C\setminus \semantics{T(\delta_i(S))})\big]
            \tag{Semantics of $\heartsuit$}
            \\
        =~& [x]_{P_i} \cap
        c^{-1}\big[\semantics{\lambda_{F\chi_S^{B'}(c(x))}}_C(S, B'
        \cap C\setminus S )\big]
            \tag{I.H.}
            \\
        =~& [x]_{P_i} \cap
        c^{-1}\big[\semantics{\lambda_{F\chi_S^{B'}(c(x))}}_C(S, B'\setminus S)\big]
        \tag{$B' \cap C\setminus S = B'\setminus S$}
        \\   
        =~& [x]_{P_i} \cap
            [x]_{F\chi_{B'}\cdot c} \cap
        c^{-1}\big[\semantics{\lambda_{F\chi_S^{B'}(c(x))}}_C(S, B'\setminus S)\big]
            \tag{$P_i\subseteq \ker F\chi_{B'}\cdot c$}
            \\
        =~& [x]_{P_i} \cap
            c^{-1}\big[[c(x)]_{F\chi_{B'}}\big] \cap
        c^{-1}\big[\semantics{\lambda_{F\chi_S^{B'}(c(x))}}_C(S, B'\setminus S)\big]
            \\
        =~& [x]_{P_i} \cap
            c^{-1}\big[[c(x)]_{F\chi_{B'}} \cap
        \semantics{\lambda_{F\chi_S^{B'}(c(x))}}_C(S, B'\setminus S)\big]
            \\
        =~& [x]_{P_i} \cap
            c^{-1}\big[[c(x)]_{F\chi_S^{B'}}\big]
            \tag{domain-specific interpret.~(\autoref{domainSpecificSetIndexed})}
            \\
        =~& [x]_{P_i} \cap
            [x]_{F\chi_S^{B'}\cdot c}
            \\
        =~& [x]_{P_{i+1}}
            \tag{$P_{i+1} = P_i\cap \ker(F\chi_S^{B'}\cdot c$)}
      \end{align*}    
    \end{enumerate}
    This completes the proof.
  \end{proof}
\end{proofappendix}

\begin{proofappendix}[Details for]{exCertMarkov}
  The \autoref{algoCerts} runs in $\CO(m\cdot \log n)$ producing certificates of
  a total size of $\CO(m\cdot \log n)$. When translating these certificates for
  the modalities $\fpair{a}_p$ by the translation $T$,
  we obtain certificates for the input coalgebra (\autoref{domainCertMainThm}).
  However, the formula size
  has a blow up by the additional factor $|A|$ because of the big conjunctions
  in the domain-specific interpretation (\itemref{exDomainSpecInt}{dsiMarkov}).

  This represents is a better run time than that of the
  algorithm by Desharnais \etal~\cite[Fig.~4]{desharnaisEA02}, which nests multiple loops: four
  loops over blocks all blocks seen so far and one loop over $A$, roughly
  leading to a total run time in $\CO(|A|\cdot n^4)$.
\end{proofappendix}

\nopagebreak[4]
\appendixsect{worstcase}{Worst Case Tree Size of Certificates}
\nopagebreak[4]
\takeout{}%

\begin{proofappendix}[Details for]{cleavelandMinimal}
  To verify the minimality of $\varphi = \Diamond^{n+2}\top$,
  one considers all possible replacements of subformulae of $\varphi$ by $\top$:
  \[
    \Diamond\top
    \qquad
    \Diamond\Diamond\top
    \qquad
    \ldots
    \qquad
    \Diamond^n\top
    \qquad
    \Diamond^{n+1}\top
  \]
  All of these hold at both $x$ and $y$, because $x$ can perform arbitrarily
  many transitions and $y$ can perform $n+1$ transitions.
\end{proofappendix}

\noindent We note additionally that even the optimized algorithm for
cancellative functors (cf.~\autoref{algoCertsCancel}) constructs
certificates of exponential worst-case tree size:
\begin{expl}\label{algoExpFormula}
  Define the $\R^{(-)}$-coalgebra $c$ on $C =\bigcup_{k\in \N}\{w_k,x_k,y_k,z_k\}$ by
  \[
    \begin{array}{r@{\,}l@{\,}l@{\,}l@{\,}l@{\qquad\qquad}r@{}c@{\,}c@{\,}r}
      c(w_{k+1}) = & \{w_k\mapsto 1,& x_k\mapsto 2,& y_k\mapsto 1, &z_k\mapsto 2\} & c(w_0) = \{&w_0&\mapsto & 1\}\\
      c(x_{k+1}) = & \{w_k\mapsto 1,& x_k\mapsto 2,& y_k\mapsto 2, &z_k\mapsto 1\} & c(x_0) = \{&x_0&\mapsto &2\}\\
      c(y_{k+1}) = & \{w_k\mapsto 2,& x_k\mapsto 1,& y_k\mapsto 1, &z_k\mapsto 2\} & c(y_0) = \{&y_0&\mapsto &3\}\\
      c(z_{k+1}) = & \{w_k\mapsto 2,& x_k\mapsto 1,& y_k\mapsto 2, &z_k\mapsto 1\} & c(z_0) = \{&z_0&\mapsto &4\}
    \end{array}
  \]
  The optimized \autoref{algoCertsCancel} constructs a certificate of
  size $2^k$ in the $k$-th layer. In this example, however,
  linear-sized certificates do exist for all states,~e.g.
  \[
    \semantics{\grad{2}\grad{3}^k(\grad{1}\top \vee \grad{4}\top)} = \set{x_{k+1}}.
  \]
\end{expl}

\begin{proofappendix}[Details for]{algoExpFormula}
  Define the $\R^{(-)}$-coalgebra $c\colon C\to \R^{(C)}$ on the carrier
  \[
    C := 4\times \N \cong \bigcup\set[\big]{L_k\mid k\in \N} \qquad \text{for
    }L_k = \set{w_k,x_k,y_k,z_k}.
  \]
  We put
  \[
    \begin{array}{r@{\,}l@{\,}l@{\,}l@{\,}l@{\qquad\qquad}r@{}c@{\,}c@{\,}r}
      c(w_{k+1}) = & \{w_k\mapsto 1,& x_k\mapsto 2,& y_k\mapsto 1, &z_k\mapsto 2\} & c(w_0) = \{&w_0&\mapsto & 1\}\\
      c(x_{k+1}) = & \{w_k\mapsto 1,& x_k\mapsto 2,& y_k\mapsto 2, &z_k\mapsto 1\} & c(x_0) = \{&x_0&\mapsto &2\}\\
      c(y_{k+1}) = & \{w_k\mapsto 2,& x_k\mapsto 1,& y_k\mapsto 1, &z_k\mapsto 2\} & c(y_0) = \{&y_0&\mapsto &3\}\\
      c(z_{k+1}) = & \{w_k\mapsto 2,& x_k\mapsto 1,& y_k\mapsto 2, &z_k\mapsto 1\} & c(z_0) = \{&z_0&\mapsto &4\}
    \end{array}
  \]
  For the complexity class of the formulae generated, consider the subcoalgebra
  on $L_0\cup \cdots \cup L_n$.
  
  The initial partition $P_0=\set[\big]{\set{w_0},\set{x_0}, \set{y_0}, \set{z_0},
    L_1\cup\cdots L_n}$ distinguishes on the total out-degree (being 1, 2, 3,
  4, or 6). Consider that after $i\in \N$ iterations of the main loop of the
  algorithm, the states $w_k, x_k, y_k, z_k$ have just been found to be behaviourally
  different and all states of $L_{k+1}\cup \cdots \cup L_{n}$ are still
  identified. Then the algorithm has to use some of the blocks $\set{w_k}$, $\set{x_k}$,
  $\set{y_k}$, $\set{z_k}$ as the splitter $S$ for further refinement. Assume wlog
  that $S:=\{w_k\}$ is used as the splitter, first. This will have the effect that
  $L_{k+1}\cup\cdots\cup L_{n}$ will be refined into the blocks
  \[
    \{w_{k+1},x_{k+1}\},\qquad
    \{y_{k+1},z_{k+1}\},\qquad
    L_{k+2}\cup\cdots\cup L_n.
  \]
  Assume, that the formula for $w_k$ is $\delta(\set{w_k})$ at this point (we
  omit the index, since the singleton block $\set{w_k}$ can not be refined
  further). The definition of $\delta$ in the algorithm annotates the block
  $\{w_{k+1},x_{k+1}\}$ with $\grad{1}\delta(\set{w_k})$
  and the block $\{y_{k+1},z_{k+1}\}$ with $\grad{2}\delta(\set{w_k})$.

  Splitting by $\{x_k\}$ does not lead to further refinement. However, when
  splitting by $S:= \{y_k\}$ (or equivalently $\{z_k\}$), $\set{w_{k+1},
    x_{k+1}}$ is split into $\set{w_{k+1}}$ and $\set{x_{k+1}}$ and likewise
  $\set{y_{k+1}, z_{k+1}}$ into $\set{y_{k+1}}$ and $\set{z_{k+1}}$. Let
  $\delta(\set{y_k})$ be the certificate constructed for $\set{y_k}$. This
  implies that the formulas for $\set{w_{k+1}}$ and $\set{y_{k+1}}$ are
  respectively extended by the conjunct $\grad{1}\delta(\set{y_k})$; likewise,
  the formulas for $\set{x_{k+1}}$ and $\set{z_{k+1}}$ obtain a new conjunct
  $\grad{2}\delta\set{y_k}$. Hence, for every $s\in L_{k+1}$ the tree-size of
  the formula constructed is at least:
  \[
    |\delta(\set{s})| \ge |\delta(\set{w_{k}})| + |\delta(\set{y_{k}})| .
  \]
  Thus the tree-size of the certificate constructed for cancellative functors
  may grow exponentially with the state count.

  Despite the exponential tree-size of the formulas constructed, there exist
  linearly sized certificates for all states in the above coalgebra $(C,c)$. First, we
  have
  \[
    \phi_{k} := \grad{3}^k(\grad{1}\top \vee \grad{4}\top)
    \text{ with }
    \semantics{\phi_{k}} = \set{w_k,z_k}
  \]
  This lets us define certificates for $x_{k+1}$ and $y_{k+1}$:
  \[
    \semantics{\grad{2}\phi_k} = \set{x_{k+1}}
    \quad\text{and}\quad
    \semantics{\grad{4}\phi_k} = \set{y_{k+1}}
  \]
  For the remaining two state sequences $w$ and $z$ we
  first note
  \[
    \semantics{\grad{1}\grad{4}\phi_k} = \set{w_{k+2},y_{k+2}}
  \]
  and thus have certificates
  \[
    \semantics{\phi_{k+2}\wedge \grad{1}\grad{4}\phi_k } = \set{w_{k+2}}
    \quad\text{and}\quad
    \semantics{\phi_{k+2}\wedge \neg\grad{1}\grad{4}\phi_k } = \set{z_{k+2}}.
  \]
  Since $\phi_k$ involves $k+2$ modal operators, every state in $L_k$ has a certificate
  with at most $2\cdot k + 8$ modal operators.
\end{proofappendix}

\end{document}